\newtheorem{theorem}{Theorem}[section]
\newtheorem{lemma}[theorem]{Lemma}
\newtheorem{corollary}[theorem]{Corollary}
\newtheorem{definition}[theorem]{Definition}
\newtheorem{observation}[theorem]{Observation}
\newtheorem{calculation}{Calculation}
\DeclareMathOperator{\supp}{Supp}
\DeclareMathOperator{\dist}{dist}
\newcommand{\ignore}[1]{}
\title{Trading query complexity for sample-based testing and multi-testing scalability}
\author{Eldar Fischer\thanks{Faculty of Computer Science, Israel Institute of Technology (Technion), Haifa, Israel. \mbox{eldar@cs.technion.ac.il}} \and Oded Lachish\thanks{Birkbeck, University of London, London, UK. \mbox{oded@dcs.bbk.ac.uk}}
\and Yadu Vasudev\thanks{Faculty of Computer Science, Israel Institute of Technology, Haifa, Israel. \mbox{yaduvasudev@gmail.com}}}
\begin{document}
\maketitle
\begin{abstract}
We show here that every non-adaptive property testing algorithm making a constant number of queries, over a fixed alphabet, can be converted to a sample-based (as per [Goldreich and Ron, 2015]) testing algorithm whose average number of queries is a fixed, smaller than $1$, power of $n$. Since the query distribution of the sample-based algorithm is not dependent at all on the property, or the original algorithm, this has many implications in scenarios where there are many properties that need to be tested for concurrently, such as testing (relatively large) unions of properties, or converting a Merlin-Arthur Proximity proof (as per [Gur and Rothblum, 2013]) to a proper testing algorithm.

The proof method involves preparing the original testing algorithm for a combinatorial analysis, which in turn involves a new result about the existence of combinatorial structures (essentially generalized sunflowers) that allow the sample-based tester to replace the original constant query complexity tester.
\end{abstract}

\section{Introduction}\label{sec:intro}
A {\em test} for a property $L\subseteq \Xi^n$ (where $\Xi$ is a fixed alphabet), with proximity parameter $\epsilon$, is an algorithm that queries an input $w\in\Xi^n$ in a limited number of places, and distinguishes with high probability between the case that $w\in L$ and the case that no $w'\in L$ is $\epsilon$-close to $w$ in the normalized Hamming distance. A {\em non-adaptive} test is a test that decides its queries in advance of receiving the corresponding input values, which basically means that its queries are governed by a single distribution $\mu$ over the power set of $[n]$.

Given a family of properties $\mathcal F \subseteq \{L:L\subseteq \Xi^n\}$, we say that there is a {\em canonical} testing scheme for $\mathcal F$ if there are non-adaptive tests (with the same parameter $\epsilon$) for all members $L\in\mathcal F$, which additionally all share the same query probability distribution $\mu$.

This concept has been defined and used before. The most well-known example is that of \cite{ThreeTheorems}, where the family of all properties of dense graphs (as per the model defined in \cite{GGR}) with $n$ vertices that are testable (non-adaptively or not) with up to $q$ queries, is shown to have a canonical testing scheme, where the common query distribution consists of uniformly picking a set of $q$ vertices and querying all $\binom{q}{2}$ vertex pairs.

Note that being in the dense graph model in essence restricts the admissible properties. Under this model, an input $w\in\{0,1\}^{\binom{n}{2}}$ is interpreted as the adjacency matrix of a graph with $n$ vertices, and a property $L$ is admissible if it is invariant under all input transformations corresponding to re-labeling the graph vertices (i.e., the transformations corresponding to graph isomorphisms).

There are other examples. For example, given a finite field $F$, for properties of functions over a linear space over $F$ that are known to be invariant under linear transformations, a canonical testing scheme would consist of querying the function over an entire small dimensional subspace picked uniformly at random \cite{LinearInv}.

A natural question is what would be a candidate for an ``ultimate'' canonical scheme, where there are no structural impositions on the property at all. One would expect here a query distribution that is completely symmetric with respect to any permutation of the index set $[n]$. Indeed, such a scheme is defined as {\em sample-based} testing in \cite{SampleBased}. The sampled-based distribution $\mu_p$ corresponds to choosing every index $i\in [n]$ to be queried independently with probability $p$. Usually $p$ will be $n^{-\alpha}$ for some $1>\alpha>0$. It is a folly to expect a sample-based testing scheme with significantly fewer queries, even for properties with a constant bound on the number of queries for a test, as evidenced already in \cite[Proposition 6.9]{GGR}.

In \cite{SampleBased} a connection between {\em proximity-oblivious} testers (POT) as defined in \cite{ProximityOblivious} and the sample-based querying scheme was suggested. Proximity oblivious testers are non-adaptive testing algorithms whose querying distribution is the same for any proximity parameter $\epsilon$, where instead the distinguishing probability between inputs in $L$ and inputs $\epsilon$-far from $L$ changes with $\epsilon$. The work \cite{SampleBased} showed that for such testers that additionally have the property that all indexes get queried with about the same probability (but not necessarily in an independent manner), there exists a conversion to sample based testers with $p=O(n^{-1/q})$, where the coefficient depends on the distinguishing probability, and the parameter measuring the above-mentioned ``probability sameness'' of the original test.

In \cite{fischer2014partial} it is shown that all $1$-sided proximity-oblivious testers over the alphabet $\{0,1\}$ are convertible to the canonical sample-based scheme, where $p=n^{-\alpha}$ with $\alpha$ depending (somewhat badly) on $q$, $\epsilon$ and the distinguishing probability $\delta$. In \cite{SampleBased} there is an example of a testable property that has no sublinear query complexity sample-based test at all, but it works only over an alphabet whose size is exponential in $n$, and so does not contradict the result of \cite{fischer2014partial}.

Here we take the investigation much further, and prove the following.

\begin{theorem}[informal statement of our main result]\label{thm:informal}
Every property of words in $\Xi^n$ that has a non-adaptive $\epsilon$-test with $q$ queries and detection probability $\delta$ (either $1$-sided or $2$-sided) admits a test using the sample-based canonical querying scheme, where the distribution $\mu_p$ has $p=O(n^{-\alpha})$, with $\alpha$ depending on $q$, $\delta$, and for $2$-sided testing also on $|\Xi|$ and $\epsilon$, and the hidden coefficient depending on $q$, $\delta$, $\epsilon$ and $|\Xi|$.
\end{theorem}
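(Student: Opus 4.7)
The plan is to first normalize the given non-adaptive tester into a canonical form amenable to combinatorial analysis, then to identify ``sunflower-like'' combinatorial structures in its space of rejecting query sets, and finally to show that these structures allow the canonical sample-based scheme to replace the original tester.

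First I would normalize the original tester. Its querying distribution $\mu$ is a distribution over $q$-subsets of $[n]$; by symmetrizing over a uniformly random permutation of $[n]$ (and redefining the decision function accordingly) I can reduce to a tester in which every index carries marginal probability $\Theta(q/n)$ of being queried, and in which $\mu$ itself is symmetric enough to be useful. This preprocessing is transparent to the sample-based scheme, since that scheme is already permutation-invariant. I would then decompose $\mu$ as a mixture over abstract ``query patterns'': each pattern names a formal $q$-tuple of positions, which is then embedded into $[n]$. For each pattern, analysis reduces to studying the set of value tuples in $\Xi^q$ on which the tester rejects, so the problem becomes purely combinatorial.

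Second, the combinatorial heart of the proof should be a sunflower-type statement: any family of $q$-subsets of $[n]$ of total $\mu$-mass at least $\delta$ contains a sunflower with a small common core and many pairwise disjoint petals. A ``colored'' version, where each set is annotated with the $\Xi$-labeled observations driving the rejection, should yield sunflowers in which all petals carry the same rejecting pattern, at the price of a $|\Xi|^q$ factor loss. The classical Erd\H{o}s--Ko--Rado sunflower lemma is the natural starting point, although the newer polynomially-improved bounds may be needed for quantitatively sharp control on $\alpha$. The sample-based algorithm then works as follows: sample $S\subseteq[n]$ by including each index independently with probability $p=n^{-\alpha}$, read $w|_S$, and enumerate over the constantly many possible sunflower cores and petal patterns, accepting iff no sunflower core together with a rejecting petal is contained in $S$. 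Since the petals are pairwise disjoint, the probability of hitting at least one among $k$ petals of size at most $q$ is $1-\exp(-\Omega(k p^{q-|\mathrm{core}|}))$, which tends to $1$ once $\alpha$ is small enough as a function of $q$ and $\delta$.

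The most delicate step, and what I expect to be the main obstacle, is the $2$-sided case, where there is no single ``rejecting witness'': inputs in $L$ and inputs $\epsilon$-far from $L$ differ only in the \emph{expected} rejection probability of the tester, so the sample-based algorithm must approximate this expectation to within roughly $\delta/3$. This requires a richer family of generalized sunflowers, indexed over pattern--value classes, such that the sample simultaneously hits enough petals in each class for a Chernoff-style estimate of the rejection probability to be valid uniformly in $w$. The additional dependence of $\alpha$ on $|\Xi|$ and $\epsilon$ in Theorem~\ref{thm:informal} reflects exactly this need for a union bound over $|\Xi|^q$-many value-classes at accuracy determined by $\delta$ and $\epsilon$, together with the more refined preparation step that is needed to guarantee that enough sunflowers of each color actually exist.
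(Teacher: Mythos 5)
There are two genuine gaps here, both at exactly the points where the paper has to work hardest. First, your opening normalization is not available: averaging the tester over a uniformly random permutation of $[n]$ changes the property being tested (a general $L\subseteq\Xi^n$ is not permutation-invariant; for $L=\{w:w_1=0\}$ any good tester must query index $1$ with probability close to $1$), and ``redefining the decision function'' cannot compensate, since the decision function only sees queried values in the original coordinates --- which is also where the sample-based scheme lives. Equalized marginals is precisely the extra hypothesis of the Goldreich--Ron conversion that this theorem is meant to remove, so enforcing it begs the question. The paper's actual normalization is different: amplify and then subsample the support of $\mu$ down to $O(n)$ query sets (Lemma \ref{lem:effective1sided}), so that $\supp(\mu)$ becomes a linear-size $q$-uniform hypergraph, with no assumption whatsoever on marginals.

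Second, and more seriously, your sample-based algorithm rejects only when a ``sunflower core together with a rejecting petal is contained in $S$.'' The core is a concrete subset of $[n]$ of size up to $q-1$, and $\Pr[\mathrm{core}\subseteq S]=p^{|\mathrm{core}|}=n^{-\alpha|\mathrm{core}|}=o(1)$, so the sample essentially never sees the core values and the algorithm accepts almost always. This is exactly the barrier the paper isolates (e.g.\ all sets in $\supp(\mu)$ sharing one common index). The missing idea is to \emph{never read the core}: one finds a single common core $C$ (possibly far larger than any pairwise intersection, hence ``pompoms'' rather than sunflowers) and, for \emph{every} assignment $\sigma\in\Xi^{|C|}$, a large pompom of witnesses against $w_{\sigma,C}$; the sample then only needs to hit one petal from each of these $|\Xi|^{|C|}$ pompoms, which rules out all possible values of $w_C$ at once (the super-witness of Observation \ref{obs:SuperWitness}). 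Making $|C|$ small enough that $|\Xi|^{|C|}$ survives the union bound, while keeping all pompoms on one common core, is what the constellation construction of Definition \ref{def:SCM} provides and what the Erd\H{o}s--Rado lemma does not. Two further points: in the $2$-sided case your Chernoff estimate of the rejection probability over hit petals requires the query distribution to be uniform over its support --- otherwise counting petals does not track probability mass --- which is the entire purpose of the paper's combinatorialization step; and the decomposition into ``value-classes'' with a $|\Xi|^q$ loss does not appear in (or suffice for) the paper's argument, where the relevant enumeration is over the $|\Xi|^{|C|}$ core assignments with $|C|$ polynomial in $n$, not over $\Xi^q$.
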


We prove this separately for $1$-sided tests and $2$-sided tests. For $2$-sided tests we go further and prove the result for {\em partial tests}, that are only guaranteed to accept inputs in some sub-property $L'$ with high probability, a generalization whose relevance is explained below.

For both $1$-sided testing and $2$-sided (possibly partial) testing we obtain a very improved bound on $\alpha$ as compared to the $1$-sided testing result of \cite{fischer2014partial}. Additionally, the dependency of the coefficient on $|\Xi|$ is logarithmic, while for the $2$-sided test the additional dependency of $\alpha$ on $|\Xi|$ is of type $\log\log\log(|\Xi|)$. This shows that the exponential size of the alphabet in the counter-example in \cite{SampleBased} is essential.

We believe that the ``correct'' $\alpha$ should be just $-1/q$, at least for converting $1$-sided tests, but cannot prove it yet.

\subsection{Implications for multitests}

There are several motivations for finding canonical testing schemes. One of them is for proving lower bounds, which may be easier when the querying distribution is ``simple'' and known. Here we would like to highlight another one, which also played an implicit role in the original motivation of \cite{fischer2014partial}.

Given a sequence of properties $L_1,\ldots,L_r$, a {\em multitest} for them is an algorithm that makes queries to a word $w\in\Xi^n$, and provides a sequence of answers. With probability at least $1-\delta$, the answers should be correct for {\em all} the properties, that is, for every $k$ such that $w\in L_k$ the corresponding answer should be ``yes'', and for every $l$ such that $w$ is $\epsilon$-far from $L_k$ the corresponding answer should be ``no''.

If we know nothing else about the properties apart from that they are all testable using $q$ queries each, then the scalability of a test to a multitest would be quasilinear: We first take a test for every $L_i$, and amplify its success probability to $1-\delta/r$ (which multiplies the number of queries by $O(\log r)$). Then we just run these $r$ tests one after the other, and use the union bound for the total success probability, all in all using $O(q\cdot r\log r)$ queries. This is not always good enough, as in some applications $r$ can depend on $n$, and may even be greater than $n$ (say through a polynomial dependency).

However, the situation changes dramatically if we know all properties to share a canonical testing scheme with $q'$ queries (where $q'$ could depend on $n$). In this case, we can re-use the same queries for all $r$ (amplified) tests, and the union bound will still work. This brings us to using only $O(q'\cdot \log r)$ queries in all. This scalability can have many implications.

In \cite{fischer2014partial}, multitests are implicitly used for testing unions of properties. This in turn allows to convert in certain cases tests requiring proofs as per the $\mathcal{MAP}$ scenario (defined in \cite{MAP} and also developed in \cite{fischer2014partial}) to tests that still have a sublinear query complexity but do not require such proofs. In this setting we deploy the generalization of our result to partial testing, as a $\mathcal{MAP}$ scenario converts to a union of partial testing problems.

Another scenario aided by a multitest is if one wants to store the results for $w$ belonging (approximately) to a rather large set of possible properties. If the properties share a canonical testing scheme, and the corresponding property tests also admit a not too large computation time overhead, then it may be worthwhile to store instead the common set of queries performed by the multitest, because this query set increases rather slowly with $r$.

Finally, a canonical testing scheme also allows for some measure of privacy: Suppose that one wants to test a particular property of $w\in\Xi^n$, but wants to hide from the ``input holder'' the identity of the particular property to be tested. By using the canonical scheme, no one but the party performing the test can discern which of the properties having the canonical scheme is being tested for.

\subsection{Methods used}

The crucial analysis used for converting a test with $q$ queries to a sample-based test is of a combinatorial nature. We take the support of the query distribution of a non-adaptive test, and analyze it as a family of query sets, essentially a $q$-uniform hypergraph whose vertex set is the domain of possible queries.

For $1$-sided testing algorithms, since they reduce to checking whether the set of queries is a witness refuting the possibility of the input belonging to the property, the support of the distribution provides most of the information we need. We can assume (through a simple processing of the original test) that the number of possible query sets is linear in the domain size $n$. Finding large ``matchings'' (families of disjoint sets) of refuting witnesses would be ideal for sample based testing, but since we make no assumptions on our family of sets outside its size, these do not always exist.

The next option that could be explored is finding large sunflowers as defined in \cite{sunflower}. This is the approach taken by \cite{fischer2014partial}, and it can be generalized to the setting here. However, the obtained $\alpha$ value for the $n^{-\alpha}$ sampling test would depend very badly on the other parameters, because sunflowers require processing in several stages.

Here we present a generalization of sunflowers, which we call {\em pompoms}. The main difference is that the ``core'' common to the participating sets is not their intersection as in sunflowers, and in fact could be much larger -- the only requirement is that the participating sets are disjoint outside the core. The support of the query distribution is shown to admit pompoms larger than the sunflowers it would admit, and moreover ones that all share the same core, so we can rule out the possible inputs in just one processing round.

For $2$-sided testing we use pompoms to explicitly estimate what some portions of the original test would say for a particular input. Because of the estimation requirements, we need an additional requirement of the $2$-sided test -- it has to be {\em combinatorial}, in the sense that its query distribution is uniform over the family of possible query sets. Much work is needed to fully convert a general $2$-sided test to a combinatorial one that can be analyzed as a hypergraph, and this introduces some extra dependency on the alphabet size. To aid with the analysis of the $2$-sided tests, a formalism of {\em probabilistic formulas} is introduced. The combinatorialization shown here has potential for future uses, as it generalizes to promise problems apart from property testing -- it is enough to have at least one ``yes'' instance and one ``robust no'' instance.

The pompom families used both in the $1$-sided and the $2$-sided sample-based testing result both stem from a general structural result about the query distributions, that finds a {\em constellation} in the underlying support. This in turn readily allows for the extraction of pompoms.

\section{Preliminaries}\label{sec:prelim}

\subsection{Large deviation bounds}

The following is useful for the analysis of sample based testing.
\begin{lemma}[multiplicative Chernoff bounds]\label{lem:chernoff}
Let $X_1,\dots,X_m$ be i.i.d $0$-$1$ random variables such that $\Pr[X_i=1]=p$. Let $X=\sum_{i=1}^m X_i$. For any $\gamma\in(0,1]$,
$$\Pr\left[ X>(1+\gamma) pm \right] < \exp\left( -\gamma^2 pm/3 \right)$$
$$\Pr\left[ X<(1-\gamma) pm \right] < \exp\left( -\gamma^2 pm/2 \right)$$
\end{lemma}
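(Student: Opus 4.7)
The plan is to apply the standard Chernoff-style exponential moment argument, for which I would proceed in three stages. First, I would use Markov's inequality on the random variable $e^{tX}$ with a free parameter $t$: for the upper tail I take $t>0$ and write
\[
\Pr[X > (1+\gamma)pm] \;=\; \Pr[e^{tX} > e^{t(1+\gamma)pm}] \;\le\; e^{-t(1+\gamma)pm}\,\E[e^{tX}],
\]
and for the lower tail I take $t<0$ and do the analogous step.

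Second, I would exploit the independence of the $X_i$ to factor $\E[e^{tX}] = \prod_{i=1}^m \E[e^{tX_i}]$, and then use the single-variable identity $\E[e^{tX_i}] = 1 + p(e^t-1) \le \exp(p(e^t-1))$, which holds because $1+x\le e^x$. This collapses the full bound to $\E[e^{tX}] \le \exp(pm(e^t-1))$, so after combining with Markov I obtain
\[
\Pr[X > (1+\gamma)pm] \;\le\; \exp\bigl(pm(e^t - 1 - t(1+\gamma))\bigr),
\]
and a symmetric expression for the lower tail.

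Third, I would optimize in $t$: for the upper tail the minimizer is $t=\ln(1+\gamma)$, yielding $\exp(pm(\gamma - (1+\gamma)\ln(1+\gamma)))$, and for the lower tail the minimizer is $t=\ln(1-\gamma)$, yielding $\exp(pm(-\gamma - (1-\gamma)\ln(1-\gamma)))$. The main (and only) obstacle is then the pair of one-dimensional calculus inequalities valid on $\gamma \in (0,1]$, namely $(1+\gamma)\ln(1+\gamma) - \gamma \ge \gamma^2/3$ and $(1-\gamma)\ln(1-\gamma) + \gamma \ge \gamma^2/2$. I would verify each by expanding the logarithm as a Taylor series and grouping terms, or equivalently by differentiating the difference between the two sides and checking monotonicity from $\gamma=0$. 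Substituting these inequalities into the exponents delivers the two desired tail bounds $\exp(-\gamma^2 pm/3)$ and $\exp(-\gamma^2 pm/2)$ respectively.
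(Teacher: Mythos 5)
The paper states this lemma without proof, treating it as a standard fact, so there is no in-paper argument to compare against. Your proposal is the standard exponential-moment (MGF) derivation and is correct: the Markov/independence/optimization steps are routine, and the two calculus inequalities $(1+\gamma)\ln(1+\gamma)-\gamma\ge\gamma^2/3$ and $(1-\gamma)\ln(1-\gamma)+\gamma\ge\gamma^2/2$ do hold on $(0,1]$ (both are easily checked by differentiating the difference and noting it vanishes at $\gamma=0$), which yields exactly the stated bounds.
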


\begin{lemma}[Hoeffding bounds,\cite{Hoeffding63}]\label{lem:hoeff-bound}
Let $Y_1,\cdots,Y_m$ be independent random variables such that $0\leq Y_i\leq 1$, for $i=1,\cdots,m$ and let $\eta=\mathrm E\left[\sum_{i=1}^m Y_i/m\right]$. Then,
$$\Pr\left[ \left|\frac{\sum_{i=1}^m Y_i}{m}-\eta\right| \geq t \right]
\leq 2\exp(-2mt^2)$$
\end{lemma}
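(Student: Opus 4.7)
The plan is to prove this via the standard Chernoff-style exponential moment technique, together with the key auxiliary bound on the moment generating function of a bounded random variable (often called Hoeffding's lemma). First I would reduce to the one-sided tail: by the union bound it suffices to show
$$\Pr\left[\frac{1}{m}\sum_{i=1}^m Y_i - \eta \geq t\right] \leq \exp(-2mt^2),$$
since the other tail follows by applying the same bound to the variables $1-Y_i$ (which also lie in $[0,1]$ and have mean $1-\mathrm{E}[Y_i]$).

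For the one-sided tail, for every $s>0$ I would write, using Markov's inequality applied to the exponential,
$$\Pr\left[\sum_{i=1}^m (Y_i - \mathrm{E}[Y_i]) \geq mt\right] \leq e^{-smt}\cdot \mathrm{E}\!\left[\exp\!\left(s\sum_{i=1}^m (Y_i - \mathrm{E}[Y_i])\right)\right] = e^{-smt}\prod_{i=1}^m \mathrm{E}\!\left[e^{s(Y_i-\mathrm{E}[Y_i])}\right],$$
where the last equality uses independence. The problem thus reduces to controlling each factor $\mathrm{E}[e^{s(Y_i-\mathrm{E}[Y_i])}]$.

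The main step, and the only genuinely nontrivial ingredient, is the auxiliary lemma that for any random variable $Y$ with $0\leq Y\leq 1$ and mean $\mu=\mathrm{E}[Y]$, one has $\mathrm{E}[e^{s(Y-\mu)}] \leq e^{s^2/8}$ for every real $s$. I would prove this by exploiting convexity of $y\mapsto e^{sy}$ on $[0,1]$: one can write $e^{sy} \leq y e^{s} + (1-y)$ for $y\in[0,1]$, take expectations to get $\mathrm{E}[e^{sY}]\leq \mu e^s + (1-\mu)$, and then define $\varphi(s)=\log(\mu e^s + (1-\mu)) - s\mu$, so that $\mathrm{E}[e^{s(Y-\mu)}] \leq e^{\varphi(s)}$. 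A direct computation shows $\varphi(0)=\varphi'(0)=0$ and $\varphi''(s) = p(s)(1-p(s))$ where $p(s)=\mu e^s/(\mu e^s + 1-\mu)\in[0,1]$, hence $\varphi''(s)\leq 1/4$. A Taylor expansion with remainder then gives $\varphi(s)\leq s^2/8$, which is the desired bound.

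Plugging this into the product yields
$$\Pr\left[\sum_{i=1}^m (Y_i - \mathrm{E}[Y_i]) \geq mt\right] \leq \exp\!\left(-smt + ms^2/8\right),$$
and optimizing by choosing $s=4t$ produces the upper bound $\exp(-2mt^2)$. Combining with the symmetric lower-tail estimate and the union bound yields the factor of $2$ in the statement. The main obstacle, as noted, is establishing the moment generating function bound $\mathrm{E}[e^{s(Y-\mu)}]\leq e^{s^2/8}$ with the sharp constant $1/8$; everything else is a routine Chernoff-style manipulation.
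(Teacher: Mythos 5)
Your proof is correct and is exactly the classical argument: the paper gives no proof of this lemma, citing it directly to Hoeffding's 1963 paper, and your derivation (Chernoff's exponential-moment method combined with the moment generating function bound $\mathrm{E}[e^{s(Y-\mu)}]\leq e^{s^2/8}$ proved via convexity and the second-derivative estimate $\varphi''\leq 1/4$) is precisely the argument in that reference. All steps check out, including the optimization at $s=4t$ and the union bound giving the factor of $2$.
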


\begin{lemma}[without replacement,\cite{Hoeffding63}]\label{lem:replacement}
Let $X_1,\ldots,X_m$ be random variables picked uniformly without repetition from the sequence $\mathcal{C} = (\gamma_1,\ldots,\gamma_m)$ where $0\leq\gamma_i\leq 1$ (this means that $i_1,\ldots,i_m$ are picked uniformly without repetition from $[m]$, and then every $X_j$ is set to $\gamma_{i_j}$; it may be that some $\gamma_i$ is equal to another). Let $Y_1,\cdots,Y_m$ be independent random variables picked with repetition from $\mathcal{C}$ (i.e. every $k_j$ is uniformly and independently chosen from $[m]$ and then $Y_j$ is set to $\gamma_{k_j}$). Then the conclusion of Lemma \ref{lem:hoeff-bound} for $Y_1,\ldots,Y_m$ holds also for $X_1,\ldots,X_m$, that is,
$$\Pr\left[ \left|\frac{\sum_{i=1}^m X_i}{m}-\eta\right| \geq t \right]
\leq 2\exp(-2mt^2)$$
where $\eta=\frac1m\sum_{i=1}^m\gamma_i$.
\end{lemma}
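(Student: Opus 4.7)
My plan is to derive the bound by reducing the without-replacement sum to its with-replacement counterpart, where Lemma~\ref{lem:hoeff-bound} already gives the desired concentration. Observe first that in the exact form stated — sample size equal to sequence length — the draw is a uniformly random permutation of $(\gamma_1,\dots,\gamma_m)$, so $\sum_j X_j = \sum_i \gamma_i = m\eta$ with probability one, the left-hand probability is zero and the conclusion is trivial. I therefore describe the plan in the natural generalization, where a sub-sample of size $k\le m$ is drawn without repetition; this is both the substantive case and the one that one would actually need in the applications to sample-based testing later in the paper.

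The central technical step is Hoeffding's convex-domination inequality: for every convex $\phi:\mathbb{R}\to\mathbb{R}$,
\[
\E\!\left[\phi\!\left(\sum_{j=1}^k X_j\right)\right]\;\le\;\E\!\left[\phi\!\left(\sum_{j=1}^k Y_j\right)\right].
\]
I would follow Hoeffding's original argument from \cite{Hoeffding63}: partition the with-replacement outcomes by the multiset of indices they select. On the event that this multiset contains no repetitions, the two sums have identical conditional laws, while on the remaining events one uses Jensen's inequality to average over positional permutations within a fixed multiset, followed by an induction on the number of repeated indices that collapses the expectation over repetitions down to the expectation restricted to distinct indices, absorbing a non-negative slack into the bound.

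Given the convex domination, specializing to $\phi(x) = e^{sx}$ yields $\E[e^{s\sum X_j}] \le \E[e^{s\sum Y_j}]$. The right-hand side factors as a product of MGFs of independent $[0,1]$-valued variables, each controlled by Hoeffding's lemma by $\exp(s\eta + s^2/8)$, giving the overall bound $\exp(sk\eta + ks^2/8)$. Markov's inequality at the optimal choice $s = 4t$ yields the upper-tail bound $\exp(-2kt^2)$, and the symmetric argument gives the lower tail, producing the two-sided $2\exp(-2kt^2)$ bound in the statement. The principal obstacle is the convex-domination step: while the intuition that without-replacement sampling is at least as concentrated as with-replacement is classical, making it rigorous requires the careful Jensen averaging on each level set of the multiset of drawn indices described above. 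Once that is in place, the remainder of the argument is mechanical and transcribes essentially verbatim from the proof of Lemma~\ref{lem:hoeff-bound}.
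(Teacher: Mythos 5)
The paper does not prove this lemma itself but cites it directly from \cite{Hoeffding63}, and your plan --- convex domination of the without-replacement sum by the with-replacement sum, followed by the standard exponential-moment and Markov argument with $s=4t$ --- is precisely Hoeffding's own Theorem~4 argument, so it is correct and matches the intended source. Your observation that the statement as literally written (drawing all $m$ items without repetition, so that $\sum_j X_j = m\eta$ deterministically) is degenerate is also accurate; the version actually invoked in the proof of Lemma~\ref{lem:dev} is the sub-sample version with $k\le m$ draws, which is exactly the case your argument handles.
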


\begin{lemma}[large deviation bound]\label{lem:dev}
Denote by $\mu_p$ the distribution over subsets of $[m]$, where every $i\in [m]$ is picked into the subset with probability exactly $p$, independently from all other $j\neq i$.
Suppose that $\gamma_1,\ldots,\gamma_m$ are all values in $[0,1]$, and let $U\subseteq [m]$ be chosen according to $\mu_p$, where $p\geq 10c/\eta^2 m$ and $c>1$. Then, with probability at least $1-e^{-c}$, the value $(\sum_{i\in U}\gamma_i)/|U|$ (where we arbitrarily set it to $\frac12$ if $U=\emptyset$) is in the range $(\sum_{i=1}^m\gamma_i)/m\pm\eta$.
\end{lemma}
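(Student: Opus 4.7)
The plan is to decouple the two sources of randomness in $U$: its size $|U|$ and, conditionally on that size, its contents. Note that $|U|$ is distributed as $\mathrm{Bin}(m,p)$, and conditional on $|U|=k$ the set $U$ is uniformly distributed among the $k$-subsets of $[m]$. Hence, given $|U|=k$, the tuple $(\gamma_i)_{i\in U}$ is exactly a sample of size $k$ drawn uniformly without replacement from the list $(\gamma_1,\ldots,\gamma_m)$, so Lemma~\ref{lem:replacement} applies directly to the conditional distribution of $(\sum_{i\in U}\gamma_i)/|U|$ with $\eta$ playing the role of $t$ there. (Assume WLOG $\eta\le 1$, since otherwise the $\pm\eta$ range contains $[0,1]$ and the statement is trivial.)

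First I would apply the multiplicative Chernoff bound of Lemma~\ref{lem:chernoff} with $\gamma=1/2$ to $|U|$, obtaining $\Pr[|U|<pm/2]\le \exp(-pm/8)$. Since the hypothesis gives $pm\ge 10c/\eta^{2}\ge 10c$, this probability is at most $\exp(-5c/4)$; in particular $U\ne\emptyset$ with very high probability, which disposes of the degenerate case in the lemma statement. Next, for every fixed integer $k\ge pm/2$, Lemma~\ref{lem:replacement} with $t=\eta$ bounds the conditional probability of $|(\sum_{i\in U}\gamma_i)/k-\bar\gamma|\ge \eta$ given $|U|=k$ by $2\exp(-2k\eta^{2})\le 2\exp(-pm\eta^{2})\le 2\exp(-10c)$, where $\bar\gamma=\frac1m\sum_i\gamma_i$.

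Combining the two estimates by total probability, the overall failure probability is at most $\exp(-5c/4)+2\exp(-10c)=e^{-c}\bigl(e^{-c/4}+2e^{-9c}\bigr)$, and for $c>1$ the bracketed factor is at most $e^{-1/4}+2e^{-9}<1$, yielding the claimed success probability of at least $1-e^{-c}$.

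There is no real conceptual obstacle beyond bookkeeping; the only delicate point is checking that the constant $10$ in the hypothesis $p\ge 10c/(\eta^{2}m)$ is large enough to absorb simultaneously the Chernoff term (which scales like $\exp(-pm)$ and therefore requires $pm\gtrsim c$) and the Hoeffding-without-replacement term (which scales like $\exp(-k\eta^{2})$ and therefore requires $pm\eta^{2}\gtrsim c$) inside a single clean $e^{-c}$ bound, which is why the Chernoff half needs to be invoked with the crude deviation $\gamma=1/2$ rather than something $\eta$-dependent.
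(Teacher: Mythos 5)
Your proposal is correct and follows essentially the same route as the paper: condition on $|U|=k$, bound $\Pr[|U|<pm/2]$ via the multiplicative Chernoff bound with deviation $1/2$, apply the without-replacement Hoeffding bound (Lemma~\ref{lem:replacement}) with $t=\eta$ for each $k\geq pm/2$, and combine by a union/total-probability bound. If anything, your explicit accounting $e^{-5c/4}+2e^{-10c}=e^{-c}\bigl(e^{-c/4}+2e^{-9c}\bigr)<e^{-c}$ for $c>1$ is slightly more careful than the paper's claim that each term is at most $e^{-c}/2$.
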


\begin{proof}
If $U$ is picked according to $\mu_p$, then $\mathrm E[|U|]=pm$. By the multiplicative Chernoff bound in Lemma \ref{lem:chernoff} we have the following bound on the probability of the size of $U$ being small:
$$\Pr\left[|U|\leq\frac{pm}{2}\right]\leq \exp(-pm/8)\le e^{-c}/2.$$

We continue our analysis conditioned on the event that the size of $U$ is at least $pm/2$. For every $k\geq pm/2$, let us analyze separately the deviation of the value $\left( \sum_{i\in U}\gamma_i \right)/|U|$ conditioned on $|U|=k$. Lemma \ref{lem:replacement} holds for this case, stating that the probability of $\left|\frac1k\sum_{i=1}^k X_i-\frac1m\sum_{i=1}^m \gamma_i\right|$ being greater than $\eta$ is bounded by $e^{-c}/2$. Hence, for $U$ picked according to $\mu_p$, the probability of $\sum_{i\in U}\gamma_i/|U|$ being outside the range $\sum_{i\in m}\gamma_i/m\pm \eta$ is at most $e^{-c}$, by the union bound on the event of $|U|<pm/2$, and the event of $\left|\frac1k\sum_{i=1}^k X_i-\frac1m\sum_{i=1}^m \gamma_i\right|>\eta$ while $|U|\geq pm/2$.
\end{proof}

\subsection{Words and distributions}

\paragraph{Notation for words}
Let $\Xi$ be an alphabet, and let $w \in \Xi^n$, $i \in [n]$ and $Q\subseteq [n]$. We use $w_i$ to denote the $i$'th letter of $w$ and $w_Q$ to denote the word $v \in \Xi^{|Q|}$ such that, for every $j\in [|Q|]$, $v_j=w_{Q(j)}$, where $Q(j)$ is the $j$'th smallest member of $Q$.
Let $C\subseteq [n]$ and $\sigma$ be a word in $\Xi^{|C|}$.
We denote by $w_{\sigma,C}$ the word that we get by taking $w$ and replacing its sub-word $w_{C}$ with $\sigma$.

\begin{definition}[word distances]
Two words $w,v\in\Xi^n$ are said to be {\em $\epsilon$-far} if there is no $A$ of size at most $\epsilon n$ for which $w_{[n]\setminus A}=v_{[n]\setminus A}$ (in other words, we use the normalized Hamming distance). Otherwise these words are said to be {\em $\epsilon$-close}. Given a property $L\subseteq \Xi^n$, a word $w$ is said to be {\em $\epsilon$-close} to $L$ if there exists an $\epsilon$-close word $v$ which is in $L$, and otherwise $w$ is said to be {\em $\epsilon$-far} from $L$.
\end{definition}

\paragraph{Notation for distributions}
We deal with distributions $\mu$ over subsets of $[n]$. For $A\subseteq [n]$ we denote by $\mu(A)$ the probability of $A$ being drawn by $\mu$. For a non-empty event, that is a family of sets $\emptyset\neq\mathcal A\subseteq 2^{[n]}$, we abuse notation somewhat and denote $\mu(\mathcal A)=\sum_{A\in\mathcal A}\mu(A)$. We denote by $\supp(\mu)$ the family of positive probability outcomes $\{A\subseteq [n]:\mu(A)>0\}$, and for two distributions $\mu$ and $\mu'$ denote by $\dist(\mu,\mu')$ the variation distance $\frac12\sum_{A\subseteq [n]}|\mu(A)-\mu'(A)|=\max_{\mathcal A\subseteq 2^{[n]}}|\mu(\mathcal A)-\mu'(\mathcal A)|$.

\subsection{Property Testing}

We start this subsection by defining tests. We define partial tests (of which tests are a special case), because we would like our main result to also have applications in the realm of $\mathcal{MAP}$s as defined in \cite{MAP}.

\begin{definition}[$(\epsilon,\delta,q)$-test]
Given two properties $L'\subseteq L\subseteq \Xi^n$,
a partial {\em $(\epsilon,\delta,q)$-test} for $(L',L)$ is a
  randomized algorithm $\mathcal{A}$ that, given 
  query access to the input $w$, uses $q$ queries and satisfies the following:
  \begin{enumerate}
    \item If $w\in L'$, 
      then $\Pr\left[ \mathcal{A}(w)=1 \right] \ge 1-\delta$.
    \item If $w$ is $\epsilon$-far from $L$, 
      then $\Pr\left[ \mathcal{A}(w)=0 \right] \ge 1-\delta$.
  \end{enumerate}
The test is {\em $1$-sided} if, when $w\in L'$, the output is always $1$, and otherwise it is {\em $2$-sided}. If the choice of every query is independent of the answers to the previous queries, then the test is {\em non-adaptive}, and otherwise it is {\em adaptive}.

In the case where $L'=L$, we call it a {\em $(\epsilon,\delta,q)$-test} for $L$.
\end{definition}

We remark that, in the case of a non-adaptive test, we may assume that the set of queries is selected before any query is made.
So, a non-adaptive test can be viewed as consisting of three steps: 
(i) a set of queries $Q$ is randomly selected according to a distribution over $2^{[n]}$; 
(ii) 
the sub-word $w_Q$ is queried;
(iii) the output is computed according to $(Q,w_Q)$.

We note that a $1$-sided test can reject only if $(Q,w_Q)$ constitutes a proof that $w$ is not in the property.
This occurs if and only if $Q$ is a {\em witness against $w$} as defined next.

\begin{definition}[witness against a word]
A set $Q\subseteq [n]$ is a {\em witness} against a word $w\in \Xi^n$ (with regards to a property $L$), if every $u\in \Xi^n$ such that $u_Q = w_Q$ is not in $L$.
\end{definition}

Without loss of generality, we assume that a test always rejects when it encounters a witness against the word. In the case of a $1$-sided tester it is actually the case that the test rejects only if it encountered a witness. We next formally define the concept of the distribution of a non-adaptive test.

\begin{definition}[distribution of a non-adaptive $(\epsilon,\delta)$-test]
The distribution of a non-adaptive $(\epsilon,\delta)$-test $\mathcal{A}$, denoted by $\mu_{\mathcal{A}}$, is a distribution over $2^{[n]}$, such that, for every $Q\subseteq [n]$, the value of $\mu_{\mathcal{A}}(Q)$ is the probability $\mathcal{A}$ will select $Q$ to be its set of queries.
We omit the subscript when it is clear from context.
\end{definition}

Our conversion results rely on the combinatorial aspects of  distributions of tests. In fact, for non-adaptive $1$-sided tests without loss of generality this distribution is the sole defining object, because the test can be assumed to reject if and only if its query set produced a witness against the input word.
In particular, we show a reduction to the case where the cardinality of the support of the distribution has a bound linear in $n$.
We use the following definition to capture this case and afterwards we give the reduction.

\begin{definition}[non-adaptive $(\epsilon,\delta,q,k)$-test]
A non-adaptive $(\epsilon,\delta,q)$-test is a {$(\epsilon,\delta,q,k)$-test}, if $|\supp(\mu)|\leq k$.
\end{definition}

We observe that the support of the distribution of an $(\epsilon,\delta,q)$-test contains only sets of cardinality $q$.
We use the term {\em $(\epsilon,\delta)$-test} (omitting $q$) when we do not make any assumption on the cardinality of the sets in the distribution. The following lemma transforms a $1$-sided test to one with parameters more suitable for analysis and conversion to sample-based testing.

\begin{lemma}\label{lem:effective1sided}
A non-adaptive $1$-sided $(\epsilon/2,\delta,q)$-test can be converted to a non-adaptive $1$-sided $(\epsilon/2,1/2(q'+1),q',4(q'+1)^2\log(|\Xi|)n)$-test where $q'=O(q\log(q)/(1-\delta))$.
\end{lemma}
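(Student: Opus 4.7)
The plan is to perform two standard transformations: first amplify the original test's rejection probability by independent repetition, and then sparsify the resulting query distribution by random sampling so that the support becomes linear in $n$.

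For the amplification step, I would run the original $(\epsilon/2,\delta,q)$-test independently $t$ times on the same input $w$, unioning the query sets, and rejecting whenever any iteration rejects. The test remains $1$-sided (inputs in $L$ are still always accepted), while the error probability on $\epsilon/2$-far inputs drops to $\delta^t$. Choosing $t = \lceil \log(4(q'+1))/\log(1/\delta)\rceil = O(\log(q')/(1-\delta))$ ensures $\delta^t \leq 1/(4(q'+1))$, and the resulting test $\mathcal A'$ uses at most $tq$ queries per invocation; solving the implicit equation $q' = tq$ gives $q' = O(q\log(q)/(1-\delta))$ as stated.

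For the sparsification step, I would draw $N = 4(q'+1)^2 \log(|\Xi|) n$ query sets $Q_1,\ldots,Q_N$ independently from $\mu_{\mathcal A'}$ and define the new test $\mathcal A''$ to pick $i\in [N]$ uniformly at random and reject iff $Q_i$ is a witness against $w$. Because $\mathcal A'$ is $1$-sided, its true rejection probability on any $w$ equals $\rho(w) := \mu_{\mathcal A'}(\{Q : Q\text{ is a witness against }w\})$. For every fixed $w$, applying Lemma \ref{lem:hoeff-bound} to the indicators $\mathbf 1[Q_i\text{ is a witness against }w]$ yields
$$\Pr\left[\left|\frac{|\{i : Q_i\text{ witnesses against }w\}|}{N}-\rho(w)\right|>\frac{1}{4(q'+1)}\right] \leq 2\exp\!\left(-\frac{N}{8(q'+1)^2}\right).$$
Adjusting constants so this probability is smaller than $|\Xi|^{-n}$, a union bound over all $|\Xi|^n$ possible inputs shows that with positive probability the sample simultaneously approximates $\rho(w)$ to within $1/(4(q'+1))$ for every word. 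Fixing such a sample and using it as the support of $\mu_{\mathcal A''}$, the combined error on far inputs is at most $1/(4(q'+1))+1/(4(q'+1))=1/(2(q'+1))$, as required; the sample size, $1$-sidedness, and the $q'$-bound on query set size are immediate.

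The main point to get right is the union bound in the sparsification step: one might be tempted to union bound only over "witness profiles" induced by the sampled family (giving a self-referential bound), but since the Chernoff failure probability $\exp(-N/(8(q'+1)^2))$ against our choice of $N$ already dominates $|\Xi|^{-n}$, a brute union bound over all of $\Xi^n$ is sufficient and avoids circularity. Everything else—the $1$-sidedness preservation under repetition, the bound on the size of the union of the $t$ query sets, and solving the implicit relation $q' = O(q\log(q')/(1-\delta))$ into $q' = O(q\log(q)/(1-\delta))$—is routine.
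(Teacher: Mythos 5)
Your proposal is correct and follows essentially the same route as the paper: amplify by independent repetition to drive the error below $1/\mathrm{poly}(q')$, then fix a random sample of $O((q')^2\log(|\Xi|)n)$ query sets from the amplified distribution that, by a concentration bound and a union bound over all of $\Xi^n$, simultaneously approximates the rejection probability of every input, and use the uniform distribution over that sample. The only (immaterial) difference is that you use the additive Hoeffding bound on all inputs, while the paper applies the multiplicative Chernoff bound only to the (tiny) acceptance probability of far inputs, which buys more slack in the exponent; with your symmetric $1/(4(q'+1))+1/(4(q'+1))$ split the constants are borderline against the stated support size, but an asymmetric split of the error budget fixes this, exactly as your ``adjusting constants'' remark anticipates.
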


\begin{proof}
First, by traditional amplification, repeating the original test $10\log(q)/(1-\delta)$ times and rejecting if any run had rejected, we convert it to an $(\epsilon/2,1/1000q'',q'')$-test where $q''=O(q\log(q)/(1-\delta))$. Then we consider the outcome of running the test $10\log(|\Xi|)q''n$ times independently. By Lemma \ref{lem:chernoff}, for any fixed $\epsilon/2$-far input $w\in\Xi^n$, the probability that it is accepted by more than a $1/10q''$ fraction of the runs is bounded by $e^{-99^2\cdot \log(|\Xi|)n/300}<\frac12|\Xi|^{-n}$. This means that with probability at least $\frac12$, such a sequence of runs will satisfy the above for all $\epsilon/2$-far inputs at once. We fix such a sequence of runs, and make it the new test. That is, the new $\mu'$ consists of selecting one of the fixed runs uniformly at random, and using its query set. This brings us to an $(\epsilon/2,1/10q'',q'',10\log(|\Xi|)q''n)$-test. We artificially increase the number of queries to $q'=3q''$ to obtain our required test.
\end{proof}

The following two lemmas are essential for the analysis of $1$-sided test.
When reading them, one should have in mind that, when they are applied, the $\delta$ parameter in their statement is small since the tests analyzed are those implied by Lemma~\ref{lem:effective1sided}.

\begin{lemma}\label{lem:SmallSupport}
Let $\mathcal{J}\subseteq \supp(\mu)$, where $\mu$ is the distribution of a $1$-sided $(\epsilon/2,\delta)$-test for a non-empty property $L$ for which $\epsilon$-far words exist.  If $|\bigcup_{Q\in \mathcal{J}}Q| < \epsilon n/2$, then $\mu(\mathcal{J}) < \delta$.
\end{lemma}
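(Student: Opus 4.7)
The plan is to use the fact that a $1$-sided test can only reject when its query set is a witness, and to construct an input that is $\epsilon/2$-far from $L$ yet escapes being witnessed by any $Q \in \mathcal{J}$. If $\mathcal{J}$ constitutes too large a probability mass, this will contradict the rejection guarantee of the test.

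Concretely, let $U = \bigcup_{Q \in \mathcal{J}} Q$, so $|U| < \epsilon n / 2$ by hypothesis. Since $L$ is non-empty we can fix some $u \in L$, and since $\epsilon$-far words exist we can fix some $v \in \Xi^n$ that is $\epsilon$-far from $L$. Define the hybrid word $w$ by $w_i = u_i$ for $i \in U$ and $w_i = v_i$ for $i \in [n] \setminus U$; in the notation of the paper this is $w = v_{u_U, U}$.

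The point is that $w$ has two key properties. First, $w$ and $v$ differ only on coordinates in $U$, so by the triangle inequality and $|U| < \epsilon n / 2$, the word $w$ is still $\epsilon/2$-far from $L$ (any modification of fewer than $\epsilon n /2$ further coordinates cannot reach $L$, since that would bring us within $\epsilon n$ of $v$). Second, for every $Q \in \mathcal{J}$ we have $Q \subseteq U$, and on $U$ the word $w$ agrees with $u \in L$. Thus the word $u$ itself satisfies $u_Q = w_Q$ with $u \in L$, which by definition means $Q$ is \emph{not} a witness against $w$.

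Now apply the test $\mathcal{A}$ to $w$. Since $\mathcal{A}$ is $1$-sided, it can reject only when its selected query set is a witness against $w$. As no member of $\mathcal{J}$ is a witness against $w$, whenever $\mathcal{A}$ samples $Q \in \mathcal{J}$ it must accept. Therefore
\[
\Pr[\mathcal{A}(w) = 1] \geq \mu(\mathcal{J}).
\]
But $w$ is $\epsilon/2$-far from $L$, so by the test guarantee $\Pr[\mathcal{A}(w) = 1] \leq \delta$, which gives $\mu(\mathcal{J}) \leq \delta$ (and strict inequality if the test guarantee is strict, as is the convention). The only subtle point, and really the only thing to check carefully, is the distance argument in the construction of $w$ — everything else is definitional unwinding of ``$1$-sided'' and ``witness''.
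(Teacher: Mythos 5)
Your proof is correct and is essentially the paper's own argument: both construct the same hybrid word (agreeing with a word of $L$ on $\bigcup_{Q\in\mathcal{J}}Q$ and with an $\epsilon$-far word elsewhere), show it remains $\epsilon/2$-far by the triangle inequality, observe that no $Q\in\mathcal{J}$ can be a witness against it, and conclude from the one-sidedness of the test that $\mu(\mathcal{J})$ is bounded by the acceptance probability $\delta$. Your remark about $\leq$ versus $<$ applies equally to the paper's own write-up, so there is no gap relative to it.
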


\begin{proof}
Let $T = \bigcup_{Q\in \mathcal{J}}Q$, $u\in\Xi^n$ be a word in $L$, $w\in \Xi^n$ be $\epsilon$-far from $L$, and $v\in \Xi^n$ be such that $v_T = u_T$ and $v_{[n]\setminus T} = w_{[n]\setminus T}$. 
Assume that $|\bigcup_{Q\in \mathcal{J}}Q| < \epsilon n/2$. 
Then, by the triangle inequality, $v$ is $\epsilon/2$-far from $L$.

Considering a $1$-sided test of $v$ with distribution $\mu$, we first note that no member of $\mathcal{J}$ is a witness against $v$. Thus, $\mu(\mathcal{J})$ is at most $1$ minus the probability of $\mu$ obtaining a witness. As $v$ is $\epsilon/2$-far from $L$, the probability of obtaining a witness by $\mu$ is at least $1-\delta$, implying that $\mu(\mathcal{J})<\delta$.
\end{proof}

\begin{lemma}\label{lem:LargeSupport}
For $\mu$ which is the distribution of a $1$-sided $(\epsilon/2,\delta)$-test for a non-empty property $L$ for which $\epsilon$-far words exist, let $w\in \Xi^n$ be $5\epsilon/6$-far from $L$ and $\mathcal{J}\subseteq \supp(\mu)$. If $\mu(\mathcal{J}) \geq 2\delta$, then the set $\mathcal{S}$ of all $Q\in\mathcal{J}$ which are witnesses against $w$ satisfies $|\bigcup_{Q\in \mathcal{S}}Q| \geq \epsilon n/2$.
\end{lemma}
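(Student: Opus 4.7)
The plan is to reduce the claim to Lemma~\ref{lem:SmallSupport} applied directly to the sub-family $\mathcal{S}$. The contrapositive of that lemma states that any subfamily of $\supp(\mu)$ whose $\mu$-mass is at least $\delta$ must have its union of size at least $\epsilon n/2$. So it will suffice to establish the single inequality $\mu(\mathcal{S})\ge\delta$, and the desired conclusion follows immediately.

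To get this lower bound on $\mu(\mathcal{S})$, the key observation is that $w$ is $5\epsilon/6$-far from $L$ and hence in particular $\epsilon/2$-far, so the $1$-sided $(\epsilon/2,\delta)$-test is required to reject $w$ with probability at least $1-\delta$. Because a $1$-sided test can only reject after encountering a witness, the total $\mu$-mass of query sets that are \emph{not} witnesses against $w$ is at most $\delta$. I would then split $\mathcal{J}=\mathcal{S}\cup(\mathcal{J}\setminus\mathcal{S})$: by the very definition of $\mathcal{S}$, every element of $\mathcal{J}\setminus\mathcal{S}$ is a non-witness against $w$, hence $\mu(\mathcal{J}\setminus\mathcal{S})\le\delta$, which together with the hypothesis $\mu(\mathcal{J})\ge 2\delta$ yields $\mu(\mathcal{S})\ge\delta$, as wanted.

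Finally, I would apply Lemma~\ref{lem:SmallSupport} to the family $\mathcal{S}\subseteq\supp(\mu)$ (its preconditions about $L$ being non-empty and having $\epsilon$-far words are inherited directly from the hypotheses of the present lemma) to deduce $|\bigcup_{Q\in\mathcal{S}}Q|\ge\epsilon n/2$. I do not foresee a genuine obstacle here; the only possibly puzzling point is that the apparent slack between ``$5\epsilon/6$-far'' for $w$ and the ``$\epsilon/2$-far'' threshold used by the test is not actually exercised in this proof, and seems to be introduced purely for later applications in which one modifies $w$ on a small set of positions (shrinking its distance by up to $\epsilon n/3$) before still wanting to invoke the test's rejection guarantee.
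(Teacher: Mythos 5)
Your proof is correct and follows essentially the same route as the paper's: both establish $\mu(\mathcal{S})\ge\delta$ by noting that the non-witness mass against the $5\epsilon/6$-far (hence $\epsilon/2$-far) word $w$ is at most $\delta$, and then invoke the contrapositive of Lemma~\ref{lem:SmallSupport} on $\mathcal{S}$. Your side remark about the unused slack between $5\epsilon/6$ and $\epsilon/2$ is also accurate --- that margin is consumed later when the lemma is applied to words modified on a small core set.
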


\begin{proof}
Let $\mathcal{S}\subseteq \mathcal{J}$ be the subset of witnesses against $w$ as in the formulation of the lemma. Since $w$ is $5\epsilon/6$-far from $L$, the distribution $\mu$ provides a witness against $w$ with probability at least $1-\delta$, and therefore $\mu(\mathcal{S}) \geq \delta$. Consequently, by Lemma \ref{lem:SmallSupport}, $|\bigcup_{Q\in \mathcal{J}}Q| \geq \epsilon n/2$.
\end{proof}

\begin{definition}[$p$-sampling $(\epsilon,\delta)$-test]
A {\em $p$-sampling test} for a property $L$ is an $(\epsilon,\delta)$-test such that every $i\in [n]$ is selected as a query, independently, with probability $p$; in other words, it is a sample-based test with probability $p$ as defined in \cite{SampleBased}.
A $p$-sampling test is $1$-sided if every word in the property is accepted with probability $1$ and otherwise it is $2$-sided.
We use the notation $\mu_p$ to denote the distribution of the $p$-sampling test.
\end{definition}

\section{A conversion of a $1$-sided test to a $1$-sided sampling test}
\newcommand{\Sets}[1]{\mathcal{M}_{#1}}
\newcommand{\Core}[1]{\mathcal{C}_{#1}}
\newcommand{\Match}[1]{\mathcal{S}_{#1}}
\newcommand{\core}{core}
\newcommand{\Pom}{pompom}
\newcommand{\revealing}{revealing}

We show here that if a property is testable with a $1$-sided error, then it has $p$-sampling $1$-sided $(\epsilon,\delta)$-test with $p$ corresponding to some negative power of $n$. Specifically, we prove the following theorem, which as we explain immediately afterwards implies our claimed result.

\begin{theorem}\label{thm:1sided}
For every $n>(24q(q+1)^2(\log(|\Xi|))^2/\epsilon)^q$,
if a property over $\Xi^n$ has a $1$-sided
$(\epsilon/2,1/2(q+1),q,4(q+1)^2\log(|\Xi|)n)$-test, then it also has a
$p$-sampling $1$-sided $(\epsilon,1/2)$-test such that 
$p=O(\log(|\Xi|)q^3n^{-1/q^2}/\epsilon)$.
\end{theorem}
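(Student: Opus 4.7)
The plan has two components: a reduction from Theorem~\ref{thm:1sided} to a combinatorial statement about the support of the test's distribution $\mu$, and a structural result in that support which enables the sample-based conversion.

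For the reduction, I would apply Lemma~\ref{lem:LargeSupport} with $\delta = 1/(2(q+1))$: any word $w \in \Xi^n$ that is $\epsilon$-far (and so, in particular, $5\epsilon/6$-far) from $L$ yields a sub-family $\mathcal{S}_w \subseteq \supp(\mu)$ of witnessing $q$-sets satisfying $|\bigcup_{Q \in \mathcal{S}_w} Q| \geq \epsilon n/2$, which in turn forces $|\mathcal{S}_w| \geq \epsilon n/(2q)$. The $p$-sampling test automatically rejects $w$ whenever the sample $U \sim \mu_p$ happens to contain some $Q \in \mathcal{S}_w$, so it suffices to show that this event occurs with probability at least $1/2$ for the stated value of $p$.

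The structural result locates a \emph{pompom} inside $\mathcal{S}_w$: a core $C \subseteq [n]$ together with petals $P_1,\ldots,P_m \subseteq [n]\setminus C$ pairwise disjoint, such that each $C \cup P_i$ lies in $\mathcal{S}_w$. Because conditionally on $C \subseteq U$ the events $\{P_i \subseteq U\}$ are mutually independent,
\[ \Pr_U[\exists i : C \cup P_i \subseteq U] = p^{|C|}\bigl(1 - (1-p^{q - |C|})^m\bigr). \]
For $p = \Theta(n^{-1/q^2})$ this already exceeds $1/2$ in the empty-core case provided $m \gtrsim 1/p^q = n^{1/q}$, so the matching case is ideal; if a non-trivial core is forced, the plan is to locate a family of pompoms sharing a single small core so that, conditional on the common core being sampled, many disjoint petal events combine into a $1/2$-probability event. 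The pompom is built by an iterative core-growth procedure: beginning with $C = \emptyset$ and the residual family $\mathcal{F}_0 = \mathcal{S}_w$, at each round either (i) some vertex $v$ lies in at least a threshold $t$ fraction of the current $(q - |C|)$-uniform family and we add $v$ to $C$ and restrict to the family of sets through $v$; or (ii) the maximum vertex degree is below $t$, and a greedy matching in $\mathcal{F}_i$ of size at least $|\mathcal{F}_i|/((q - |C|) t)$ produces a pompom with the accumulated core. The iteration terminates in at most $q$ rounds since $|C|$ strictly grows.

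The main obstacle will be the tight choice of the threshold $t$ and the careful tracking of the family size across the up-to-$q$ rounds of core growth. A direct application of the Erd\H{o}s--Rado sunflower lemma, as in \cite{fischer2014partial}, yields an exponent of the form $1/q!$ because the sunflower core is forced to be exactly the pairwise intersection of participating sets, so a nonzero core must be peeled off through recursive applications with increasing costs. The gain of pompoms lies in absorbing intersection vertices into the core in one round and continuing the matching search with them incorporated, which balances out to the improved exponent $1/q^2$. A subsidiary concern is to maintain $|\bigcup \mathcal{F}_i|$ at an adequately large size through all core extractions so that a final matching of the required size can still be extracted; enforcing this through $q$ rounds is precisely what drives the theorem's hypothesis $n > (24q(q+1)^2(\log|\Xi|)^2/\epsilon)^q$ and the $\log(|\Xi|)$ and $q^3/\epsilon$ factors in $p$.
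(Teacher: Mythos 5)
Your reduction to a combinatorial statement about $\supp(\mu)$, and your identification of a pompom (common core plus pairwise disjoint petals) as the right generalization of a matching, both match the paper. But there is a genuine gap at the heart of your sampling analysis: you require the event $C\cup P_i\subseteq U$, i.e.\ you insist that the sample contain the core. Your own formula $p^{|C|}\bigl(1-(1-p^{q-|C|})^m\bigr)$ shows why this cannot work: for $p=\Theta(n^{-1/q^2})$ the factor $p^{|C|}$ tends to $0$ for \emph{any} nonempty core, so no choice of $m$ rescues the bound, and ``conditional on the common core being sampled'' is conditioning on an event of vanishing probability. Worse, in the actual construction the core is not small: the high-degree vertices that must be absorbed into $C$ can number up to order $n^{1-i/q}$ (this is exactly the bound in the paper's notion of a constellation), so $C$ has no chance whatsoever of landing in the sample.

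The paper's resolution, which is the idea missing from your proposal, is that the sample never needs to touch $C$. Instead of one pompom of witnesses against $w$, one builds, for \emph{every} possible assignment $\sigma\in\Xi^{|C|}$ to the core, a separate pompom $\mathcal S_\sigma$ with the \emph{same} core $C$ whose members are witnesses against $w_{\sigma,C}$ (this is the ``revealing set of $i$-pompoms''; each $\mathcal S_\sigma$ is obtained from the constellation via Lemma~\ref{lem:LargeSupport} applied to $w_{\sigma,C}$, which is still $5\epsilon/6$-far since $|C|<\epsilon n/6$). If the sample contains at least one petal $Q\setminus C$ from each $\mathcal S_\sigma$, then every candidate value of $w_C$ is refuted, so the sample is a \emph{super-witness} and hence (Observation~\ref{obs:SuperWitness}) a genuine witness against $w$ --- with no query ever made inside $C$. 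The price is a union bound over all $|\Xi|^{|C|}$ assignments, which is why each pompom must have size about $\epsilon n^{1-(i-1)/q}/3i$ (so that the failure probability per $\sigma$ is below $|\Xi|^{-|C|}/2$) and where the $\log|\Xi|$ factors in $p$ really come from; your attribution of those factors to ``maintaining the family size across rounds'' is not where they arise. Your iterative one-vertex-at-a-time core growth also differs from the paper's construction (which defines all high-degree indices as $\Core{i}$ at once and peels off the layers $\Match{i}$ by the value of $|Q\cap\Core{i}|$), but that is a cosmetic difference; the super-witness mechanism is the essential missing step, and without it the proof does not go through for any instance where a nonempty core is forced.
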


The preceding theorem is effective for all properties with $1$-sided $(\epsilon/2,\delta,q')$-tests, since,
by Lemma~\ref{lem:effective1sided}, 
$(\epsilon/2,\delta,q)$-tests can be converted to a $(\epsilon/2,1/2(q+1),q,4(q+1)^2\log(|\Xi|)n)$-test, where $q'$ is bounded by a polynomial in $q$ and $1/(1-\delta)$.

We next sketch a proof that the statement of Theorem~\ref{thm:1sided} holds, for every test that satisfies the additional constraint that it has a distribution $\mu$ such that $\supp(\mu)$ consists of pairwise disjoint sets.
The main result of this section can be interpreted as a reduction to this simple case.

Suppose that $\mu$ is a distribution of a  $1$-sided $(\epsilon/2,1/2(q+1),q,4(q+1)^2\log(|\Xi|)n)$-test for $L$.
Let $w$ be $\epsilon$-far from a property $L$, and $\mathcal{B}$ be the family of all the sets in $\supp(\mu)$ that are witnesses against $w$.
Now note that if $|\mathcal{B}|$ is sufficiently large, then using the fact that these sets are pairwise disjoint it is easy to show that,  
with probability at least $1/2$, the set of queries used by
a $p$-sampling test contains at least one of these sets.
This in turn implies that a $p$-sampling test reject $w$ with probability at least $1/2$. We next explain why $|\mathcal{B}|$ is indeed sufficiently large. Let $B$ be the union of all the sets in $\mathcal{B}$. 
By definition, the test rejects $w$ with probability at least $1/2$, and therefore $\mu(\mathcal{B}) \geq 1/2$.
Thus, by Lemma~\ref{lem:SmallSupport}, $|B| \geq \epsilon n/2$ and hence $|\mathcal{B}| \geq \epsilon n/(2q)$.

When the sets in $\supp(\mu)$ are not pairwise disjoint the preceding idea does not work,
 for example, in the case that the size of the intersection of all the sets in $\supp(\mu)$ is exactly of size $1$.
Here, with high probability, a set of queries selected at random according to 
$\mu_p$ does not contain a set from $\supp(\mu)$ that is a witness against $w$. Thus, we can't conclude that a $p$-sampling test rejects $w$ with the required probability.
We now explain how to circumvent this barrier in two steps: in the first step we assume that $C$, the intersection of sets in $\supp(\mu)$, is significantly smaller then $\epsilon n /2$ and 
 that we are given $w_C$ in advance; in the second step we show what to do when $w_C$ is not known in advance. 

Let $\mathcal{B}$ and $B$ be as defined in the simple case.
In the same manner as the simple case, we can conclude that 
$|B| \geq \epsilon n/2$.
Let $\mathcal{M}$ be the family of all  non-empty sets $Q\setminus C$ such that $Q\in \mathcal{\supp(\mu)}$. 
We note that the size of $\mathcal{M}$ is $O(\epsilon n/q)$ since, by construction, the size of the union of these set is $|B|-|C| = O(\epsilon n/q)$.
It is thus easy to show that, with high probability, a set of queries selected at random according to 
$\mu_p$ together with $C$ contains a witness against $w$, and hence a $p$-sampling test, with the advance knowledge of $w_C$, rejects $w$ with the required probability.
The combinatorial structure consisting of the set $C$ and the family $\mathcal{B}$ is captured by the following definition:

\begin{definition}($i$-\Pom)\label{def:Pom-pom}
A family of sets $\mathcal{S}$ is an \emph{$i$-\Pom} if there exists a set $C$, which we refer to as the {\em \core} of the $i$-\Pom, such that the following hold.
\begin{enumerate}
\item $|Q\setminus C| = i$ for every $Q\in \mathcal{S}$.
\item $Q\setminus C$ and $Q'\setminus C$ are pairwise disjoint for every distinct $Q$ and $Q'$ in $\mathcal{S}$.
\end{enumerate}
\end{definition}
The restriction of the cardinality of the sets $Q\setminus C$ is required to support technical computations in the proofs.
We next explain how the above idea works when $w_C$ is not given in advance.

If, given a word $w$, we have a large \Pom\ with core $C$ that is additionally made up of witnesses against $w_{v,C}$ for some $v\in\Xi^{|C|}$, then similarly to the simple case described above, the sampling distribution will produce a set showing that $w$ is not in the property unless $w_C\neq v$.
This simple observation is the motivation for the following setting.
Suppose that there existed a set $C$ and a set of families $\{\mathcal{S}_\sigma\}_{\sigma \in \Xi^{|C|}}$ such that
for every $\sigma \in \Xi^{|C|}$, $\mathcal{S}_\sigma$ is an $i$-\Pom\ consisting of witnesses against $w_{\sigma,C}$ and has $C$ as a \core.
Now, if the cardinality of $C$ is sufficiently small and the cardinality of every $i$-\Pom\ is sufficiently large, then we can prove the following:
with high probability,
a set of queries selected at random according to 
$\mu_p$, 
contains a set of queries whose values rule out any possible value of $w_C$, and hence imply that $w$ is not in the property.
We refer to such a set of queries as a {\em super-witness}.

\begin{definition}[super-witness against a word]
We say that $X\subseteq [n]$ is a {\em super-witness} against a word $w\in \Xi^n$, if there exists a set $Y\subseteq [n]\setminus X$ such that, for every $\sigma\in \Xi^{|Y|}$, there exists a set $Q\subseteq X\cup Y$ which is a witness against $w_{\sigma,Y}$.
\end{definition}

Recall that the set $X$ in the above definition does not necessarily contain any set from $\supp(\mu)$. However, as we prove next, it is  sufficient to imply that $w$ in not in the property. 

\begin{observation}\label{obs:SuperWitness}
Any set containing a witness against a word $w$ is also a witness against it. Additionally, a set is a super-witness against $w$ if and only if it is a witness against it.
\end{observation}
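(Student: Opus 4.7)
The observation has two parts, and both are essentially unpacking of the definitions; the plan is to verify each directly, with the second part requiring a short contrapositive-style argument.

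For the first claim, suppose $Q \subseteq X$ and $Q$ is a witness against $w$. I would take an arbitrary $u \in \Xi^n$ with $u_X = w_X$; since $Q \subseteq X$ this forces $u_Q = w_Q$, so by definition of witness $u \notin L$, hence $X$ is a witness against $w$. This is the easy direction, and the only ``content'' is that witnesses are closed under taking supersets.

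For the second claim, the ``$\Leftarrow$'' direction follows by taking the trivial choice $Y = \emptyset$ in the super-witness definition: the only $\sigma \in \Xi^0$ is the empty word, $w_{\sigma,Y} = w$, and $Q := X \subseteq X \cup Y$ is the required witness. The more interesting direction is ``$\Rightarrow$''. The plan is a direct argument: assume $X$ is a super-witness, witnessed by some $Y \subseteq [n] \setminus X$, and suppose toward a contradiction that some $u \in L$ satisfies $u_X = w_X$. Setting $\sigma := u_Y$, the word $w_{\sigma,Y}$ agrees with $u$ on $X$ (since $X$ is untouched by the replacement and $u_X = w_X$) and on $Y$ (by choice of $\sigma$), hence $(w_{\sigma,Y})_{X \cup Y} = u_{X \cup Y}$. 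By the super-witness property, there exists $Q \subseteq X \cup Y$ which is a witness against $w_{\sigma,Y}$; restricting the previous agreement to $Q$ gives $u_Q = (w_{\sigma,Y})_Q$, so $u \notin L$, a contradiction.

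I do not expect any real obstacles here: the only subtle point worth spelling out is that $X \cap Y = \emptyset$ is what allows the replacement on $Y$ to preserve the values on $X$, which is precisely what makes the definitions of witness and super-witness line up. No large-deviation tools, hypergraph structure, or pompom machinery are needed for this observation; its role is rather to justify, in the surrounding discussion, that producing a super-witness via the $\{\mathcal{S}_\sigma\}_{\sigma}$ construction is already enough to conclude that $w \notin L$.
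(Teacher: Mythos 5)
Your proof is correct and follows essentially the same route as the paper's: supersets of witnesses are witnesses by definition, a witness is a super-witness via $Y=\emptyset$, and a super-witness $X$ is a witness because any $u$ with $u_X=w_X$ agrees with $w_{u_Y,Y}$ on $X\cup Y$ and is therefore excluded from $L$ by the corresponding $Q$. You simply spell out the role of $X\cap Y=\emptyset$ more explicitly than the paper does; no further comment is needed.
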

\begin{proof}
The part about containing sets follows immediately from the definition. Additionally, a witness $X$ is also a super-witness by setting $Y=\emptyset$. Now let $X\subseteq [n]$ be a super-witness against $w\in \Xi^n$. By the definition of a super-witness, for every $u\in \Xi^n$ such that $u_X = w_X$, there exists a witness against $u$ (some subset of $X\cup Y$) and hence $u\not\in L$. Thus $X$ is a witness against $w$ with regards to $L$.
\end{proof}

We now formally define the type of set of $i$-\Pom s that we need for our result.

\begin{definition}(\revealing\ set of $i$-\Pom s for $w$)\label{def:CSetofiPoms}
Let $w$ be any word in $\Xi^n$.
A set of $i$-\Pom s for $w$ is {\em \revealing} if there exists $C\subseteq [n]$ of cardinality bounded above by $4q(q+1)^2\log(|\Xi|)n^{1-i/q}$, such that,  
for every $\sigma \in \Xi^{|C|}$, the set contains an $i$-\Pom\ $\mathcal{S}_\sigma$ that satisfies:
\begin{enumerate}
\item\label{item:PomSetCore} $C$ is a \core\ of $\mathcal{S}_\sigma$.
\item\label{item:PomSetSets} for every $\sigma \in \Xi^{|C|}$,  $\mathcal{S}_\sigma$ consists only of witnesses against $w_{\sigma,C}$.
\item\label{item:PomSetCardinalities} $|\mathcal{S}_\sigma| \geq (1/3i)\epsilon\cdot n^{1-(i-1)/q}$.
\end{enumerate}
\end{definition}

The bounds on the cardinality of the \core\ and the $i$-\Pom s, follow from the combinatorial construction we use in order to prove their existence. We show next that they are sufficient for our purposes and afterwards present the combinatorial construction.

\begin{lemma}\label{lem:PompomApp}
Let $n>(24q(q+1)^2(\log(|\Xi|))^2/\epsilon)^q$, $\alpha = 15\ln{|\Xi|}\cdot q(q+1)^2/\epsilon$ and $w\in \Xi^n$.
If there exists a \revealing\ set of $i$-\Pom s for $w$ with a \core\ $C$, then the  
$\alpha \cdot n^{-1/q^2}$-sampling algorithm rejects $w$, with probability at least $\frac12$.
\end{lemma}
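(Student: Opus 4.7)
The plan is to show that the random sample $X$ drawn from $\mu_p$ is a super-witness against $w$ with probability at least $1/2$; by Observation~\ref{obs:SuperWitness} this then implies $X$ is a witness against $w$, so the $1$-sided $p$-sampling test rejects. The natural side-set in the super-witness definition is $Y=C\setminus X$.

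The key reduction is the following. Call a sample $X$ \emph{good} if for every $\sigma\in\Xi^{|C|}$ there exists some $Q\in\mathcal{S}_\sigma$ with $Q\setminus C\subseteq X$; I claim every good $X$ is a super-witness against $w$ with $Y=C\setminus X$. Indeed, given any $\sigma'\in\Xi^{|Y|}$, define $\sigma\in\Xi^{|C|}$ by taking $\sigma$ to equal $\sigma'$ on $Y$ and to equal $w$ on $C\cap X$; a direct case-check shows $w_{\sigma,C}=w_{\sigma',Y}$. Since $X$ is good, some $Q\in\mathcal{S}_\sigma$ satisfies $Q\setminus C\subseteq X$, and hence $Q=(Q\cap C)\cup(Q\setminus C)\subseteq C\cup X=X\cup Y$. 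By item~\ref{item:PomSetSets} of Definition~\ref{def:CSetofiPoms}, this $Q$ is a witness against $w_{\sigma,C}=w_{\sigma',Y}$, as required.

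It therefore suffices to lower-bound the probability that $X$ is good. For each $\sigma$, let $E_\sigma$ be the event that no $Q\in\mathcal{S}_\sigma$ satisfies $Q\setminus C\subseteq X$. By the \Pom\ property, the sets $\{Q\setminus C:Q\in\mathcal{S}_\sigma\}$ are pairwise disjoint and each has size exactly $i$; since $X$ includes each index of $[n]$ independently with probability $p$, the events $\{Q\setminus C\subseteq X\}_{Q\in\mathcal{S}_\sigma}$ are mutually independent, each of probability $p^i$. Hence $\Pr[E_\sigma]\le \exp(-p^i|\mathcal{S}_\sigma|)$, and a union bound over $\sigma\in\Xi^{|C|}$ yields
\[
\Pr\!\Bigl[\,\bigcup_{\sigma} E_\sigma\Bigr]\le \exp\bigl(|C|\ln|\Xi|-p^i|\mathcal{S}_\sigma|\bigr).
\]
A routine calculation using $p=\alpha n^{-1/q^2}$, the bounds from Definition~\ref{def:CSetofiPoms}, the identity $1-(i-1)/q-i/q^2=(1-i/q)+(q-i)/q^2$, and the hypothesis on $n$ then shows the right-hand side is at most $1/2$.

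The main delicate point is the arithmetic. The union-bound cost $|C|\ln|\Xi|$ scales like $n^{1-i/q}(\log|\Xi|)^2$, while $p^i|\mathcal{S}_\sigma|\ge (\alpha^i\epsilon/(3i))\,n^{1-i/q}\cdot n^{(q-i)/q^2}$. The factor $n^{(q-i)/q^2}$, which is positive for $i<q$, is what the hypothesis $n\ge(24q(q+1)^2(\log|\Xi|)^2/\epsilon)^q$ leverages in order to absorb the $(\log|\Xi|)^2$ paid for union-bounding over all $\sigma$; at the boundary $i=q$ this margin vanishes, and the prefactor $\alpha^q\epsilon/(3q)$ alone must cover the cost, which is what dictates the particular constant $15$ inside the definition of $\alpha$.
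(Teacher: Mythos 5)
Your proof is correct and follows essentially the same route as the paper's: use disjointness of the tails $Q\setminus C$ to get independence, bound the failure probability for each $\sigma$ by $\exp(-p^i|\mathcal S_\sigma|)$, union-bound over $\Xi^{|C|}$, and conclude via Observation~\ref{obs:SuperWitness} that a good sample is a super-witness and hence a witness. Your explicit verification that a good sample is a super-witness (with $Y=C\setminus X$ and the identity $w_{\sigma,C}=w_{\sigma',Y}$) is in fact more careful than the paper's one-line assertion of that step, and the arithmetic you defer to a ``routine calculation'' is exactly what the paper relegates to Calculation~\ref{calc:withi} in the appendix.
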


\begin{proof}
Let $\mathcal B_\sigma$ be the set $\{Q\setminus C:Q\in\mathcal S_\sigma\}$. We observe that $|\mathcal B_\sigma| = |\mathcal S_\sigma|$.
Let $R\subseteq [n]$ be the set of indexes sampled by the $\alpha \cdot n^{-1/q^2}$-sampling algorithm. 

Referring to Calculation \ref{calc:withi} (see appendix), for every $\sigma \in \Xi^{|C|}$, the probability that $R$ does not contain a set from  $\mathcal B_\sigma$ is at most $(1-\alpha^i\cdot n^{-i/q^2})^{(1/3i)\epsilon\cdot n^{1-(i-1)/q}} \leq \frac12 |\Xi|^{-4q(q+1)^2\log(|\Xi|)n^{1-i/q}}\leq \frac12 |\Xi|^{-|C|}$.
Thus, by the union bound, with probability at least $\frac12$, for every $\sigma \in \Xi^{|C|}$ the set $R$ contains a set from $\mathcal B_\sigma$. We note that this means that $R$ is a super-witness against $w$, which by Observation \ref{obs:SuperWitness} makes it a witness against $w$. Thus the  $\alpha \cdot n^{-1/q^2}$-sampling algorithm rejects $w$ with probability at least $\frac12$ and hence the statement of the lemma follows.
\end{proof}

We now show that, for every property $L\subseteq \Xi^n$, that has a $1$-sided $(\epsilon/2,1/2(q+1),q,4(q+1)^2\log(|\Xi|)n)$-test,
 there exists a \revealing\ set of $i$-\Pom s for $w$.
We first show that if $\supp(\mu)$ has a family of sets $\mathcal{S}$ that is almost an $i$-\Pom, then for every $w$ that is $\epsilon$-far from $L$, $\mathcal{S}$ admits a \revealing\ set of $i$-\Pom s.
By ``almost'' we mean that there exists a set $C$,
which satisfies
that every $j\in (\bigcup_{Q\in \mathcal{S}}Q)\setminus C$ is not in too many of sets in $\mathcal{S}$
(where for a true \Pom\ every such $j$ would be in exactly one set).

First, let us formally define the ``almost-pompom'' sets; this definition will also serve in the proof of the conversion for $2$-sided tests.

\begin{definition}[constellation]\label{def:constellation}
For $i\in [q]$, $n$, a distribution $\mu$ over subsets of $[n]$ of size $q$, and any positive number $\eta$, an {\em $(\eta,i)$-constellation} is a pair $(C,\mathcal S)$ consisting of a set $C\subseteq [n]$ and a family $\mathcal S\subseteq \supp(\mu)$ satisfying the following.
\begin{enumerate}
\item\label{cond:MessyCore} $|C| < \eta n^{1-i/q}$.
\item\label{cond:MessyProb} $\mu(\mathcal{S}) \geq \frac{1}{q+1}$.
\item\label{cond:MessyTails} $|Q\cap C| = q-i$, for every $Q\in\mathcal{S}$.
\item\label{cond:MessyInter} Every $j\in \bigcup_{Q\in\mathcal{S}}Q\setminus C$ is in at most $n^{(i-1)/q}$ sets from $\mathcal{S}$ if $i>1$.
\end{enumerate}
\end{definition}

\begin{lemma}\label{lem:MessyToSetOfPoms}
Let $i\in [q]$, $n > (24q(q+1)^2(\log(|\Xi|))^2/\epsilon)^q$, $w$ be any word in $\Xi^n$ that is $\epsilon$-far from $L$, and $\mu$ be the distribution of a $1$-sided $(\epsilon/2,1/2(q+1),q,4(q+1)^2\log(|\Xi|)n)$-test for $L$. If there exists a $(4q(q+1)^2\log(|\Xi|),i)$-constellation for $\mu$, then there exists a \revealing\ set of $i$-\Pom s for $w$.
\end{lemma}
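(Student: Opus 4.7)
The plan is to construct, for each $\sigma \in \Xi^{|C|}$ (where $C$ is the constellation's core), an $i$-\Pom\ $\mathcal{S}_\sigma \subseteq \mathcal{S}$ of witnesses against the patched word $w_{\sigma,C}$, all sharing $C$ as common \core. The three conditions of Definition~\ref{def:CSetofiPoms} then follow from a three-step argument.

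First, I would observe that $w_{\sigma,C}$ is $(5\epsilon/6)$-far from $L$ for every $\sigma$. The constellation bound $|C| < 4q(q+1)^2\log(|\Xi|)\, n^{1-i/q}$ combined with the hypothesis $n > (24q(q+1)^2(\log|\Xi|)^2/\epsilon)^q$ gives $|C|/n \le \epsilon/6$, and since $w_{\sigma,C}$ differs from $w$ only on $C$, the triangle inequality completes this step.

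Second, for fixed $\sigma$, I would apply Lemma~\ref{lem:LargeSupport} with $\mathcal{J} = \mathcal{S}$, using $\mu(\mathcal{S}) \ge 1/(q+1) = 2\delta$ (constellation condition~2, with $\delta = 1/(2(q+1))$) and the $(5\epsilon/6)$-far word $w_{\sigma,C}$. This shows that the subfamily $\mathcal{S}_\sigma^{\mathrm{wit}} := \{Q \in \mathcal{S} : Q \text{ is a witness against } w_{\sigma,C}\}$ satisfies $\bigl|\bigcup_{Q\in \mathcal{S}_\sigma^{\mathrm{wit}}} Q\bigr| \ge \epsilon n/2$; stripping off the core contribution then gives $\bigl|\bigcup_{Q\in \mathcal{S}_\sigma^{\mathrm{wit}}} (Q\setminus C)\bigr| \ge \epsilon n/2 - |C| \ge \epsilon n/3$.

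Third, I would extract an $i$-\Pom\ from $\mathcal{S}_\sigma^{\mathrm{wit}}$ by greedy packing on the outside-core parts. Constellation condition~3 gives $|Q\setminus C| = i$, and condition~4 bounds the multiplicity of each element of $\bigcup_{Q\in\mathcal{S}}(Q\setminus C)$ in $\mathcal{S}$ by $n^{(i-1)/q}$ (vacuous for $i = 1$). Iteratively picking a remaining set and discarding all sets meeting its outside-core portion loses at most $i\cdot n^{(i-1)/q}$ sets per iteration; balancing this against the lower bound on $|\mathcal{S}_\sigma^{\mathrm{wit}}|$ implied by step~2 and condition~3 yields a pompom of the required size $(1/(3i))\epsilon\cdot n^{1-(i-1)/q}$. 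In the $i=1$ case the argument degenerates: the singletons $Q\setminus C$ cover at least $\epsilon n/3$ distinct elements, so picking one representative per covered element gives the pompom directly. The main obstacle is this third step --- converting the large-union bound into a disjoint subfamily of the required size relies on condition~4 of the constellation in essentially tight form, and tracking the constants in the greedy extraction against the revealing-set requirement is the delicate part of the argument.
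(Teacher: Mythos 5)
Your proposal follows the paper's proof essentially step for step: the same $5\epsilon/6$-farness of $w_{\sigma,C}$ via the bound $|C|<\epsilon n/6$, the same application of Lemma~\ref{lem:LargeSupport} with $\mathcal J=\mathcal S$ followed by subtracting $|C|$ to get an outside-core union of size at least $\epsilon n/3$, and the same disjointification using condition~\ref{cond:MessyInter} of the constellation (with the $i=1$ case handled separately). The paper in fact states the extraction of the disjoint subfamily in your third step with even less justification than you supply, so the constant-tracking you flag as delicate is exactly the point the paper also glosses over, and there is no gap relative to it.
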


\begin{proof}
Let $\sigma$ be any arbitrary word in $\Xi^{|C|}$. Since $n > (24q(q+1)^2(\log(|\Xi|))^2/\epsilon)^q$, $|C| < \epsilon n/6$, and hence by the triangle inequality $w_{\sigma,C}$ is $5\epsilon/6$-far from $L$.

Let $\mathcal S_\sigma \subseteq \mathcal S$ be the set of all $Q\in\mathcal S$ which are witnesses against $w_{\sigma,C}$. Since $\mu(\mathcal S) \geq \frac{1}{q+1}$, by Lemma~\ref{lem:LargeSupport} we have $|\bigcup_{Q\in \mathcal S_\sigma}Q| \geq \epsilon n/2$. Let $\mathcal B_\sigma$ be the set $\{Q\setminus C:Q\in\mathcal S_\sigma\}$. We observe that $|\bigcup_{Q\in\mathcal B_\sigma}Q| \geq |\bigcup_{Q\in\mathcal S_\sigma}Q| - |C| \geq \epsilon n/3$, because $|C| < \epsilon n/6$.

Suppose first that $i=1$. 
We let $\mathcal{S}_\sigma \subset \mathcal{S}$ be maximal so that, for every $Q\in \mathcal{S}_\sigma$, $Q\setminus C$ is distinct and a member of $\mathcal{B}_\sigma$.
Clearly, $|\mathcal{S}_\sigma| \geq (1/3)\epsilon\cdot n$,
$\mathcal{S}_\sigma$ is an $1$-\Pom, and $C$ is a \core\ of $\mathcal{S}_\sigma$.

Suppose now that $i>1$. Then, there exists $\mathcal B'_\sigma \subseteq \mathcal B_\sigma$ such that every pair of sets in $\mathcal{B'}_\sigma$ is disjoint and $|\mathcal B'_\sigma| \geq \frac{1}{3i}\epsilon\cdot n^{1-(i-1)/q}$, because every $j\in  \bigcup_{Q\in\mathcal B_\sigma}Q\setminus C \subseteq \bigcup_{Q\in\mathcal S}Q\setminus C$ is in at most $n^{(i-1)/q}$ sets from $\mathcal S$. 
We let $\mathcal{S}_\sigma \subset \mathcal{S}$ be maximal so that, for every $Q\in \mathcal{S}_\sigma$, $Q\setminus C$ is a distinct member of $\mathcal{B}_\sigma$.
Clearly, $|\mathcal{S}_\sigma| \geq (1/3i)\epsilon\cdot n^{1-(i-1)/q}$,
$\mathcal{S}_\sigma$ is an $i$-\Pom, and $C$ is a \core\ of $\mathcal{S}_\sigma$.

We observe that the above implies that 
for every $\sigma\in \Xi^{|C|}$,
$C$ is a \core\ of $\mathcal{S}_\sigma$,
$\mathcal{S}_\sigma$ consists only of witnesses against $w_{\sigma,C}$, and
 $|\mathcal{S}_\sigma| \geq (1/3i)\epsilon\cdot n^{1-(i-1)/q}$.
Thus, by definition,
$\{\mathcal{S}_\sigma\}_{\sigma \in \Xi^{|C|}}$ is an \revealing\ set of $i$-\Pom s for $w$.
\end{proof}

We now prove that any distribution $\mu$, over subsets of $[n]$ of size $q$, which satisfies $\supp(\mu)\leq\eta n$, admits a $(\eta,i)$-constellation $(\mathcal S,C)$ for some $i\in [q]$, as long as a certain subset of $\supp(\mu)$ (defined below) is not too heavy.
This together with Lemma~\ref{lem:PompomApp} and Lemma~\ref{lem:MessyToSetOfPoms}, substituting $\eta=4q(q+1)^2\log(|\Xi|)$ and referring to the distribution $\mu$ of the $(\epsilon/2,1/2(q+1),q,4(q+1)^2\log(|\Xi|) n)$-test, is sufficient for the proof of Theorem~\ref{thm:1sided}.

We start by defining three sets of families, 
$\{\Sets{i}\}_{i=0}^{q}$, $\{\Core{i}\}_{i=0}^{q}$ and $\{\Match{i}\}_{i=0}^{q}$, where $\{\Sets{i}\}_{i=0}^{q}$ is 
 a partition of $\supp(\mu)$.
 We prove afterwards that, as long as $\mu(\mathcal S_0)\leq\frac1{q+1}$, for some $i\in [q]$
 the sets $\Sets{i}$ and $\Core{i}$ respectively compose the claimed constellation $(\mathcal S,C)$.

\begin{definition}[$\Sets{i}$, $\Core{i}$ and $\Match{i}$]\label{def:SCM}
Given a distribution $\mu$ over subsets of $[n]$ of size $q$ whose support is bounded by $\eta n$, we inductively define $\Sets{i}$, $\Core{i}$ and $\Match{i}$ as follows.
\begin{enumerate}
\item Let $\Sets{0}=\supp(\mu)$, and $\Core{0}$ be the set of indexes $j\in [n]$ such that $j$ is a member of at least $n^{q^{-1}}$ sets in $\Sets{0}$.\label{item:defiSCMc0}
\item For $i=0,1,\dots,q$, after $\Sets{i}$ and $\Core{i}$ are defined, let $\Match{i}$ be all the sets $Q\in \Sets{i}$ such that $|Q\cap \Core{i}| = q-i$.\label{item:defiSCMsi}
\item For $i=1,\dots,q$, after $\Sets{i}$ is defined, let $\Core{i}$ be the set of indexes $j\in [n]$ such that $j$ is a member of at least $n^{i/q}$ sets in $\Sets{i}$.\label{item:defiSCMci}
\item For $i=1,\dots,q$, after $\Sets{i-1}$ and $\Match{i-1}$ are defined, let $\Sets{i} = \Sets{i-1}\setminus \Match{i-1}$.\label{item:defiSCMmi}
\end{enumerate}
\end{definition}

The following statements give the properties of these sets.

\begin{observation}\label{obs:SCMcombi}
The following hold for the sets of Definition \ref{def:SCM} when they are constructed from a distribution $\mu$ satisfying the conditions there.
\begin{enumerate}
\item\label{item:CCoreSize0} $|\Core{0}| < \eta qn^{1-1/q}$.
\item\label{item:CCoreSize} $|\Core{i}| < \eta qn^{1-i/q}$ for all $1\leq i\leq q$.
\item\label{item:CSetsContained} $\Sets{i} \subseteq \Sets{i-1}$ for all $1\leq i\leq q$.
\item\label{item:CCoreContained} $\Core{i}  \subseteq \Core{i-1}$ for all $1\leq i\leq q$.
\item\label{item:CMAtch} $\Match{i}\cap \Match{j}= \emptyset$ for all $1\leq i< j\leq q$.
\end{enumerate}
\end{observation}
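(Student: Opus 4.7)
The plan is to dispatch the five claims of the observation in a natural order, treating them as straightforward consequences of Definition~\ref{def:SCM}. First I would handle the containment claims (items~\ref{item:CSetsContained} and~\ref{item:CCoreContained}), which are essentially definitional. The identity $\Sets{i} = \Sets{i-1}\setminus \Match{i-1}$ from step~\ref{item:defiSCMmi} of the definition gives $\Sets{i}\subseteq \Sets{i-1}$ immediately. For $\Core{i}\subseteq \Core{i-1}$ with $i\geq 1$, any $j\in \Core{i}$ lies in at least $n^{i/q}$ sets of $\Sets{i}\subseteq \Sets{i-1}$, and since $n^{i/q}\geq n^{(i-1)/q}$, this means $j$ lies in at least $n^{(i-1)/q}$ sets of $\Sets{i-1}$ (for $i=1$ this matches the $n^{1/q}$ threshold defining $\Core{0}$), hence $j\in \Core{i-1}$.

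Next, for the cardinality bounds in items~\ref{item:CCoreSize0} and~\ref{item:CCoreSize}, I would use a double-counting argument on the bipartite incidences between $\Core{i}$ and $\Sets{i}$. On one side, each $Q\in \Sets{i}$ has size $q$, so the number of pairs $(j,Q)$ with $j\in Q\cap \Core{i}$ is at most $q\cdot|\Sets{i}| \leq q\cdot|\supp(\mu)|\leq q\eta n$. On the other side, every $j\in \Core{i}$ contributes at least $n^{i/q}$ pairs (with the $i=0$ case using the $n^{1/q}$ threshold from step~\ref{item:defiSCMc0}). Rearranging yields $|\Core{i}|\cdot n^{i/q}\leq q\eta n$, i.e.\ $|\Core{i}|\leq q\eta n^{1-i/q}$; the strict inequality in the statement comes from the fact that incidence counts are integers, so the ``at least $n^{i/q}$'' threshold is effectively $\lceil n^{i/q}\rceil$, leaving a tiny slack whenever $n^{i/q}$ is not integral.

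Finally, item~\ref{item:CMAtch} follows by unrolling the definitions. Fix $i<j$. Iterating step~\ref{item:defiSCMmi} we get $\Sets{j}\subseteq \Sets{i+1}=\Sets{i}\setminus \Match{i}$, so every set in $\Sets{j}$ avoids $\Match{i}$. Since $\Match{j}\subseteq \Sets{j}$ by step~\ref{item:defiSCMsi}, we conclude $\Match{j}\cap \Match{i}=\emptyset$.

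I do not expect any genuine obstacle; the whole observation is bookkeeping about the recursive construction, with the only non-trivial ingredient being the one-line double count for the size bounds. The main point to watch is the boundary asymmetry between $i=0$ and $i\geq 1$ in the definition of $\Core{i}$, where both cases use the threshold $n^{1/q}$ (the former against $\Sets{0}$, the latter against $\Sets{1}$), which must be tracked carefully when chaining the inclusions and the counting.
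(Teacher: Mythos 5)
Your proof is correct and takes essentially the same route as the paper's (much terser) argument: items~\ref{item:CSetsContained} and~\ref{item:CCoreContained} directly from the construction, items~\ref{item:CCoreSize0} and~\ref{item:CCoreSize} from the support-size assumption via exactly the double count you describe, and item~\ref{item:CMAtch} from $\Sets{i+1}=\Sets{i}\setminus\Match{i}$ containing $\Match{j}$. The only cosmetic point is that the counting yields $|\Core{i}|\le \eta q n^{1-i/q}$ rather than a strict inequality when $n^{i/q}$ happens to be an integer, but this distinction is immaterial to every later use of the bound.
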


\begin{proof}
Item \ref{item:CSetsContained} follows immediately from the construction, and implies Item \ref{item:CCoreContained} in turn. Item \ref{item:CCoreSize0} follows from the definition along with the assumption on the support size of $\mu$, and so does Item \ref{item:CCoreSize} using Item \ref{item:CSetsContained}. For Item \ref{item:CMAtch} assume that $i<j$ and note the construction of $\Sets{i+1}$, which makes it disjoint from $\Match{i}$ while containing $\Match{j}$.
\end{proof}

The goal of the following two lemmas is to prove that, for every $i \in [q]$ and $Q\in \Match{i}$, we have that $(Q\setminus \Core{i})\cap \Core{i-1} = \emptyset$.
According to Definition~\ref{def:SCM}, this implies that 
$\Sets{i}$ and $\Match{i}$ satisfy Condition~\ref{cond:MessyInter} of Lemma~\ref{lem:MessyToSetOfPoms}.
At a high level of abstraction the proof starts with the assumption for some $i\in [q]$ there exists $Q\in \Match{i}$ such that $(Q\setminus \Core{i})\cap \Core{i-1} \neq \emptyset$; afterwards it is shown that this $Q$ is in $\Sets{j}$, for some $j\in [i-1]$; this by Definition~\ref{def:SCM} implies that $Q\not\in \Match{i}$ in contradiction to the assumption that $Q\in \Match{i}$.
The following lemma is used in order to restrict the setting to that where $j = i-1$.

\begin{lemma}\label{lem:SCMmatchcap}
For  $i = 0,1,\dots,q$ and every $Q\in \Sets{i}$ we have that $|Q\cap \Core{i}| \leq q-i$.
\end{lemma}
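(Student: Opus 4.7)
The plan is to prove the lemma by induction on $i$, leveraging the nested containments $\Sets{i}\subseteq \Sets{i-1}$ and $\Core{i}\subseteq \Core{i-1}$ established in Observation~\ref{obs:SCMcombi}, together with the definition of $\Match{i-1}$ as the subfamily of $\Sets{i-1}$ meeting $\Core{i-1}$ in exactly $q-(i-1)$ elements.

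For the base case $i=0$, the claim $|Q\cap \Core{0}|\leq q$ is immediate because every $Q\in\Sets{0}$ has $|Q|=q$. For the inductive step, assume the statement for $i-1$, so that every $R\in \Sets{i-1}$ satisfies $|R\cap \Core{i-1}|\leq q-i+1$. Now let $Q\in\Sets{i}$. Since $\Sets{i}=\Sets{i-1}\setminus \Match{i-1}$, we have $Q\in \Sets{i-1}$ and $Q\notin \Match{i-1}$, and by the definition of $\Match{i-1}$ this forces $|Q\cap \Core{i-1}|\neq q-i+1$. Combined with the inductive hypothesis this yields the strict improvement $|Q\cap \Core{i-1}|\leq q-i$. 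Finally, using $\Core{i}\subseteq \Core{i-1}$ we conclude $|Q\cap \Core{i}|\leq |Q\cap \Core{i-1}|\leq q-i$, as required.

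There is essentially no obstacle here: the whole argument is a bookkeeping exercise combining the definitional exclusion $Q\notin \Match{i-1}$ (which rules out equality at the previous level) with the monotonicity properties of the cores and set families. The only place where one must be a bit careful is ensuring that $\Match{i-1}$ is indeed defined before $\Sets{i}$ and $\Core{i}$, which is the order used in Definition~\ref{def:SCM}, so that the step ``$Q\in\Sets{i}$ implies $|Q\cap\Core{i-1}|\neq q-i+1$'' is legitimate; this is guaranteed by item~\ref{item:defiSCMmi} in that definition.
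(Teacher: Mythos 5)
Your proof is correct and is essentially the paper's argument: the paper runs the same induction by contradiction (assuming $|Q\cap\Core{i}|>q-i$, using $\Core{i}\subseteq\Core{i-1}$ and the inductive hypothesis to force $|Q\cap\Core{i-1}|=q-(i-1)$, hence $Q\in\Match{i-1}$, contradicting $Q\in\Sets{i}$), whereas you phrase the identical chain of deductions contrapositively. No substantive difference.
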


\begin{proof}
By definition, $|Q| \leq q$ for every $Q\in \Sets{0}$. Hence, $|Q\cap \Core{0}| \leq q-0  =q$. We proceed by induction over $i$. Assume that the statement of the lemma holds for $i-1\geq 0$. Suppose for the sake of contradiction that there exists $Q \in \Sets{i}$ such that $|Q\cap \Core{i}| > q-i$. 
Since $\Core{i} \subseteq \Core{i-1}$, by Item \ref{item:CCoreContained} of Observation \ref{obs:SCMcombi}, 
this implies that $|Q\cap \Core{i-1}| \geq q-(i-1)$. 
Hence, $|Q\cap \Core{i-1}| = q-(i-1)$, by the induction assumption. 
Therefore, $Q\in \Match{i-1}$, because, by construction, we also have that $Q \in \Sets{i-1}$. 
Consequently, we get the contradiction that $Q\not\in \Sets{i}$, 
since $\Sets{i} = \Sets{i-1}\setminus \Match{i-1}$.
\end{proof}

\begin{lemma}\label{lem:SCMmatchout}
For  $i = 1,\dots,q$ and every $Q\in \Match{i}$ we have that $(Q\setminus \Core{i})\cap \Core{i-1} = \emptyset$.
\end{lemma}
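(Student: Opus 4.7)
The plan is to argue by contradiction: suppose there exists $Q \in \Match{i}$ and an index $j \in (Q \setminus \Core{i}) \cap \Core{i-1}$, and derive that $Q$ would have to belong to $\Match{i-1}$, contradicting the construction that removes $\Match{i-1}$ from $\Sets{i-1}$ when forming $\Sets{i}$.

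Concretely, I would first note the chain of containments I need. Since $Q \in \Match{i} \subseteq \Sets{i}$, by Observation~\ref{obs:SCMcombi}(\ref{item:CSetsContained}) we also have $Q \in \Sets{i-1}$, which in turn means Lemma~\ref{lem:SCMmatchcap} applies at level $i-1$, giving $|Q \cap \Core{i-1}| \leq q - (i-1) = q - i + 1$. Then I would use the definition of $\Match{i}$, which yields $|Q \cap \Core{i}| = q - i$, together with Observation~\ref{obs:SCMcombi}(\ref{item:CCoreContained}) ($\Core{i} \subseteq \Core{i-1}$) to conclude $|Q \cap \Core{i-1}| \geq q - i$.

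The assumed element $j$ is the key: because $j \in Q \cap \Core{i-1}$ but $j \notin \Core{i}$, it lies in $(Q \cap \Core{i-1}) \setminus (Q \cap \Core{i})$, so it contributes one more element beyond $Q \cap \Core{i}$. This pushes the lower bound up to $|Q \cap \Core{i-1}| \geq q - i + 1$, and combined with the upper bound it forces equality $|Q \cap \Core{i-1}| = q - (i-1)$. By the definition of $\Match{i-1}$ this means $Q \in \Match{i-1}$, but then $Q \notin \Sets{i-1} \setminus \Match{i-1} = \Sets{i}$, contradicting $Q \in \Match{i}$.

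I do not expect a significant obstacle here: the proof is essentially a bookkeeping argument that exploits the containment $\Core{i} \subseteq \Core{i-1}$ plus the inductive cap from Lemma~\ref{lem:SCMmatchcap}. The only mild subtlety is making sure the indices match up properly (i.e.\ that $q-i+1$ really equals $q-(i-1)$ so that the forbidden intersection size is exactly attained), and that the single element $j$ suffices to bump the count past the threshold --- but this is immediate once one tracks the inclusion carefully.
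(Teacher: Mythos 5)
Your proof is correct and follows essentially the same route as the paper's: assume a stray index $j\in(Q\setminus \Core{i})\cap\Core{i-1}$, use $\Core{i}\subseteq\Core{i-1}$ and $|Q\cap\Core{i}|=q-i$ to push $|Q\cap\Core{i-1}|$ up to $q-(i-1)$, cap it there via Lemma~\ref{lem:SCMmatchcap}, and conclude $Q\in\Match{i-1}$, contradicting $\Sets{i}=\Sets{i-1}\setminus\Match{i-1}$. The only cosmetic difference is that the paper dresses this up as an induction on $i$ whose hypothesis is never actually invoked, whereas you present it directly.
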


\begin{proof}
We proceed by induction over $i$. The base case is $i=0$ which follows from the definition of $\Match{0}$, even if we set $\Core{-1}=[n]$. Assume that the statement of the lemma holds for $i-1$. Suppose for the sake of contradiction that there exists $Q \in \Match{i}$ such that $|(Q\setminus \Core{i})\cap \Core{i-1}| > 0$. Thus, $|Q\cap \Core{i-1}|=|(Q\setminus \Core{i})\cap \Core{i-1}|+|Q\cap \Core{i}| \geq q-(i-1)$, because $\Core{i} \subseteq \Core{i-1}$, by Item \ref{item:CCoreContained} of Observation \ref{obs:SCMcombi}. 
Therefore, by Lemma \ref{lem:SCMmatchcap}, $|Q\cap \Core{i-1}| = q-(i-1)$. 
Since by construction we also have that $Q \in \Sets{i-1}$ we deduce that 
 $Q\in \Match{i-1}$. 
 Consequently, we get the contradiction that $Q\not\in \Match{i}$, since $\Match{i}\subseteq \Sets{i} = \Sets{i-1}\setminus \Match{i-1}$.
\end{proof}

These last lemmas imply that if $\mu(\mathcal S_0)$ is not large, then a constellation exists.

\begin{lemma}\label{lem:SCMconstellation}
If $\mu(\mathcal S_0)\leq\frac1{q+1}$ then for some $i\in [q]$ the pair $(\Match{i},\Core{i})$ is an $(\eta q,i)$-constellation for $\mu$.
\end{lemma}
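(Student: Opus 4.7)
The plan is to show that the families $\Match{0},\Match{1},\ldots,\Match{q}$ form a disjoint partition of $\supp(\mu)$, so that by pigeonhole some $i\in[q]$ yields $\mu(\Match{i})\geq\frac1{q+1}$, and then to verify the four conditions of Definition~\ref{def:constellation} one by one for the pair $(\Match{i},\Core{i})$.

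First I would establish the partition. By construction, $\Sets{i}=\Sets{i-1}\setminus\Match{i-1}$, so $\Sets{i-1}=\Sets{i}\sqcup\Match{i-1}$ for every $i\geq 1$. Iterating, $\supp(\mu)=\Sets{0}=\Match{0}\sqcup\Match{1}\sqcup\cdots\sqcup\Match{q-1}\sqcup\Sets{q}$. Applying Lemma~\ref{lem:SCMmatchcap} with $i=q$ gives $|Q\cap\Core{q}|\leq 0$ for every $Q\in\Sets{q}$; since $\Match{q}$ consists of exactly those sets in $\Sets{q}$ with $|Q\cap\Core{q}|=q-q=0$, we get $\Sets{q}=\Match{q}$, completing the partition. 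Consequently $\sum_{i=0}^{q}\mu(\Match{i})=1$, so from $\mu(\Match{0})\leq\frac1{q+1}$ we get $\sum_{i=1}^{q}\mu(\Match{i})\geq\frac{q}{q+1}$, and averaging gives some $i\in[q]$ with $\mu(\Match{i})\geq\frac1{q+1}$.

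Next I would fix such an $i$ and check the constellation conditions. Condition~\ref{cond:MessyCore} ($|\Core{i}|<\eta q\cdot n^{1-i/q}$) is exactly Item~\ref{item:CCoreSize} of Observation~\ref{obs:SCMcombi}. Condition~\ref{cond:MessyProb} is the lower bound on $\mu(\Match{i})$ we just derived. Condition~\ref{cond:MessyTails} ($|Q\cap\Core{i}|=q-i$ for every $Q\in\Match{i}$) is literally the definition of $\Match{i}$ in Definition~\ref{def:SCM}.

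The remaining condition, Condition~\ref{cond:MessyInter} (for $i>1$), is the place where the previously developed machinery really gets used. Let $j\in Q\setminus\Core{i}$ for some $Q\in\Match{i}$. By Lemma~\ref{lem:SCMmatchout}, $(Q\setminus\Core{i})\cap\Core{i-1}=\emptyset$, so $j\notin\Core{i-1}$. By the defining property of $\Core{i-1}$ in Definition~\ref{def:SCM} (Item~\ref{item:defiSCMci}, or Item~\ref{item:defiSCMc0} when $i=1$), this means $j$ belongs to fewer than $n^{(i-1)/q}$ sets of $\Sets{i-1}$. Since $\Match{i}\subseteq\Sets{i}\subseteq\Sets{i-1}$ (using Item~\ref{item:CSetsContained} of Observation~\ref{obs:SCMcombi}), $j$ belongs to at most $n^{(i-1)/q}$ sets of $\Match{i}$, which is exactly Condition~\ref{cond:MessyInter}. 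I anticipate this final step to be the main obstacle only in the sense of bookkeeping: one must be careful to note that Lemma~\ref{lem:SCMmatchout} is applicable to \emph{every} $Q\in\Match{i}$ simultaneously, so the resulting bound on the multiplicity of $j$ holds uniformly inside the chosen family.
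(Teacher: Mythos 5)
Your proof is correct and follows essentially the same route as the paper's: disjointness of the $\Match{i}$ plus averaging to find $i\in[q]$ with $\mu(\Match{i})\geq\frac1{q+1}$, then verifying the four conditions of Definition~\ref{def:constellation} via Observation~\ref{obs:SCMcombi}, the definition of $\Match{i}$, and Lemma~\ref{lem:SCMmatchout} combined with the defining threshold of $\Core{i-1}$. Your explicit verification that $\Sets{q}=\Match{q}$ (via Lemma~\ref{lem:SCMmatchcap}), so that the $\Match{i}$ actually cover all of $\supp(\mu)$ and their measures sum to $1$, is a detail the paper leaves implicit and is a welcome addition.
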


\begin{proof}
By the assumption $\mu(\Match{0}) \leq 1/(q+1)$, averaging and Item \ref{item:CMAtch} of Observation \ref{obs:SCMcombi}, there exists $i\in [q]$ such that $\mu(\Match{i}) \geq 1/(q+1)$. Consequently, following from Observation \ref{obs:SCMcombi} Items \ref{item:CCoreSize0},\ref{item:CCoreSize}, the choice of $i$, Definition \ref{def:SCM} Item \ref{item:defiSCMsi}, and Lemma \ref{lem:SCMmatchout} together with Definition \ref{def:SCM} Items \ref{item:defiSCMc0},\ref{item:defiSCMci}, in that order, $\Core{i}$ and $\Match{i}$ satisfy the four conditions of Definition \ref{def:constellation} and thus form an $(\eta q,i)$-constellation.
\end{proof}

We now prove the $1$-sided test conversion result.

\begin{proof}[Proof of Theorem~\ref{thm:1sided}]
Given a distribution $\mu$ which corresponds to a $1$-sided $(\epsilon/2,1/2(q+1),q,4(q+1)^2\log(|\Xi|)n)$-test for a property over $\Xi^n$, where $n>(24q(q+1)^2(\log(|\Xi|))^2/\epsilon)^q$, we show for $\alpha=15\ln{|\Xi|}\cdot q(q+1)^2/\epsilon$ that the $\alpha\cdot n^{-1/q^2}$-sampling distribution corresponds to a $1$-sided $(\epsilon,1/2)$-test for the same property.

From definition \ref{def:SCM}, every set in $\Match{0}$ is in particular a subset of $\Core{0}$. Hence, the union of all the sets in $\Match{0}$ is also a subset of $\Core{0}$ and therefore $|\bigcup_{Q\in \Match{0}}Q| \leq |\Core{0}|$. Thus, by Item \ref{item:CCoreSize0} of Observation \ref{obs:SCMcombi}, $|\bigcup_{Q\in \Match{0}}Q| \leq |\Core{0}| < 4q(q+1)^2\log(|\Xi|) n^{1-1/q} < \epsilon n /2$, where the last inequality follows from  $n > (24q(q+1)^2(\log(|\Xi|))^2/\epsilon)^q$. Therefore, by Lemma \ref{lem:SmallSupport}, $\mu(\Match{0}) \leq 1/(q+1)$, and so by Lemma \ref{lem:SCMconstellation} there exists an $(4q(q+1)^2\log(|\Xi|),i)$-constellation for some $i\in Q$.
Therefore, by Lemma \ref{lem:MessyToSetOfPoms}, there exists a \revealing\ set of $i$-\Pom s for $w$.
Consequently, by Lemma~\ref{lem:PompomApp} the theorem follows.
\end{proof}

\section{Probabilistic formulas and test combinatorialization}
Here we take a non-adaptive $2$-sided test and make its structure more malleable to combinatorial arguments, with the main feature being that the new query distribution will be uniform over its support. At first, we define a structure that can generally describe tests; we use this formulation to make the following arguments clearer and more succinct, which will also present them in their fullest possible generality.

\begin{definition}[probabilistic constraints and formulas]\label{def:probfor}
A {\em probabilistic $q$-constraint} (over an alphabet $\Xi$) is a pair $C=(Q,S)$ where $Q\subseteq [n]$ is a {\em constraint set}, also called a {\em query set}, of size $q$, and $S$ is a {\em satisfaction function} from $\Xi^{|Q|}$ to the real interval $[0,1]$.

A {\em probabilistic $q$-formula} $P=(\mathcal F,\mu)$ is a set $\mathcal F$ of $q$-constraints, all with distinct constraint sets, along with a probability distribution $\mu$ over $\mathcal F$. We call it a $(q,k)$-formula if additionally $|\supp(\mu)|\leq k$, in which case we can assume that $|\mathcal F|\leq k$.

When we drop the restriction on the sizes of the query sets of the constraints (even the restriction that they are all of the same size) then we call $P$ a {\em probabilistic formula}.

Given a word $w\in\Xi^n$ and a probabilistic formula $P$, the {\em satisfaction} of $P$ by $w$ is the average of the random variable that results from picking a constraint $(Q,S)\in \mathcal F$ according to $\mu$ and obtaining the value $S(w_Q)$. $P$ is said to be {\em $\delta$-sure for $w$} if its satisfaction by $w$ is either at least $1-\delta$ or at most $\delta$.
\end{definition}

The requirement for all sets corresponding to constraints being distinct allows us (given a particular formula $\mathcal F$) to identify the distribution $\mu$ with the corresponding distribution over subsets of $[n]$ only. This we will do throughout the sequel, but first let us justify this requirement.

\begin{lemma}\label{lem:unionize}
The requirement that the members of $P$ have distinct query sets is without loss of generality.
\end{lemma}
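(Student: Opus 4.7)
The plan is to build, from any probabilistic formula $P=(\mathcal F,\mu)$ that may contain several constraints with the same query set, an equivalent probabilistic formula $P'=(\mathcal F',\mu')$ in which all query sets are distinct, preserving the satisfaction value on every word $w\in\Xi^n$ (and hence also preserving $\delta$-sureness). The idea is simply to merge constraints sharing a query set into a single constraint whose satisfaction function is the corresponding convex combination of the originals.

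Concretely, I would group the constraints of $\mathcal F$ by their query set. For each query set $Q$ that appears, let $(Q,S_1),\ldots,(Q,S_k)$ be the constraints of $\mathcal F$ with that query set, and let $p_1,\ldots,p_k$ be their $\mu$-probabilities, with $p_Q=\sum_{j=1}^k p_j$. Replace this group by the single constraint $(Q,S_Q)$, where
$$S_Q(\sigma)=\frac{1}{p_Q}\sum_{j=1}^k p_j\,S_j(\sigma)\qquad(\sigma\in\Xi^{|Q|}),$$
and let $\mu'$ assign probability $p_Q$ to $(Q,S_Q)$. Since each $S_j$ takes values in $[0,1]$ and the weights $p_j/p_Q$ form a convex combination, $S_Q$ also maps into $[0,1]$, so $(Q,S_Q)$ is a legitimate $q$-constraint (and the query-set size is unchanged, so if $P$ is a $(q,k)$-formula then so is $P'$; in fact the support can only shrink).

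To finish, I would verify that for every $w\in\Xi^n$ the satisfaction of $P'$ by $w$ equals that of $P$: by linearity of expectation,
$$\sum_{Q}p_Q\,S_Q(w_Q)=\sum_{Q}\sum_{j=1}^{k_Q}p_j\,S_j(w_Q)=\mathbb E_{(Q,S)\sim\mu}[S(w_Q)].$$
Consequently the satisfaction value is identical on every input, so $P'$ is $\delta$-sure for exactly the same words as $P$. There is no real obstacle here beyond checking that the convex-combination operation is well-defined and preserves the $[0,1]$ codomain; the lemma is essentially a bookkeeping statement justifying the identification of $\mu$ with a distribution over subsets of $[n]$ in the rest of the paper.
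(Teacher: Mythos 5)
Your proposal is correct and matches the paper's argument: both merge constraints sharing a query set into a single constraint whose satisfaction function is the $\mu$-weighted convex combination of the originals, which preserves the satisfaction value on every word. The only cosmetic difference is that you merge each group in one step while the paper merges pairs iteratively until none remain.
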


\begin{proof}
If $C_1=(Q,S_1)$ and $C_2=(Q,S_2)$ are two constraints in a formula $P=(\mathcal F,\mu)$ (that for now does not satisfy the distinct set requirement), then we define $\mathcal F'$ by replacing them with $C=(Q,S)$ where $S=(\mu(C_1)\cdot S_1+\mu(C_2)\cdot S_2)/(\mu(C_1)+\mu(C_2))$, and define the corresponding $\mu'$ by setting $\mu'(C)=\mu(C_1)+\mu(C_2)$. This preserves satisfaction values over all words $w$. We can continue doing this until there are no pairs left of constraints sharing the same query set.
\end{proof}

We shall henceforth abuse notation, and indeed refer to $\mu$ both as a distribution over $2^{[n]}$ and as a distribution over $\mathcal F$. Also, we shall make liberal use of the assumption (without loss of generality) that the support of $\mu$ is the entire $\mathcal F$ (otherwise we replace it with the appropriate subset).

A non-adaptive $2$-sided test or partial test can be described as follows.

\begin{definition}[alternative definition of non-adaptive tests]
Given two properties $L'\subseteq L\subseteq \Xi^n$, a {\em non-adaptive} $2$-sided partial $(\epsilon,\delta)$-test for $(L',L)$ is a probabilistic formula whose satisfaction over any $w\in L'$ is at least $1-\delta$, while its satisfaction for any $w\in \Xi^n$ that is $\epsilon$-far from $L$ is at most $\delta$.

If $L'=L$ then we just call it a $2$-sided $(\epsilon,\delta)$-test for $L$.

If the test uses a $q$-formula then we may also call it an $(\epsilon,\delta,q)$-test, and if it uses a $(q,k)$-formula then we may call it an $(\epsilon,\delta,q,k)$-test.
\end{definition}

To convert a non-adaptive test to this definition, we take $\mu$ to be the query distribution corresponding to the test, and set each pair $(Q,S)$ so that $S$ will describe the acceptance probability of the test given each possible outcome of its queries to $Q$.

We need the following technicality for pairs $(L',L)$. It is safe to restrict our discussion to such pairs because otherwise there exists a corresponding trivial partial test.

\begin{definition}
Given two properties $L'\subseteq L\subseteq \Xi^n$, we say that the pair $(L',L)$ is {\em $\epsilon$-nontrivial} if there exist some word in $L'$ and some word $\epsilon$-far from $L$.
\end{definition}

The main purpose of this section is to show that all tests can be made to obey certain restrictions, at some reasonable cost for their parameters. To formulate the main lemma we need to define what these restrictions may be.

\begin{definition}[restrictions on formulas and tests]
A probabilistic formula $P$ is said to be {\em zero-one} if all its constraints have the range $\{0,1\}$ (instead of the whole interval).

$P$ is said to be {\em $\beta$-equitable} if for every two constraints $C_1$ and $C_2$ in the support of the corresponding distribution $\mu$, we have $\mu(C_1)\leq \beta\mu(C_2)$. In particular, for a $1$-equitable formula the distribution $\mu$ is uniform over its support.

A $q$-formula $P$ is said to be {\em combinatorial} if it is a zero-one and equitable.

We use the same adjectives for tests. For example a test is called {\em combinatorial} if its corresponding formula is combinatorial.
\end{definition}

We will prove the main combinatorialization lemma of this section following a sequence of steps. The easiest of these steps is making the corresponding formula zero-one.

\begin{lemma}\label{lem:zeroone}
A formula $P$ can be made into a a zero-one formula $P'$ without any change in its other parameters (including also support size and equitability), so that for any input for which $P$ was $\delta$-sure about, $P'$ will be $2\delta$-sure about and in the same direction.
\end{lemma}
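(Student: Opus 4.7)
The plan is to define $P'$ by replacing each constraint's satisfaction function $S$ with its pointwise threshold at $1/2$, leaving the underlying distribution $\mu$ and all constraint sets untouched. Concretely, for each $(Q,S)\in\mathcal F$, I would define the corresponding constraint of $P'$ to be $(Q,S')$ where $S'(\sigma)=1$ if $S(\sigma)\geq 1/2$ and $S'(\sigma)=0$ otherwise. Because $\mu$ and the collection of query sets in $\mathcal F$ are unchanged, the support size is preserved, the equitability parameter is inherited verbatim, and the distinctness-of-query-sets requirement needed for Lemma~\ref{lem:unionize} still holds.

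For the analysis, fix an arbitrary $w\in\Xi^n$ and let $X=S(w_Q)$ be the random variable obtained by sampling $(Q,S)\sim\mu$ and reading off the satisfaction at $w_Q$. The satisfaction of $P$ on $w$ is $\mathbb E[X]$, and the satisfaction of $P'$ on $w$ is exactly $\Pr_\mu[S(w_Q)\geq 1/2]$. The entire argument is then a pair of applications of Markov's inequality to this single random variable. In the first direction, assume $\mathbb E[X]\geq 1-\delta$; then the nonnegative random variable $1-X$ has expectation at most $\delta$, so $\Pr[1-X>1/2]\leq 2\delta$ and hence the satisfaction of $P'$ on $w$ is at least $\Pr[X\geq 1/2]\geq 1-2\delta$. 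In the opposite direction, assume $\mathbb E[X]\leq\delta$; then Markov applied directly to $X$ gives $\Pr[X\geq 1/2]\leq 2\delta$, i.e., the satisfaction of $P'$ on $w$ is at most $2\delta$. In either case the sureness direction is preserved and the parameter is at worst doubled.

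There is no real obstacle here; the only minor care-point is the tie-breaking convention at $S(\sigma)=1/2$, which is harmless because we can consistently use weak inequalities on both sides of the Markov estimate. Everything else follows from the fact that the transformation $S\mapsto S'$ acts independently on each constraint without touching $\mu$, so all the structural parameters referenced in the lemma statement--support size, equitability, and the combinatorial identity of the query sets--are preserved by construction.
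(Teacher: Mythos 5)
Your proposal is correct and essentially identical to the paper's proof: the same pointwise thresholding of $S$ at $1/2$ with $\mu$ left untouched, and the same Markov-type bound (the paper phrases it as $\mu(\mathcal F_2)\leq 2\delta$ for the set of constraints with satisfaction at least $1/2$, handling the other direction by symmetry). No issues.
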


\begin{proof}
For every constraint $C=(Q,S)$ in $\supp(\mu)$, we replace it with $C'=(Q,S')$, where $S'$ is defined such that $S'(v)=0$ if $S(v)<\frac12$, and otherwise $S'(v)=1$. We leave $\mu$ ``unmodified'', that is the new $\mu'$ is defined by having $\mu'(C')=\mu(C)$, in particular remaining identical as a distribution over query sets.

We present here the analysis for the case where the satisfaction of $P$ by $w\in\Xi^n$ is at most $\delta$. The case where it is at least $1-\delta$ is symmetric. Given such a $w$, we set $\mathcal F=\supp(\mu)$, and let $\mathcal F_2$ be the set of clauses whose satisfaction by $w$ is at least $1/2$. Clearly $\mu(\mathcal F_2)\leq 2\delta$. The satisfaction of $P'$ by $w$ is now bounded by $0\cdot\mu(\mathcal F\setminus\mathcal F_2)+1\cdot\mu(\mathcal F_2)\leq 2\delta$.
\end{proof}

In the sequel we will need to analyze formulas conditioned on subsets of the original constraint set.

\begin{definition}
Given a probabilistic formula $P=(\mathcal F,\mu)$ and $\emptyset\neq \mathcal F'\subseteq \mathcal F$, the {\em $\mathcal F'$-conditioned} formula is $P'=(\mathcal F',\mu')$ where $\mu'$ is $\mu$ conditioned on the event that a member from $\mathcal F'$ was chosen.
\end{definition}

The following fact about conditioned formulas is trivial.

\begin{observation}\label{obs:condsure}
Given $P=(\mathcal F,\mu)$, if $\mathcal F'\subseteq \mathcal F$ satisfies $\mu(\mathcal F')\geq\eta$, then for every input for which $P$ was $\delta$-sure about, the conditioned formula $P'$ will be $\delta/\eta$-sure about and in the same direction.
\end{observation}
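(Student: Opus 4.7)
The plan is to unfold the definitions and handle the two directions of ``$\delta$-sure'' separately; the observation really is a one-line averaging argument, so the proof sketch is short.

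First, I would write the satisfaction of $P$ on input $w$ as the sum $s_P(w)=\sum_{C=(Q,S)\in\mathcal F}\mu(C)\cdot S(w_Q)$, and the satisfaction of the $\mathcal F'$-conditioned formula as $s_{P'}(w)=\frac1{\mu(\mathcal F')}\sum_{C=(Q,S)\in\mathcal F'}\mu(C)\cdot S(w_Q)$. The whole point is then to compare a partial sum (over $\mathcal F'$) with the full sum, using that $S$ is $[0,1]$-valued.

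In the ``far'' case, where $s_P(w)\le\delta$, every term in the defining sum is non-negative, so restricting the sum to $\mathcal F'$ can only decrease it: $\sum_{C\in\mathcal F'}\mu(C)\cdot S(w_Q)\le s_P(w)\le\delta$. Dividing by $\mu(\mathcal F')\ge\eta$ yields $s_{P'}(w)\le\delta/\eta$, as required, and the direction is preserved.

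In the ``close'' case, where $s_P(w)\ge 1-\delta$, the terms outside $\mathcal F'$ contribute at most $\mu(\mathcal F\setminus\mathcal F')=1-\mu(\mathcal F')$ (since each $S$-value is at most $1$). Subtracting, the restricted sum satisfies $\sum_{C\in\mathcal F'}\mu(C)\cdot S(w_Q)\ge(1-\delta)-(1-\mu(\mathcal F'))=\mu(\mathcal F')-\delta$, and dividing by $\mu(\mathcal F')\ge\eta$ gives $s_{P'}(w)\ge 1-\delta/\mu(\mathcal F')\ge 1-\delta/\eta$, again preserving the direction. There is no real obstacle here; the only thing to be careful about is noting that both bounds hinge on $S$ ranging in $[0,1]$, which is part of Definition \ref{def:probfor}, and that $\mu(\mathcal F')>0$ so the conditioned distribution is well-defined.
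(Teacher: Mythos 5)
Your proof is correct and follows essentially the same one-line averaging argument as the paper: bound the partial sum over $\mathcal F'$ and divide by $\mu(\mathcal F')\ge\eta$, where the paper handles the ``at least $1-\delta$'' case by noting it is symmetric (replacing $S$ with $1-S$) while you spell out the same arithmetic explicitly. No issues.
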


\begin{proof}
Again we analyze the case where the satisfaction of $P$ by $w\in\Xi^n$ is at most $\delta$, as the case where it is at least $1-\delta$ is symmetric. For such $w$ we write:
$$\sum_{(Q,S)\in\mathcal F'}\mu'(Q)S(w_Q)=\sum_{(Q,S)\in\mathcal F'}\mu(Q)S(w_Q)/\mu(\mathcal F')\leq \delta/\eta$$
where in the symmetric case we refer to $(1-S(w_Q))$ instead of $S(w_Q)$.
\end{proof}

We next prove a lemma (which like most preceding ones holds also for formulas which are not tests), that allows us to move from $\beta$-equitable formulas all the way to $1$-equitable ones. For making the transition cost not too high, we prove first a ``quantization'' step.

\begin{lemma}\label{lem:equiquant}
A $\beta$-equitable formula $P=(\mathcal F,\mu)$ can be made into a formula $P'=(\mathcal F,\mu')$ for which $\mu'$ has at most $\log(2\beta)$ possible values, so that for any input for which $P$ was $\delta$-sure about, $P'$ will be $2\delta$-sure about and in the same direction. Moreover, since $P'$ has the same $\mathcal F$ and the same support, it preserves the original support size, query size, and zero-one property (if it existed) of $P$.
\end{lemma}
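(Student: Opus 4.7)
The plan is to discretize the probabilities $\mu(C)$ using dyadic buckets anchored at the minimum probability, then renormalize. Let $m=\min_{C\in\supp(\mu)}\mu(C)$. By $\beta$-equitability every $\mu(C)/m$ lies in $[1,\beta]$, so the integer $k(C)=\lfloor\log(\mu(C)/m)\rfloor$ ranges over $\{0,1,\dots,\lfloor\log\beta\rfloor\}$. Define $\tilde\mu(C)=2^{k(C)}m$, so that $\tilde\mu(C)\le\mu(C)\le 2\tilde\mu(C)$ for every $C\in\mathcal F$. Let $Z=\sum_{C\in\mathcal F}\tilde\mu(C)$ and put $\mu'(C)=\tilde\mu(C)/Z$. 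Since $\tilde\mu(C)>0$ whenever $\mu(C)>0$ and the query sets and satisfaction functions in $\mathcal F$ are untouched, the new formula $P'=(\mathcal F,\mu')$ has the same support, the same $\mathcal F$, the same query size, and inherits the zero-one property (if any).

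Next I would count the distinct values of $\mu'$. By construction $\mu'(C)=2^{k(C)}m/Z$, and $k(C)$ takes at most $\lfloor\log\beta\rfloor+1\le\log(2\beta)$ values, so $\mu'$ takes at most $\log(2\beta)$ distinct values, as required. The sums satisfy $Z\le\sum_C\mu(C)=1$ and $Z\ge\tfrac12\sum_C\mu(C)=\tfrac12$, so $Z\in[\tfrac12,1]$.

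For the satisfaction analysis, write each constraint as $C=(Q,S)$. If $P$ has satisfaction at most $\delta$ on some word $w$, then since $\tilde\mu(C)\le\mu(C)$ and $S(w_Q)\in[0,1]$,
$$\sum_{C\in\mathcal F}\mu'(C)S(w_Q)=\frac1Z\sum_{C\in\mathcal F}\tilde\mu(C)S(w_Q)\le\frac1Z\sum_{C\in\mathcal F}\mu(C)S(w_Q)\le\frac{\delta}{Z}\le 2\delta.$$
Symmetrically, if $P$ has satisfaction at least $1-\delta$ on $w$, apply the same estimate to $1-S(w_Q)$: $\sum_C\mu'(C)(1-S(w_Q))\le\delta/Z\le 2\delta$, which gives satisfaction of $P'$ at least $1-2\delta$. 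Thus $P'$ is $2\delta$-sure on $w$ in the same direction as $P$.

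The only mild subtlety is choosing to round \emph{down} rather than up: rounding up could let $Z$ exceed $1$ by an arbitrary amount relative to the original mass, which would spoil the $2\delta$ factor, whereas rounding down guarantees $Z\ge\tfrac12$ and keeps the distortion one-sided. No other step is delicate; everything else is bookkeeping.
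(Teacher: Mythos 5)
Your proof is correct and follows essentially the same route as the paper: dyadic quantization of the probabilities (losing at most a factor of $2$) followed by renormalization, with the $2\delta$ bound coming from the factor-$2$ distortion and the bound $Z$ on the normalizing constant. The only difference is that the paper rounds each $\mu(C)$ \emph{up} to the nearest power of $2$ --- which, contrary to your closing remark, also works, since then $\tilde\mu(\mathcal F)\geq 1$ and normalizing only decreases values, giving $\mu'(C)\leq\tilde\mu(C)\leq 2\mu(C)$ directly.
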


\begin{proof}
We first define $\tilde{\mu}$ by setting for every $C\in\mathcal F$ the value $\tilde{\mu}(C)$ to be $2^{-k_C}$, where $k_C$ is the largest integer for which $2^{-k_C}\geq\mu(C)$. Clearly for every $C$ we have $\mu(C)\leq\tilde{\mu}(C)\leq 2\mu(C)$, and clearly $\tilde{\mu}$ has at most $\log(2\beta)$ possible values. However, it is not a probability measure, because it may be that $\tilde{\mu}(\mathcal F)>1$. We thus set $\mu'(C)=\tilde{\mu}(C)/\tilde{\mu}(\mathcal F)$ for every $C\in\mathcal F$.

Finally, if the satisfaction of $P$ by $w$ is at most $\delta$, we write:
$$\sum_{(Q,S)\in\mathcal F}\mu'(Q)S(w_Q)\leq \sum_{(Q,S)\in\mathcal F}\tilde{\mu}(Q)S(w_Q)\leq 2\cdot\!\!\!\!\sum_{(Q,S)\in\mathcal F}\mu(Q)S(w_Q)\leq 2\delta$$
where again the case of the satisfaction being at least $1-\delta$ is symmetric.
\end{proof}

\begin{lemma}\label{lem:equifin}
A $\beta$-equitable formula $P=(\mathcal F,\mu)$ can be made into a $1$-equitable formula $P'=(\mathcal F',\mu')$, so that for any input for which $P$ was $\delta$-sure about, $P'$ will be $2\delta\log(2\beta)$-sure about and in the same direction. Moreover, $\mathcal F'\subseteq \mathcal F$, so $P'$ preserves the support size bound, query size bound, and possible zero-one property of $P$.
\end{lemma}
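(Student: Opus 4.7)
The plan is to combine the quantization step of Lemma~\ref{lem:equiquant} with a pigeonhole-style conditioning argument. Concretely, I would first apply Lemma~\ref{lem:equiquant} to $P$ to obtain a formula $P''=(\mathcal F,\mu'')$ in which $\mu''$ takes at most $\log(2\beta)$ distinct values, at the cost of turning any original $\delta$-sureness into $2\delta$-sureness. This step already preserves $\mathcal F$, the support size bound, the query size bound, and the zero-one property, and it gives the crucial structural feature we need: the support of $\mu''$ partitions into at most $\log(2\beta)$ classes, each of which is a set of constraints all assigned the same $\mu''$-probability.

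Next, among these at most $\log(2\beta)$ probability classes, I would pick one with the largest total $\mu''$-mass. By averaging, the class $\mathcal F'\subseteq \mathcal F$ chosen this way satisfies $\mu''(\mathcal F')\geq 1/\log(2\beta)$. Define $P'=(\mathcal F',\mu')$ to be the $\mathcal F'$-conditioned formula of $P''$. Since all $\mu''$-values on $\mathcal F'$ are equal, the conditional distribution $\mu'$ is uniform on $\mathcal F'$, so $P'$ is $1$-equitable. Moreover, $\mathcal F'\subseteq \mathcal F$, so $P'$ inherits the support size bound, query size bound, and (if present) the zero-one property of $P$.

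To verify the sureness guarantee, I would invoke Observation~\ref{obs:condsure} with $\eta=1/\log(2\beta)$: for any input $w$ for which $P$ was $\delta$-sure, $P''$ is $2\delta$-sure in the same direction by Lemma~\ref{lem:equiquant}, and hence the conditioned formula $P'$ is $(2\delta)/\eta=2\delta\log(2\beta)$-sure in the same direction, as required.

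The only non-routine point is the observation that choosing the same heaviest probability class $\mathcal F'$ works simultaneously for all inputs $w$ — we are not allowed to pick different conditioning sets for different inputs. This works because $\mathcal F'$ is selected based solely on $\mu''$, independently of any $w$, and Observation~\ref{obs:condsure} then translates $\mu''(\mathcal F')\geq 1/\log(2\beta)$ into a uniform bound on the sureness loss across all inputs. I do not anticipate further obstacles.
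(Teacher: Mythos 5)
Your proposal is correct and follows essentially the same route as the paper: quantize via Lemma~\ref{lem:equiquant}, pick by averaging a probability class $\mathcal F'$ with $\mu''(\mathcal F')\geq 1/\log(2\beta)$, condition on it to get a uniform (hence $1$-equitable) distribution, and apply Observation~\ref{obs:condsure} to account for the sureness loss. No gaps.
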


\begin{proof}
We first use Lemma \ref{lem:equiquant} to move from $P$ to $P''=(\mathcal F,\mu'')$, where $\mu''$ has at most $\log(2\beta)$ possible values, and any input for which $P$ was $\delta$-sure about, $P''$ is $2\delta$-sure about. Now there must be some $\eta\in (0,1]$ so that $\mathcal F_{\eta}=\{C\in\mathcal F:\mu''(C)=\eta\}$ satisfies $\mu''(F_{\eta})\geq 1/\log(2\beta)$.

We set $\mathcal F'=\mathcal F_{\eta}$ and make $P'$ the formula of $P''$ conditioned on $\mathcal F'$. We finalize the proof by appealing to Observation \ref{obs:condsure}.
\end{proof}

Specifically for (partial) tests, we next show some correlation between subsets ``covering'' few indexes and probability. The following lemma will also be used again when constructing sets of pompoms to prove the main conversion result from $2$-sided tests to sampling tests.

\begin{lemma}\label{lem:lowprob}
If a formula $P=(\mathcal F,\mu)$ corresponds to an $(\epsilon/2,\delta)$-test for $(L',L)$ which is $\epsilon$-nontrivial, and $\mathcal F'\subseteq\mathcal F$ is such that the union of its corresponding query sets occupies at most $\epsilon n/2$ indexes from $[n]$, then $\mu(\mathcal F')\leq 2\delta$.
\end{lemma}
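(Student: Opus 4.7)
The proof should parallel the hybrid-input argument of Lemma~\ref{lem:SmallSupport}, adapted from the $1$-sided witness framework to the $2$-sided satisfaction framework. My plan is to use $\epsilon$-nontriviality to fix a word $u\in L'$ (on which $P$ has satisfaction at least $1-\delta$) and a word $w\in\Xi^n$ that is $\epsilon$-far from $L$, and then to splice them into a hybrid $v$ that agrees with $u$ on $T=\bigcup_{(Q,S)\in\mathcal F'}Q$ and with $w$ on $[n]\setminus T$. Since $|T|\leq \epsilon n/2$, the triangle inequality yields that $v$ is $\epsilon/2$-far from $L$, so the test's guarantee forces the satisfaction of $P$ by $v$ to be at most $\delta$.

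The decisive observation is that every query set appearing in $\mathcal F'$ is contained in $T$ by construction, so $v_Q=u_Q$ for each $(Q,S)\in\mathcal F'$, and hence the contributions of $\mathcal F'$ to the satisfactions of $u$ and $v$ are identical. Setting
$$A=\sum_{(Q,S)\in\mathcal F'}\mu(Q)S(u_Q)=\sum_{(Q,S)\in\mathcal F'}\mu(Q)S(v_Q),$$
and letting $B_u,B_v$ be the analogous sums over $\mathcal F\setminus\mathcal F'$, I get $A+B_u\geq 1-\delta$ and $A+B_v\leq\delta$, so $B_u-B_v\geq 1-2\delta$. On the other hand $0\leq B_u\leq \mu(\mathcal F\setminus\mathcal F')=1-\mu(\mathcal F')$ since every $S$-value lies in $[0,1]$, so subtracting yields $1-\mu(\mathcal F')\geq B_u\geq 1-2\delta$, i.e.\ $\mu(\mathcal F')\leq 2\delta$.

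I do not foresee a real obstacle here; the argument is essentially the Lemma~\ref{lem:SmallSupport} trick transplanted to the probabilistic-formula language, and it works for an arbitrary $2$-sided partial test without any assumption of the formula being zero-one, equitable, or combinatorial. The only subtlety to watch is that $S$ takes values in the full interval $[0,1]$, so the bound on $B_u$ must come from the total $\mu$-mass of $\mathcal F\setminus\mathcal F'$ rather than from an integer count of satisfying assignments, and one must explicitly invoke the $\epsilon$-nontriviality of $(L',L)$ to ensure that both $u$ and $w$ actually exist.
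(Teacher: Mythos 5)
Your proof is correct and is essentially the paper's own argument: the same hybrid word $v$ agreeing with $u\in L'$ on $T$ and with an $\epsilon$-far $w$ outside $T$, the same observation that the $\mathcal F'$-contributions to the satisfactions of $u$ and $v$ coincide, and an algebraically equivalent way of combining the two satisfaction bounds to get $\mu(\mathcal F')\leq 2\delta$. No gaps.
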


\begin{proof}
Let $T = \bigcup_{(Q,S)\in\mathcal F'}Q$, $u\in\Xi^n$ be a word in $L'$, $w\in \Xi^n$ be $\epsilon$-far from $L$, and $v$ be such that $v_T = u_T$ and $v_{[n]\setminus T} = w_{[n]\setminus T}$. 
By the triangle inequality, $v$ is $\epsilon/2$-far from $L$, and so
$$\sum_{(Q,S)\in\mathcal F'}\mu(Q)S(u_Q) = \sum_{(Q,S)\in\mathcal F'}\mu(Q)S(v_Q) \leq \delta$$
since this bounds the satisfaction of $P$ by $v$.

On the other hand, the satisfaction of $P$ by $u$ is at least $1-\delta$, and so we obtain
$$1-\delta \leq \sum_{(Q,S)\in\mathcal F'}\mu(Q)S(u_Q)+\sum_{(Q,S)\in\mathcal F\setminus\mathcal F'}\mu(Q)S(u_Q) \leq \delta+(1-\mu(\mathcal F'))$$
which necessarily means that $\mu(\mathcal F')\leq 2\delta$.
\end{proof}

Using the above lemma we can show that tests with a small enough support size can be made into equitable ones.

\begin{lemma}\label{lem:linequi}
For $\delta<\frac18$, an $(\epsilon/2,\delta,q,\alpha n)$-test for $(L',L)$ which is $\epsilon$-nontrivial can be made into a $\beta$-equitable $(\epsilon/2,2\delta,q,\alpha n)$-test for $(L',L)$, for $\beta=8q\alpha/\epsilon$. This transformation also preserves the zero-one property if it existed.
\end{lemma}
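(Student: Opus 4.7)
The plan is to prune the formula in two passes --- first discarding constraints whose probability is too high, then those whose probability is too low --- and then condition on what remains. The two thresholds will be chosen so that their ratio is exactly $\beta$, which will automatically give the desired equitability, and so that the discarded mass is small enough for Observation \ref{obs:condsure} to take the sureness only from $\delta$ up to $2\delta$. I expect the high end to be the main obstacle, because the trivial counting bound that will work at the low end does not work there; instead I will appeal to Lemma \ref{lem:lowprob}, bounding the mass of the high-probability constraints via the fact that there cannot be many of them and so their coverage is small.

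First I would set $\tau_{\mathrm{high}}=2q/(\epsilon n)$ and let $\mathcal F_{\mathrm{high}}=\{C\in\mathcal F:\mu(C)>\tau_{\mathrm{high}}\}$. Because every member has $\mu(C)>\tau_{\mathrm{high}}$ and total mass is at most $1$, $|\mathcal F_{\mathrm{high}}|<1/\tau_{\mathrm{high}}=\epsilon n/(2q)$, so its query-set union has fewer than $\epsilon n/2$ indices, and Lemma \ref{lem:lowprob} then gives $\mu(\mathcal F_{\mathrm{high}})\leq 2\delta$. Next I would set $\tau_{\mathrm{low}}=\tau_{\mathrm{high}}/\beta=1/(4\alpha n)$ and $\mathcal F_{\mathrm{low}}=\{C\in\mathcal F:\mu(C)<\tau_{\mathrm{low}}\}$, and note simply that $|\mathcal F_{\mathrm{low}}|\leq\alpha n$ forces $\mu(\mathcal F_{\mathrm{low}})<\alpha n\cdot\tau_{\mathrm{low}}=1/4$ with no appeal to Lemma \ref{lem:lowprob}.

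Setting $\mathcal F'=\mathcal F\setminus(\mathcal F_{\mathrm{high}}\cup\mathcal F_{\mathrm{low}})$, the total discarded mass is strictly less than $2\delta+1/4<1/2$ (using $\delta<1/8$), so $\mu(\mathcal F')>1/2$. Let $P'=(\mathcal F',\mu')$ be the $\mathcal F'$-conditioned formula: Observation \ref{obs:condsure} with $\eta=1/2$ promotes the $\delta$-sureness of $P$ to $2\delta$-sureness of $P'$ on all the same inputs in the same direction, so $P'$ is an $(\epsilon/2,2\delta,q)$-test for $(L',L)$. Every $C\in\mathcal F'$ has $\tau_{\mathrm{low}}\leq\mu(C)\leq\tau_{\mathrm{high}}$, and conditioning preserves ratios, so any two new probabilities have ratio at most $\tau_{\mathrm{high}}/\tau_{\mathrm{low}}=8q\alpha/\epsilon=\beta$, giving $\beta$-equitability. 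Finally, $|\supp(\mu')|\leq|\supp(\mu)|\leq\alpha n$ and the satisfaction functions are untouched, so both the support size bound and the zero-one property (if present) carry over.
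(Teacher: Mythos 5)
Your proof is correct and follows essentially the same route as the paper: the same two thresholds ($2q/\epsilon n$ above, $1/(4\alpha n)$ below), the same appeal to Lemma \ref{lem:lowprob} to bound the mass of the heavy constraints via the size of their query-set union, the same trivial counting bound for the light ones, and the same conditioning step via Observation \ref{obs:condsure} with $\eta=\frac12$. The only differences are immaterial boundary conventions (strict versus non-strict inequalities at the thresholds).
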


\begin{proof}
Let $P=(\mathcal F,\mu)$ be the formula corresponding to the test. First we let $\mathcal F_0=\{C\in\mathcal F:\mu(C)\leq 1/4\alpha n\}$. Clearly $\mu(\mathcal F_0)\leq\frac14$. Now let $\mathcal F_1=\{C\in\mathcal F:\mu(C)\geq 2q/\epsilon n\}$. Clearly $|\mathcal F_1|\leq \epsilon n/2q$. Hence $|\bigcup_{(Q,S)\in\mathcal F_1}Q|\leq\epsilon n/2$, and so by Lemma \ref{lem:lowprob} we have $\mu(\mathcal F_1)\leq 2\delta\leq\frac14$.

Setting $\mathcal F'=\mathcal F\setminus (\mathcal F_0\cup\mathcal F_1)$, we get $\mu(\mathcal F')\geq \frac12$. Setting $P'$ to be the conditioning of $P$ to $\mathcal F'$, we obtain by Observation \ref{obs:condsure} that it is an $(\epsilon/2,2\delta,q,\alpha n)$-test for $(L',L)$. Moreover, since for every $C\in\mathcal F'$ we have $1/4\alpha n<\mu(C)<2q/\epsilon n$, we get that the resulting test is $\beta$-equitable for $\beta=8q\alpha/\epsilon$.
\end{proof}

We now have nearly all the ingredients we need. The final one is a is a way to convert a general test to one whose support size is linear in $n$, which the following lemma provides even for formulas that are not necessarily tests.

\begin{lemma}\label{lem:linearize}
Any $q$-formula $P=(\mathcal F,\mu)$ can be made into a $(q,\alpha n)$-formula $P'$ for $\alpha=\delta^{-2}\log(|\Xi|)$, with the condition that any $w\in\Xi^n$, for which $P$ was $\delta$-sure about, $P'$ will be $2\delta$-sure about and in the same direction. This also preserves the zero-one property if it exists.
\end{lemma}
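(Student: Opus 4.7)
The plan is the standard probabilistic-method sampling argument: sample $m = \alpha n$ constraints $C_1,\dots,C_m$ independently according to $\mu$ from $\mathcal F$, and take $P'$ to be the formula whose underlying family is the (multi)set $\{C_1,\dots,C_m\}$, where any duplicated query sets are collapsed via Lemma~\ref{lem:unionize} (with $\mu'$ assigning to each surviving constraint its empirical frequency $k/m$, where $k$ is the number of times it was sampled). Since the constraints in $\mathcal F$ already have distinct query sets, a repeat in the sample can only come from sampling the exact same constraint twice, so the merging amounts to aggregating probabilities without touching the satisfaction functions, preserving the zero-one property if it existed. The resulting $P'$ is a $q$-formula with $|\supp(\mu')|\leq m = \alpha n$, as required.

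For any fixed $w\in\Xi^n$, the satisfaction of $P'$ by $w$ equals the empirical mean $\frac{1}{m}\sum_{j=1}^m S_j(w_{Q_j})$, whose expectation is exactly the satisfaction of $P$ by $w$. Applying Hoeffding's bound (Lemma~\ref{lem:hoeff-bound}) with deviation $\delta$, the probability that this empirical mean deviates from its expectation by more than $\delta$ is at most $2\exp(-2m\delta^2)$. Plugging in $m=\delta^{-2}\log(|\Xi|)n$, this becomes $2|\Xi|^{-2n}$, so a union bound over the $|\Xi|^n$ possible inputs bounds the probability that any $w$ is ``bad'' by $2|\Xi|^{-n}<1$. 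Hence some realization of the sample is simultaneously good for every $w$; we fix such a realization to define $P'$.

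For any $w$ on which $P$ was $\delta$-sure, the satisfaction of $P'$ by $w$ then differs from that of $P$ by at most $\delta$, and so remains either $\geq 1-2\delta$ or $\leq 2\delta$, with no change of side; this is the $2\delta$-sure conclusion. The only subtle point is calibrating the Hoeffding exponent against the $|\Xi|^n$ union bound, which is precisely what pins down the stated value $\alpha=\delta^{-2}\log(|\Xi|)$. I do not expect a genuine obstacle beyond this calibration, as the argument mirrors the analogous amplification-and-derandomization step in the proof of Lemma~\ref{lem:effective1sided}.
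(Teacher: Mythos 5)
Your proof is correct and essentially identical to the paper's: sample $r=\delta^{-2}\log(|\Xi|)\,n$ constraints i.i.d.\ from $\mu$, apply Hoeffding (Lemma~\ref{lem:hoeff-bound}) to the empirical satisfaction for each fixed $w$, union-bound over all $|\Xi|^n$ inputs, fix a good realization, and let $\mu'$ be the empirical frequency distribution. The handling of repeated samples and the preservation of the zero-one property also match the paper's treatment.
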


\begin{proof}
To produce the new formula, we take $r=\delta^{-2}\log(|\Xi|)\cdot n$ samples $(Q_1,S_1),\ldots,(Q_r,S_r)$ from $\mathcal F$ by independently drawing each sample according to $\mu$. For $w\in\Xi^n$ we set $\eta_w=(\sum_{i=1}^rS_i(w_{Q_i}))/r$. We also let $\eta=\sum_{(Q,S)\in\mathcal{F}}\mu(Q)S(w_Q)$ denote the satisfaction of $P$ by $w$.
Let $Y_i$ denote the random variable $S_i(w_{Q_i})$, and set $Y=\sum_{i=1}^rY_i/r$. Note that $\mathrm E[Y_i]=\sum_{(Q,S)\in\mathcal{F}}\mu(Q)S(w_Q)=\eta$. Thus also $\mathrm E[Y]=\eta$, and by Lemma \ref{lem:hoeff-bound} we have that the probability for $|\eta_w-\eta|>\delta$ is bounded by $2e^{-2r\delta^2}\leq \frac12|\Xi|^{-n}$.

Thus, with probability at least $\frac12$, the obtained sequence is such that for all $w\in\Xi^n$ we have that the difference between $\eta_w$ and $\eta$ is at most $\delta$. We fix such a sequence $(Q_1,S_1),\ldots,(Q_r,S_r)$. To define $P'=(\mathcal F',\mu')$, we set $\mathcal F'$ to be the set of clauses appearing in $(Q_1,S_1),\ldots,(Q_r,S_r)$, where for $C\in\mathcal F'$ we set $\mu'(C)$ to be the number of times it appeared in the sequence, divided by $r$.
\end{proof}

Now we are finally ready to prove the main combinatorialization result of this section.

\begin{lemma}[combinatorialization lemma]\label{lem:combi}
Any (partial) $(\epsilon/2,\delta,q)$-test for $(L',L)$ which is $\epsilon$-nontrivial can be made into a combinatorial $(\epsilon/2,\delta',q,\alpha n)$-test, where $\alpha=\delta^{-2}\log(|\Xi|)$ and $\delta'=16\delta\log(16q\delta^{-2}\log(|\Xi|)/\epsilon)$.
\end{lemma}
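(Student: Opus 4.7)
The plan is to compose the four transformation lemmas of this section in an appropriate order, carrying through the parameters. Starting from the given $(\epsilon/2,\delta,q)$-test (so $P=(\mathcal F,\mu)$), I would apply Lemma~\ref{lem:linearize} first in order to shrink the support of the formula down to at most $\alpha n$ with $\alpha=\delta^{-2}\log(|\Xi|)$; this replaces the sureness parameter $\delta$ by $2\delta$ on every input and does not interfere with any of the other structural properties we will want to impose later.

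Next I would apply Lemma~\ref{lem:zeroone} to the result, producing a zero-one formula at the cost of another factor of two in the sureness, so we are now at an $(\epsilon/2,4\delta,q,\alpha n)$-test that is zero-one. Because Lemma~\ref{lem:zeroone} leaves $\mu$ unchanged, the support size bound $\alpha n$ is preserved. This is the point at which we can invoke Lemma~\ref{lem:linequi}, whose hypothesis that the support is bounded by a linear function of $n$ is now exactly what we have arranged. It converts the test into a $\beta$-equitable one with $\beta=8q\alpha/\epsilon=8q\delta^{-2}\log(|\Xi|)/\epsilon$, costing another factor of two in the sureness, and by its statement it preserves the zero-one property we just installed.

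Finally, I would apply Lemma~\ref{lem:equifin} to turn the $\beta$-equitable formula into a $1$-equitable one; this is the only step where the sureness degrades by more than a constant factor, and it pays a multiplicative $\log(2\beta)$ overhead. The lemma restricts $\mathcal F$ to a subfamily, so both the zero-one property and the support-size bound $\alpha n$ persist. A formula that is simultaneously zero-one and $1$-equitable is combinatorial by definition, which is what we need.

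Putting the four penalties together, an input for which the original test was $\delta$-sure becomes one for which the final test is $2\cdot 2\cdot 2\cdot \log(2\beta)\cdot \delta = 16\delta\log\bigl(16q\delta^{-2}\log(|\Xi|)/\epsilon\bigr)=\delta'$-sure, in the same direction. Since the acceptance/rejection conditions of the partial test are precisely these sureness requirements for the two relevant classes of inputs, the resulting object is a combinatorial $(\epsilon/2,\delta',q,\alpha n)$-test for $(L',L)$, as claimed. There is no real mathematical obstacle here; the argument is essentially bookkeeping, and the only care needed is to respect the dependency that Lemma~\ref{lem:linequi} presupposes a linear support bound (hence the ordering above) and to verify that $\delta$ is small enough to satisfy the $\delta<1/8$ hypothesis of that lemma — otherwise the stated $\delta'$ already exceeds $1$ and the claim is vacuous.
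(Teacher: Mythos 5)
Your proof is correct and essentially matches the paper's: it composes the same four lemmas (Lemmas \ref{lem:linearize}, \ref{lem:zeroone}, \ref{lem:linequi}, \ref{lem:equifin}) with the same parameter bookkeeping and arrives at the same $\delta'$. The only difference is that you apply the zero-one step second rather than last; since Lemma \ref{lem:zeroone} preserves support size and equitability while the subsequent lemmas preserve the zero-one property, this reordering is harmless (it merely tightens the vacuity threshold from $\delta<1/16$ to $\delta<1/32$, which, as you correctly note, does not matter since $\delta'\ge 1$ there anyway).
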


\begin{proof}
Setting $P$ to be the formula corresponding to the test, we can assume that $\delta<\frac1{16}$, as otherwise we can just ignore $P$ and provide a ``test'' that is satisfied by all inputs. We perform the following sequence of steps.

\begin{itemize}
\item Use Lemma \ref{lem:linearize} to make it into a formula $P_1$ corresponding to an $(\epsilon/2,2\delta,q,\alpha n)$-test for $\alpha=\delta^{-2}\log(|\Xi|)$.
\item Use Lemma \ref{lem:linequi} to make $P_1$ into a formula $P_2$ that is an $(8q\delta^{-2}\log(|\Xi|)/\epsilon)$-equitable $(\epsilon/2,4\delta,q,\alpha n)$-test. This is the only step that requires the formula to correspond to a test.
\item Use Lemma \ref{lem:equifin} to make $P_2$ into $P_3$ which is an $(\epsilon/2,8\delta\log(16q\delta^{-2}\log(|\Xi|)/\epsilon),q,\alpha n)$-test that is $1$-equitable.
\item Finally use Lemma \ref{lem:zeroone} to make $P_3$ into the formula $P'$, which is a combinatorial partial $(\epsilon/2,16\delta\log(16q\delta^{-2}\log(|\Xi|)/\epsilon),q,\alpha n)$-test for $(L',L)$.
\end{itemize}

The final formula $P'$ is the required test.
\end{proof}

Before concluding this section, we note that by analyzing the only place in the proof where we used that $P$ is a test, we can formulate the following more general ``combinatorialization lemma'' which could be of independent use.

\begin{lemma}[general combinatorialization]
If $P$ is a probabilistic $q$-formula resolving with $\delta$ confidence, for which there are both ``yes'' instances, and words where every $\epsilon/2$-close word is a ``no'' instance, can be made into a combinatorial $(q,\alpha n)$ formula resolving the same problem with $\delta'$ confidence, where $\alpha$ and $\delta'$ are as in Lemma \ref{lem:combi}.
\end{lemma}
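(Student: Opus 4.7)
The plan is to observe that the four-step pipeline in the proof of Lemma~\ref{lem:combi} uses the hypothesis that $P$ corresponds to a test in exactly one place, namely the application of Lemma~\ref{lem:linequi}, which in turn invokes Lemma~\ref{lem:lowprob}. The linearization step (Lemma~\ref{lem:linearize}), the equitability quantization and conditioning (Lemma~\ref{lem:equifin}), and the zero-one conversion (Lemma~\ref{lem:zeroone}) are all stated and proved for arbitrary probabilistic $q$-formulas and make no use of the testing hypothesis. The task thus reduces to proving a suitable analogue of Lemma~\ref{lem:lowprob} under the weaker hypothesis of the present lemma, and then rerunning the chain.

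The analogue I would prove is the following. Suppose $P=(\mathcal F,\mu)$ resolves the promise problem with $\delta$ confidence, $u\in \Xi^n$ is a ``yes'' instance, $w\in \Xi^n$ is a word all of whose $\epsilon/2$-close words are ``no'' instances, and $\mathcal F'\subseteq \mathcal F$ satisfies $|T|\leq \epsilon n/2$ for $T=\bigcup_{(Q,S)\in\mathcal F'}Q$. Then $\mu(\mathcal F')\leq 2\delta$. The argument mirrors Lemma~\ref{lem:lowprob} verbatim: define the hybrid $v$ by $v_T=u_T$ and $v_{[n]\setminus T}=w_{[n]\setminus T}$; since $v$ differs from $w$ on at most $|T|\leq \epsilon n/2$ coordinates, $v$ is $\epsilon/2$-close to $w$ and therefore a ``no'' instance, so $P$'s satisfaction by $v$ is at most $\delta$. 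Since $u_Q=v_Q$ for every $Q\subseteq T$, the partial sum over $\mathcal F'$ is unchanged:
\[
\sum_{(Q,S)\in\mathcal F'}\mu(Q)S(u_Q) \;=\; \sum_{(Q,S)\in\mathcal F'}\mu(Q)S(v_Q) \;\leq\; \delta.
\]
Combining this with the fact that the satisfaction of $P$ by $u$ is at least $1-\delta$ yields $1-\delta \leq \delta + (1-\mu(\mathcal F'))$, whence $\mu(\mathcal F')\leq 2\delta$.

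With this in hand, I would restate and reprove Lemma~\ref{lem:linequi} for arbitrary probabilistic $q$-formulas equipped with a ``yes'' instance and a ``robust no'' instance, using the new analogue in place of Lemma~\ref{lem:lowprob}; no other modification of that proof is needed, since the rest is bookkeeping on probability masses. Then the four-step chain in the proof of Lemma~\ref{lem:combi} applies verbatim: apply the linearization to bring the support size down to $\alpha n$ with $\alpha=\delta^{-2}\log(|\Xi|)$, apply the generalized $\beta$-equitability step to obtain $\beta=8q\alpha/\epsilon$, apply Lemma~\ref{lem:equifin} to pass from $\beta$-equitable to $1$-equitable, and finally apply Lemma~\ref{lem:zeroone} to become zero-one. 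Tracking the accumulated multiplicative losses on $\delta$ yields exactly the $\delta'=16\delta\log(16q\delta^{-2}\log(|\Xi|)/\epsilon)$ of Lemma~\ref{lem:combi}.

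The only ``obstacle'' here is conceptual rather than technical: one must state the ``robust no'' hypothesis in the right form, insisting that the distinguished $w$ has its entire $\epsilon/2$-neighborhood contained in the ``no'' instances, rather than something weaker like being itself a ``no'' instance. This is precisely what allows the hybrid $v$ to still count as a ``no'' instance, which is the sole point in the proof of Lemma~\ref{lem:lowprob} where the Hamming-distance structure of property testing is actually invoked.
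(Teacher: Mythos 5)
Your proposal is correct and follows exactly the route the paper intends: the paper itself offers no separate proof of this lemma, merely remarking that the only test-dependent step in the proof of Lemma \ref{lem:combi} is the use of Lemma \ref{lem:lowprob} inside Lemma \ref{lem:linequi}, and your hybrid-word argument is the right generalization of that step to the ``yes'' instance plus ``robust no'' instance setting. You have in fact supplied more detail than the paper does.
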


To conclude this section, we combine Lemma \ref{lem:combi} with an amplification technique to show how general $2$-sided $(\epsilon/2,\delta,q)$-tests can be converted to combinatorial tests.

\begin{lemma}\label{lem:effective2sided}
A partial $(\epsilon/2,\delta,q)$-test for $(L',L)$ which is $\epsilon$-nontrivial can be converted to a combinatorial partial $(\epsilon/2,1/(4q')^3,q',q'^9\log(|\Xi|)(\log(\log(|\Xi|)/\epsilon))^2 n)$-test for $(L',L)$, where $q'=O(q\log(q)\log\log(\log(|\Xi|)/\epsilon)/(\frac12-\delta)^2)$.
\end{lemma}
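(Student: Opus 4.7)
The plan is a two-step pipeline: first amplify the given partial test by majority voting to reduce its error far below the cubic target $1/(4q')^3$, and then apply the combinatorialization lemma (Lemma~\ref{lem:combi}) to the amplified test; all parameter balancing is postponed to the end.

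For the amplification step, I form the $k$-fold majority-vote version $P^{(k)}$ of the input $(\epsilon/2,\delta,q)$-test $P$. Its clauses are obtained by drawing $k$ independent constraints $(Q_i,S_i)$ from $P$'s distribution, taking the query set to be $Q_1\cup\cdots\cup Q_k$ (padded with arbitrary dummy indices to size exactly $kq$ and, when duplicate query sets arise, merged via Lemma~\ref{lem:unionize}), and assigning as satisfaction function the probability that a majority of independent Bernoulli's with parameters $S_i(w_{Q_i})$ equals $1$. Standard Chernoff/Hoeffding bounds applied to the gap $\tfrac12-\delta$ show that $P^{(k)}$ is a partial $(\epsilon/2,\delta_0,kq)$-test for $(L',L)$ with $\delta_0 := e^{-2k(\frac12-\delta)^2}$, and it inherits $\epsilon$-nontriviality from $P$.

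Applying Lemma~\ref{lem:combi} to $P^{(k)}$ then produces a combinatorial $(\epsilon/2,\delta_1,kq,\alpha n)$-test with
\[
 \delta_1 = 16\delta_0\log\!\left(\tfrac{16\,kq\,\log|\Xi|}{\delta_0^{2}\,\epsilon}\right), \qquad \alpha = \delta_0^{-2}\log|\Xi|.
\]
Writing $q':=kq$, the target error $1/(4q')^3$ is reached by the choice $\delta_0 = 1/(Cq'^{3}A)$ for an absolute constant $C$ and $A := \log q' + \log\log(|\Xi|/\epsilon)$, which yields $\log(1/\delta_0) = O(\log q' + \log\log(\log|\Xi|/\epsilon))$. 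The support bound becomes $\alpha n = O(q'^{6}A^{2}\log|\Xi|)\,n$, which is absorbed into $q'^{9}\log|\Xi|\bigl(\log(\log|\Xi|/\epsilon)\bigr)^{2}\,n$ using $A^2\le 2(\log q')^2+2(\log\log(|\Xi|/\epsilon))^2$ and $(\log q')^2\le q'^3$. The amplification step requires $k = \Theta(\log(1/\delta_0)/(\tfrac12-\delta)^2)$, and resolving the implicit equation $q'=kq$ gives $q' = O\bigl(q(\log q+\log\log(\log|\Xi|/\epsilon))/(\tfrac12-\delta)^2\bigr)$, which matches the stated bound.

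The main obstacle is precisely this self-referential parameter balancing: $\delta_0$ must be polynomially small in $q'=kq$ to survive the logarithmic blowup of Lemma~\ref{lem:combi} and still hit the cubic target, while the amplification factor $k$ itself grows with $\log(1/\delta_0)$. Untangling this circular dependence is exactly where the triple-logarithm factor $\log\log(\log|\Xi|/\epsilon)$ enters the final query bound. The remaining work is routine: the Chernoff analysis of majority voting, the padding of unions to preserve uniform constraint size, and the bookkeeping that verifies the target error and support size.
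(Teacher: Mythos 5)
Your proposal is correct and follows essentially the same route as the paper: amplify the given partial test by repeated independent runs with a majority vote to drive the error polynomially small in $q'=kq$, then apply the combinatorialization lemma (Lemma~\ref{lem:combi}) and balance the resulting $\delta'=16\delta_0\log(16q'\delta_0^{-2}\log(|\Xi|)/\epsilon)$ and $\alpha=\delta_0^{-2}\log(|\Xi|)$ against the target parameters. Your treatment of the self-referential choice of $\delta_0$ and $k$ is in fact more explicit than the paper's one-line version, and your intermediate error $1/(Cq'^3 A)$ plays the same role as the paper's $1/(40q'^4\log(\log(|\Xi|)/\epsilon))$.
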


\begin{proof}
We first use $2$-sided amplification for the original test: We repeat the original test $20\log(q)\log\log(\log(|\Xi|)/\epsilon)/(\frac12-\delta)^2$ times and take the majority vote. This brings us to an $(\epsilon/2,1/40q'^4\log(\log(|\Xi|)/\epsilon),q')$-test for $q'=O(q\log(q)\log\log(\log(|\Xi|)/\epsilon)/(\frac12-\delta)^2)$. We use Lemma \ref{lem:combi}, and obtain from it a combinatorial $(\epsilon/2,1/4q'^3,q',q'^9\log(|\Xi|)(\log(\log(|\Xi|)/\epsilon))^2 n)$-test as required for the lemma's conclusion.
\end{proof}

We note that the dependency of $q'$ above on $\log\log(\log(|\Xi|)/\epsilon)$ implies that a constant power of $n$ is guaranteed only for properties for which the alphabet does not depend on $n$. However, it is unlikely to destroy sublinearity by itself even for a variable alphabet setting, because already for $|\Xi|=2^n$ there are examples with no sublinear sampling tests at all by \cite{SampleBased}.

\section{A conversion of a $2$-sided test to a $2$-sided sampling test}
\newcommand{\good}{discerning}

\newcommand{\alphaValue}{10^3\ln(|\Xi|)\cdot q^4/\epsilon}

\newcommand{\TwoSidednLB}{(24q^{10}(\log(|\Xi|))^2/\epsilon)^q}

\newcommand{\twoSidedPomCard}{\epsilon\cdot n^{1-(i-1)/q}/(3i)}

Here we prove that if the properties $L'\subseteq L\subseteq\Xi^n$ admit a $2$-sided test with a constant number of queries for $(L',L)$, then there is a corresponding $2$-sided $p$-sampling test where $p$ corresponds to a constant negative power of $n$. Specifically we prove the following.

\begin{theorem}\label{thm:2sided}
Let $\alpha = \alphaValue$. For every $q\geq 3$ and $n>\TwoSidednLB$, if $(L',L)$ over $\Xi^n$ admits a $2$-sided combinatorial $(\epsilon/2,1/4q^3,q,q^9\log(|\Xi|)(\log(\log(|\Xi|)/\epsilon))^2n)$-test and is $\epsilon$-nontrivial,
then it also admits a $p$-sampling $2$-sided $(\epsilon,1/10)$-test such that $p=\alpha n^{-1/q^2}$.
\end{theorem}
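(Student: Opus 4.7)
The plan is to mirror the $1$-sided proof of Theorem~\ref{thm:1sided}, bridged through the combinatorialization already baked into the hypothesis. Since the given test is combinatorial, its distribution $\mu$ is uniform over some family $\mathcal{F}$ of $q$-subsets of $[n]$, each carrying a $\{0,1\}$-valued satisfaction function $S_Q$. The first move is to apply Lemma~\ref{lem:lowprob} to conclude that $\mu(\Match{0})\leq 2/(4q^3)<1/(q+1)$; Lemma~\ref{lem:SCMconstellation}, with $\eta$ scaled to the combinatorial support bound $q^9\log(|\Xi|)(\log(\log(|\Xi|)/\epsilon))^2$, then produces an $(\eta q,i)$-constellation $(\Match{i},\Core{i})$ for some $i\in[q]$.

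Next, one builds an analogue of a revealing set of $i$-pompoms. For each $\sigma\in\Xi^{|\Core{i}|}$, extract $\mathcal{P}_\sigma\subseteq\Match{i}$ exactly as in Lemma~\ref{lem:MessyToSetOfPoms}, but recasting the role of ``witness against $w_{\sigma,\Core{i}}$'' as ``constraint rejected under $w_{\sigma,\Core{i}}$'', namely those $Q$ for which $S_Q((w_{\sigma,\Core{i}})_Q)=0$. Because $|\Core{i}|$ is much smaller than $\epsilon n/6$, any $w$ that is $\epsilon$-far from $L$ remains $5\epsilon/6$-far after replacing $\Core{i}$ by any $\sigma$; combining this with Observation~\ref{obs:condsure} applied to $\mu(\Match{i})\geq 1/(q+1)$ guarantees that for every far $w$ and every $\sigma$ a $1-O(1/q^2)$ fraction of $\Match{i}$ is rejected, and the disjoint-extraction argument then yields pompoms of size at least $\epsilon n^{1-(i-1)/q}/(3i)$, matching the quantity that drove the $1$-sided argument.

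Third, define the sampling test: draw $R\subseteq[n]$ with $p=\alpha n^{-1/q^2}$ and, for each $\sigma\in\Xi^{|\Core{i}|}$, compute the empirical count $\hat c(\sigma)=|\{Q\in\mathcal{P}_\sigma:Q\setminus\Core{i}\subseteq R\}|$; accept iff some $\sigma$ yields $\hat c(\sigma)$ below a threshold $T$ placed between the yes-case and far-case expectations. Since the outside parts of the sets in $\mathcal{P}_\sigma$ are pairwise disjoint, the indicators $\mathbf{1}[Q\setminus\Core{i}\subseteq R]$ are independent Bernoulli$(p^i)$ random variables, so the multiplicative Chernoff bound (Lemma~\ref{lem:chernoff}) gives exponential concentration of $\hat c(\sigma)$ around $p^i|\mathcal{P}_\sigma|$. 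The value $\alpha=10^3\ln(|\Xi|)q^4/\epsilon$ is tuned so that the per-$\sigma$ deviation bound is strong enough to absorb a union bound over all $|\Xi|^{|\Core{i}|}$ values of $\sigma$, paralleling the calculation in Lemma~\ref{lem:PompomApp} but with ``at least one sampled element'' replaced by ``empirical count close to the expectation''.

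The main obstacle is the case separation at $\sigma^{*}=w_{\Core{i}}$ when $w\in L'$: for far $w$ the pompoms $\mathcal{P}_\sigma$ are uniformly large across $\sigma$, while for yes $w$ the pompom $\mathcal{P}_{\sigma^{*}}$ is drawn only from the $O(1/q^2)$ fraction of $\Match{i}$ that is rejected under $w$, so $|\mathcal{P}_{\sigma^{*}}|$ is correspondingly smaller. A single threshold $T$ must lie strictly between $p^i|\mathcal{P}_{\sigma^{*}}|$ and the far-case expectation $p^i|\mathcal{P}_\sigma|$, so the slack provided by the combinatorialization confidence $\delta=1/4q^3$ must be strong enough to overcome the $n^{(i-1)/q}$ factor lost in the disjoint extraction, uniformly across $i\in[q]$. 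Ensuring that this separation holds simultaneously with the Chernoff concentration, while keeping the failure probability below $1/10$ after the $|\Xi|^{|\Core{i}|}$-fold union bound, is what forces the specific $q$-dependent confidence in the hypothesis and the specific $\alpha$ in the statement, and constitutes the technical heart of the proof.
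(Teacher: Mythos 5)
Your opening move (Lemma \ref{lem:lowprob} plus Lemma \ref{lem:SCMconstellation} to obtain a constellation, then disjointification into pompoms) matches the paper, but the way you use the pompoms afterwards has a genuine gap. You build, for each $\sigma$, a pompom $\mathcal P_\sigma$ consisting of constraints that \emph{reject} $w_{\sigma,C}$, and accept when the sampled count over $\mathcal P_\sigma$ falls below a threshold $T$. Two things go wrong. First, $\mathcal P_\sigma$ depends on the values of $w$ outside $C$ (whether a constraint rejects $w_{\sigma,C}$ is determined by $w$ on $Q\setminus C$), so the algorithm cannot perform the disjoint extraction from the sample alone; it can only count rejecting constraints whose outside parts happen to lie in $R$, and over that non-disjoint family the independence needed for Chernoff concentration is lost. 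Second, and more fundamentally, no single threshold $T$ exists: you are comparing cardinalities of maximal disjoint subfamilies of $w$-dependent families, and these cardinalities do not track the rejection probability. In the yes case with $\sigma^{*}=w_C$ the rejecting constraints have $\mu$-weight only $O(q^{-3})$, but since $\mu$ is uniform over a support of size linear in $n$ (up to factors in $q$, $\log|\Xi|$ and $\epsilon$) they can still number $\Theta(n)$ and happen to be pairwise disjoint outside $C$, giving a disjointified family of size $\Theta(n)$; in the far case your guaranteed lower bound after disjointification is only $\epsilon n^{1-(i-1)/q}/(3i)$, which for $i\geq 2$ is asymptotically \emph{smaller}. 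So the ordering between $p^i|\mathcal P_{\sigma^{*}}|$ and $p^i|\mathcal P_\sigma|$ that your threshold needs can be reversed, and the separation you defer to ``the technical heart of the proof'' is not achievable in this framework.

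The paper's proof sidesteps both problems by changing what is estimated. It fixes, from $\supp(\mu)$ alone (independently of $w$ and of $\sigma$), a set $J$ of $i$-pompoms of \emph{query sets} --- not of rejecting constraints --- whose union carries $\mu$-weight at least $\frac{1}{2(q+1)}$ and each of which has exactly the prescribed size $\epsilon n^{1-(i-1)/q}/(3i)$ (Lemma \ref{lem:2SidedMessyToSetOfPoms}). For each $\sigma$ and each fixed $\mathcal W\in J$, the sample $U$ yields an empirical \emph{average} $\gamma_{\sigma,U,\mathcal W}$ of the satisfaction values $S((w_{\sigma,C})_Q)$ over the members of $\mathcal W$ falling inside $U$; by Lemma \ref{lem:dev} this concentrates around the true average over $\mathcal W$, and averaging over $J$ recovers the acceptance probability of the test conditioned on $\bigcup_{\mathcal W\in J}\mathcal W$, which by Observation \ref{obs:condsure} is at least $9/10$ for $\sigma^{*}=w_C$ when $w\in L'$ and at most $1/10$ for every $\sigma$ when $w$ is $\epsilon$-far. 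A fixed threshold of $\frac12$ on this \emph{normalized} estimate then works, with no comparison between cardinalities of incomparable families; uniformity of $\mu$ is used exactly once, to convert an average over a fixed pompom into a probability. If you replace your count $\hat c(\sigma)$ by these per-pompom averages over a $w$-independent pompom family, your argument becomes the paper's.
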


As with the proof of the $1$-sided case, we set $\mu$ to be a distribution of the test, and find for it pompoms that cover every possible assignment to a common core set $C$. Here however there are many pompoms involved,  and they serve all assignments to $C$ at once, because we need to cover enough of the ``weight'' of the distribution $\mu$. Also, the pompoms are not necessarily of witnesses, but rather of query sets; we will use them to approximate for every assignment of $C$ the ``amount'' of query sets that would cause rejection (hence they need to cover sufficient weight). We need the test to be combinatorial, i.e., that $\mu$ is uniform over its support, in exactly one place: The sampling test will approximate the {\em number} of rejecting query sets, and only for a uniform $\mu$ will this correspond to the rejection {\em probability} of the original test.

Let us formally define the set of pompoms that we will use.

\begin{definition}\label{def:good}
Given a distribution $\mu$ over sets of size $q$,
a set $J$ of $i$-\Pom s made from members of $\supp(\mu)$ is {\em \good} for $\mu$
if the following holds:
\begin{enumerate}
\item\label{item:GoodI} $\mu(\mathcal I) \geq \frac{1}{2(q+1)}$, where $\mathcal I=\bigcup_{\mathcal W\in J}\mathcal W$ is the union of all the 
$i$-\Pom s in $J$.
\item\label{item:GoodPomSize} Every $i$-\Pom\ in $J$ has cardinality exactly $\twoSidedPomCard$.
\item\label{item:GoodCoreSize} There exists $C\subseteq [n]$ of size at most $q^{10}\log(|\Xi|)(\log(\log(|\Xi|)/\epsilon))^2n^{1-i/q}$
that is a \core\ of all the $i$-\Pom s in $J$.
\end{enumerate}
\end{definition}

Next we define and show how each pompom of such a set can be used for the proof of something like Theorem \ref{thm:2sided}.

\begin{definition}\label{def:PompomApprox}
Given a probabilistic $q$-formula $P=(\mathcal F,\mu)$ over $\Xi^n$, an $i$-pompom $\mathcal W\subseteq\supp(\mu)$ of size $\twoSidedPomCard$ with core $C$ of size at most $q^{10}\log(|\Xi|)(\log(\log(|\Xi|)/\epsilon))^2n^{1-i/q}$, a word $w\in\Xi^n$, a possible assignment $\sigma\in\Xi^{|C|}$ to $C$, and a query set $U\subseteq [n]$, the {\em approximated satisfiability in $\mathcal W$ of $\sigma$ with respect to $w$} is defined to be the value $\gamma_{\sigma,U,\mathcal W}$ obtained in the following manner.

Set $\mathcal W_U=\{Q\in \mathcal W:Q\setminus C\subseteq U\}$ (i.e., take the set of members of $\mathcal W$ whose indexes outside $C$ are contained in $U$), and then take the average
$\gamma_{\sigma,U,\mathcal W}=(\sum_{\{(Q,S)\in\mathcal F:Q\in\mathcal W_U\}}S((w_{\sigma,C})_Q))/|\mathcal W_U|$,
which we arbitrarily set to $\frac12$ if $\mathcal W_U=\emptyset$.
\end{definition}

An explanation to the above definition: Assume that $U$ is a set of queries that we have made. We would like to assess the assignment $\sigma$ of $C$, with respect to what $U$ tells us about $w$ outside of $C$. Given the $i$-pompom $\mathcal W$, we want to approximate the relative weight of the members of $\mathcal W$ for which the corresponding constraints accept $w_{\sigma,C}$. We do so by restricting ourselves to the members of $\mathcal W_U$, for which we can tell by querying $U$ whether they accept $w_{\sigma,C}$ or not. We ignore all aspects of $\mu$ apart from its support, because we will assume that it is uniform over $\supp(\mu)$ (i.e., that the formula $P$ corresponds to a combinatorial test). This assumption is essential to show that a set $U$ chosen according to a sampling distribution will indeed yield with high probability a good approximation.

Note that $\gamma_{\sigma,[n],\mathcal W}$ is the true acceptance average of the pompom $\mathcal W$. We now prove that the sampling distribution with high probability provides a $U$ such that $\gamma_{\sigma,U,\mathcal W}$ approximates $\gamma_{\sigma,[n],\mathcal W}$.

\begin{lemma}\label{lem:approxone}
Let $q\geq 3$, $n>\TwoSidednLB$, $\alpha = \alphaValue$ and $w\in \Xi^n$. Suppose that the formula $P=(\mathcal F,\mu)$, the $i$-pompom $\mathcal W$ and its core $C$, and the words $w$ and $\sigma$ are as per the requirements of Definition \ref{def:PompomApprox}, and additionally that $P$ is combinatorial. Then with probability at least $1-\frac1{100}|\Xi|^{-|C|}$, a set $U$ drawn according to the $\alpha \cdot n^{-1/q^2}$-sampling distribution satisfies $|\gamma_{\sigma,U,\mathcal W}-\gamma_{\sigma,[n],\mathcal W}|\leq\frac1{10}$.
\end{lemma}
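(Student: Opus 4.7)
The plan is to exploit the combinatorial feature of a pompom: the sets $Q\setminus C$ for $Q\in\mathcal{W}$ are pairwise disjoint $i$-element subsets of $[n]\setminus C$. Under the $\alpha n^{-1/q^2}$-sampling distribution, the events $\{Q\setminus C\subseteq U\}$ for distinct $Q\in\mathcal{W}$ depend on disjoint coordinates of $U$ and are therefore mutually independent, each occurring with probability exactly $p^i=\alpha^i n^{-i/q^2}$. This means that the random set $\mathcal{W}_U\subseteq\mathcal{W}$ is distributed exactly as a $p^i$-sample from the $m$-element index set $\mathcal{W}$, where $m=|\mathcal{W}|=\twoSidedPomCard$.

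With this observation, I would set $\gamma_Q:=S((w_{\sigma,C})_Q)\in[0,1]$ for each $Q\in\mathcal{W}$ (the zero-one part of combinatoriality forces $\gamma_Q\in\{0,1\}$, but only the $[0,1]$ bound will actually be used). Then by definition $\gamma_{\sigma,[n],\mathcal{W}}=\frac1m\sum_{Q\in\mathcal{W}}\gamma_Q$ and $\gamma_{\sigma,U,\mathcal{W}}=\frac1{|\mathcal{W}_U|}\sum_{Q\in\mathcal{W}_U}\gamma_Q$, so the entire lemma reduces to an application of Lemma~\ref{lem:dev} to the $m$-point sequence $\{\gamma_Q\}_{Q\in\mathcal{W}}$, at effective sampling rate $p^i$ and deviation tolerance $\eta=1/10$. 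The hypothesis of that lemma demands $p^i\geq 10c/\eta^2 m$, and the failure probability it yields is $e^{-c}$; to reach the target $\tfrac{1}{100}|\Xi|^{-|C|}$ it suffices to take $c=|C|\ln|\Xi|+\ln 100$, so everything reduces to verifying
\[
mp^i \;\geq\; 10^3\bigl(|C|\ln|\Xi|+\ln 100\bigr).
\]

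Writing $mp^i=\frac{\epsilon\alpha^i}{3i}\,n^{1-(i-1)/q-i/q^2}$ and using the identity $1-(i-1)/q-i/q^2=(1-i/q)+(q-i)/q^2$, the $n^{1-i/q}$ factor appearing in the upper bound $|C|\leq q^{10}\log(|\Xi|)(\log(\log(|\Xi|)/\epsilon))^2 n^{1-i/q}$ cancels cleanly against the corresponding factor in $mp^i$, leaving a residual gap of size $n^{(q-i)/q^2}$ on our side that must dominate the remaining polylog factors from $|C|\ln|\Xi|$ and $\alpha$.

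The only real obstacle is this final numerical check, which is a matter of bookkeeping that splits into two regimes. For $i=q$ the residual gap vanishes, but then $\alpha^q=(\alphaValue)^q$ is by itself enormous compared to the polylog $q^{10}(\log|\Xi|)^2(\log\log(|\Xi|/\epsilon))^2$ that arises from $|C|\ln|\Xi|$. For $i<q$ the residual gap is at least $n^{1/q^2}$, and the hypothesis $n>\TwoSidednLB$ is exactly what makes this factor dominate the same polylog with room to spare. With this verified, Lemma~\ref{lem:dev} completes the proof; the structural heart of the argument is the independence of the events $\{Q\setminus C\subseteq U\}$ supplied by the pompom, which converts a $p$-sample of $[n]$ into an effective $p^i$-sample of the index set $\mathcal{W}$.
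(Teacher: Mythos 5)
Your proposal is essentially the paper's own proof: the same observation that disjointness of the sets $Q\setminus C$ makes membership in $\mathcal W_U$ independent with probability exactly $\alpha^i n^{-i/q^2}$, followed by the same application of Lemma~\ref{lem:dev} to the satisfaction values, with the numerical verification delegated (in the paper) to Calculation~\ref{calc:devuse}. One small caution on your bookkeeping split: in the regime $1<i<q$ the factor $n^{(q-i)/q^2}>\bigl(24q^{10}(\log|\Xi|)^2/\epsilon\bigr)^{(q-i)/q}$ does \emph{not} by itself dominate the polylog when $i$ is close to $q$ (the exponent $(q-i)/q$ is then tiny), so you must also invoke the factor $\alpha^i$ there --- the paper's calculation accordingly splits at $i\geq 3$ (where $\alpha^i$ alone suffices) versus $i\in\{1,2\}$ (where the lower bound on $n$ is used).
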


\begin{proof}
Let us first analyze which members of $\mathcal W$ get into $\mathcal W_U$. Since $\{Q\setminus C:Q\in\mathcal W\}$ is a family of disjoint sets of size $i$, the choice of $U$ means that every $Q\in\mathcal W$ becomes a member of $\mathcal W_U$ with probability exactly $\alpha^i \cdot n^{-i/q^2}$, independently of other members of $\mathcal W$. We now refer to Lemma  \ref{lem:dev} (where $(\gamma_1,\ldots,\gamma_m)$ there are the satisfaction values $S((w_{\sigma,C})_Q)$ for $(Q,S)\in\mathcal F$ such that $Q\in\mathcal W$), which implies that the probability of having $|\gamma_{\sigma,U,\mathcal W}-\gamma_{\sigma,[n],\mathcal W}|>\frac1{10}$ is less than $e^{-10^{-3}\alpha^i \cdot n^{-i/q^2} \cdot \twoSidedPomCard}$. Calculation \ref{calc:devuse} bounds it by $\frac1{100}|\Xi|^{-q^{10}\log(|\Xi|)(\log(\log(|\Xi|)/\epsilon))^2n^{1-i/q}}\leq \frac1{100}|\Xi|^{-|C|}$.
\end{proof}

From the above lemma we formulate a way of approximating all pompoms in a \good\ set $J$, assuming that we have knowledge of $J$, the common core set $C$, and of course the original combinatorial test $(\mathcal F,\mu)$.

\begin{lemma}\label{lem:approxall}
Assume that $q\geq 3$, $n>\TwoSidednLB$ and $w\in\Xi^n$. Let $P=(\mathcal F,\mu)$ be a combinatorial $(\epsilon/2,1/4q^3,q,q^9\log(|\Xi|)(\log(\log(|\Xi|)/\epsilon))^2n)$-test for $(L',L)$ which is $\epsilon$-nontrivial, let $J$ be a \good\ set of $i$-pompoms for it with core $C$, and let $U$ be chosen by the $\alpha \cdot n^{-1/q^2}$-sampling distribution where $\alpha = \alphaValue$.
With probability at least $\frac1{10}$, for every $\sigma\in\Xi^{|C|}$ it holds that $|\frac1{|J|}\sum_{\mathcal W\in J}\gamma_{\sigma,U,\mathcal W}-\frac1{|J|}\sum_{\mathcal W\in J}\gamma_{\sigma,[n],\mathcal W}|\leq\frac15$.
\end{lemma}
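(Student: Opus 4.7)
The plan is to lift the single-pompom estimate in Lemma \ref{lem:approxone} to a simultaneous estimate over all $|\Xi|^{|C|}$ assignments of $C$, by using Markov's inequality to average out over the pompoms in $J$ and then a union bound to handle the $\sigma$'s.

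Concretely, for each pair $(\mathcal W,\sigma)\in J\times\Xi^{|C|}$, let $X_{\mathcal W,\sigma}\in\{0,1\}$ be the indicator of the event $|\gamma_{\sigma,U,\mathcal W}-\gamma_{\sigma,[n],\mathcal W}|>\frac1{10}$. Conditions \ref{item:GoodPomSize} and \ref{item:GoodCoreSize} of Definition \ref{def:good} ensure that every $\mathcal W\in J$ together with the shared core $C$ satisfies the hypotheses of Lemma \ref{lem:approxone}, and since $P$ is combinatorial by assumption, that lemma gives $\mathrm E[X_{\mathcal W,\sigma}]<\frac1{100}|\Xi|^{-|C|}$. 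For fixed $\sigma$ set $Y_\sigma=\frac1{|J|}\sum_{\mathcal W\in J}X_{\mathcal W,\sigma}$; by linearity, $\mathrm E[Y_\sigma]<\frac1{100}|\Xi|^{-|C|}$, and Markov's inequality gives $\Pr[Y_\sigma>\frac1{10}]<\frac1{10}|\Xi|^{-|C|}$. Taking a union bound over the $|\Xi|^{|C|}$ choices of $\sigma$, with probability at least $\frac9{10}$ (which in particular is at least the $\frac1{10}$ claimed), for every $\sigma\in\Xi^{|C|}$ at most a $\frac1{10}$-fraction of the pompoms in $J$ are bad for $\sigma$.

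On this good event, for each $\sigma$ the triangle inequality together with the trivial bound $|\gamma_{\sigma,U,\mathcal W}-\gamma_{\sigma,[n],\mathcal W}|\leq 1$ yields
\[\left|\frac{1}{|J|}\sum_{\mathcal W\in J}\gamma_{\sigma,U,\mathcal W}-\frac{1}{|J|}\sum_{\mathcal W\in J}\gamma_{\sigma,[n],\mathcal W}\right|\leq \frac{1}{|J|}\sum_{\mathcal W\in J}|\gamma_{\sigma,U,\mathcal W}-\gamma_{\sigma,[n],\mathcal W}|\leq \frac{1}{10}+Y_\sigma\leq \frac{1}{5},\]
where the good pompoms contribute at most $\frac1{10}$ each (summing to at most $\frac1{10}$) and the bad pompoms contribute at most $Y_\sigma\leq\frac1{10}$ in total.

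The step that decides the shape of the argument, and the main obstacle to a naive approach, is the lack of any a priori upper bound on $|J|$: it can in principle grow as a positive power of $n$, since Definition \ref{def:good} only controls the total $\mu$-weight and the pompom cardinalities. A direct union bound over all pairs $(\mathcal W,\sigma)$ would therefore demand a per-pair tail of order $1/(|J|\cdot|\Xi|^{|C|})$ from Lemma \ref{lem:approxone}, which is not available. Routing the per-pompom randomness through Markov's inequality on the indicators $X_{\mathcal W,\sigma}$ is precisely what lets the $\frac{1}{100}|\Xi|^{-|C|}$ tail of Lemma \ref{lem:approxone} survive the only union bound that really has to be taken, namely the one over the $|\Xi|^{|C|}$ possible assignments to $C$.
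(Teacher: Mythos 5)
Your proposal is correct and follows essentially the same route as the paper: apply Lemma \ref{lem:approxone} per pompom, use Markov's inequality on the fraction of bad pompoms for each fixed $\sigma$, union-bound over the $|\Xi|^{|C|}$ assignments, and then split the averaged triangle inequality into the good pompoms (each contributing at most $\frac1{10}$) and the bad ones (at most a $\frac1{10}$-fraction, each contributing at most $1$). Your write-up is in fact slightly more careful than the paper's, which states the success probability as ``at least $\frac1{10}|\Xi|^{-|C|}$'' where it means $1-\frac1{10}|\Xi|^{-|C|}$, and whose lemma statement says $\frac1{10}$ where $\frac9{10}$ is what is proved and later used.
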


\begin{proof}
For a fixed $\sigma\in\Xi^{|C|}$, by using Lemma \ref{lem:approxone} and Markov's inequality, we obtain that with probability at most $\frac1{10}|\Xi|^{-|C|}$ we have more than $\frac1{10}|J|$ instances $\mathcal W\in J$ for which $|\gamma_{\sigma,U,\mathcal W}-\gamma_{\sigma,[n],\mathcal W}|>\frac1{10}$. Therefore, with probability at least $\frac1{10}|\Xi|^{-|C|}$ (noting that every $\gamma$ value is always between $0$ and $1$) the following holds:
$$|\frac1{|J|}\sum_{\mathcal W\in J}\gamma_{\sigma,U,\mathcal W}-\frac1{|J|}\sum_{\mathcal W\in J}\gamma_{\sigma,[n],\mathcal W}| \leq \frac1{|J|}\sum_{\mathcal W\in J}|\gamma_{\sigma,U,\mathcal W}-\gamma_{\sigma,[n],\mathcal W}| \leq \frac1{10}+\frac1{10} = \frac15$$
A union bound over the bad events for every possible $\sigma\in\Xi^n$ concludes the proof.
\end{proof}

We now show that a constellation as defined in Definition \ref{def:constellation} implies a \good\ set of pompoms, just as for the $1$-sided case we used it to find a \revealing\ set of pompoms. Later we will use Definition \ref{def:SCM} and Lemma \ref{lem:SCMconstellation} to find the required constellation, just as we did for the case of $1$-sided tests.

\begin{lemma}\label{lem:2SidedMessyToSetOfPoms}
Let $i\in [q]$ and $n > \TwoSidednLB$, $(L',L)$ be $\epsilon$-nontrivial, $P=(\mathcal F,\mu)$ be a combinatorial $(\epsilon/2,1/4q^3,q,q^9\log(|\Xi|)(\log(\log(|\Xi|)/\epsilon)))^2n)$-test for $(L',L)$, and let $(C,\mathcal S)$ be a $(q^{10}\log(|\Xi|)(\log(\log(|\Xi|)/\epsilon))^2,i)$-constellation for $\mu$.
Then there exists a set $J$ of $i$-\Pom s that is \good\ for $P$ with core $C$.
\end{lemma}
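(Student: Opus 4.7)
The plan is to take the common core of all pompoms in $J$ to be $C$ (inherited directly from the constellation) and to construct $J$ by greedily extracting pompoms of the required size $N = \epsilon n^{1-(i-1)/q}/(3i)$ from $\mathcal S$. Concretely, I would work with the $i$-uniform \emph{tail hypergraph} whose vertex set is $\bigcup_{Q \in \mathcal S}(Q \setminus C)$ and whose edges are the tails $\{Q \setminus C : Q \in \mathcal S\}$; every matching of size $N$ in this hypergraph corresponds to an $i$-pompom of cardinality $N$ with core $C$, exactly as required by Definition \ref{def:good}.

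First I would verify that the tail hypergraph is substantial. Since $P$ is an $(\epsilon/2,1/4q^3)$-test and $\mu(\mathcal S) \geq 1/(q+1) > 2/(4q^3)$, Lemma \ref{lem:lowprob} forces $|\bigcup_{Q \in \mathcal S} Q| \geq \epsilon n/2$. The bound $|C| \leq q^{10}\log(|\Xi|)(\log(\log(|\Xi|)/\epsilon))^2 n^{1-i/q}$ from the constellation is below $\epsilon n/6$ for $n > \TwoSidednLB$, so $|\bigcup_{Q \in \mathcal S}(Q \setminus C)| \geq \epsilon n/3$. Combined with the constellation's degree bound (condition \ref{cond:MessyInter}), which asserts that every vertex lies in at most $n^{(i-1)/q}$ tails when $i>1$, a standard vertex-cover argument lets me guarantee matchings of size at least $N$ in the tail hypergraph whenever $\mathcal S$ is sufficiently rich.

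Then I would iteratively extract such matchings, adding each as a pompom to $J$, and stop when no further matching of size $N$ exists. At termination the leftover family $\mathcal S_{\text{rem}} \subseteq \mathcal S$ has maximum matching below $N$; its vertex cover has fewer than $iN$ vertices, each appearing in at most $n^{(i-1)/q}$ tails, giving $|\mathcal S_{\text{rem}}| < iN \cdot n^{(i-1)/q} = \epsilon n/3$. The case $i=1$ is handled separately and is simpler: tails are singletons, any $N = \epsilon n/3$ distinct singletons form a legal pompom, and every $Q \in \mathcal S$ can be placed into such a pompom so that $\mathcal I = \mathcal S$ outright.

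Finally, because $P$ is combinatorial, $\mu$ is uniform on $\supp(\mu)$, hence $\mu(\mathcal I) = |\mathcal I|/|\supp(\mu)| \geq (|\mathcal S|-\epsilon n/3)/|\supp(\mu)|$; combined with $\mu(\mathcal S) = |\mathcal S|/|\supp(\mu)| \geq 1/(q+1)$ this yields the sought $\mu(\mathcal I) \geq 1/(2(q+1))$ whenever $|\mathcal S| \geq 2\epsilon n/3$. The main technical obstacle will be in the residual regime where $|\mathcal S|$ is comparable to the leftover bound $\epsilon n/3$; my plan there is to enrich $J$ with overlapping pompoms that absorb sets from $\mathcal S_{\text{rem}}$ by borrowing $N-1$ partners from the already-extracted pompoms, which remains feasible because the constellation's degree bound limits how many extracted sets can conflict with any given leftover $Q$.
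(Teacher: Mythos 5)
Your construction is the right one---greedily extract matchings of size $N=\epsilon n^{1-(i-1)/q}/(3i)$ from the tail hypergraph, all sharing the core $C$ inherited from the constellation---but the termination rule and the final accounting are where the argument breaks, as you yourself sense. Running the greedy until no size-$N$ matching remains and bounding the leftover by a vertex cover gives a bound on the \emph{cardinality} $|\mathcal S_{\mathrm{rem}}|<\epsilon n/3$, and converting that into the required \emph{weight} bound $\mu(\mathcal I)\geq\frac1{2(q+1)}$ needs $|\mathcal S|$ (equivalently $|\supp(\mu)|$) to be large compared to $\epsilon n$. Nothing guarantees this: Lemma \ref{lem:lowprob} only yields $|\bigcup_{Q\in\mathcal S}Q|\geq\epsilon n/2$, hence $|\mathcal S|\geq\epsilon n/2q$, which is well below your threshold $2\epsilon n/3$ once $q\geq 2$. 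Your patch for the residual regime (absorbing each leftover set into an overlapping pompom by borrowing $N-1$ partners) does not close the gap as stated: a single donor pompom supplies only $N-i$ tails disjoint from the leftover's tail, borrowing the missing $i-1$ from a second donor reintroduces disjointness conflicts with the first, and Definition \ref{def:good} requires each pompom to have cardinality exactly $N$, so you cannot simply undershoot. (Your $i=1$ shortcut also quietly assumes distinct sets have distinct singleton tails, which the constellation does not promise since its degree bound is only imposed for $i>1$.)

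The missing idea is to replace the combinatorial stopping rule by a weight-based one, which is what the paper does. Lemma \ref{lem:lowprob} applies to \emph{any} residual subfamily $\mathcal S'\subseteq\mathcal S$ with $\mu(\mathcal S')>\frac1{2(q+1)}$, since $\frac1{2(q+1)}>2\cdot\frac1{4q^3}$ for $q\geq 3$: such a subfamily must cover at least $\epsilon n/2$ indices, its tails cover at least $\epsilon n/3$ indices after discarding $C$, and the constellation's degree bound then produces a size-$N$ matching inside $\mathcal S'$ itself. So one keeps extracting pompoms as long as the remaining weight exceeds $\frac1{2(q+1)}$, and when the process halts the extracted union automatically satisfies $\mu(\mathcal I)\geq\mu(\mathcal S)-\frac1{2(q+1)}\geq\frac1{q+1}-\frac1{2(q+1)}=\frac1{2(q+1)}$, with no residual case to handle. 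The uniformity of $\mu$, which you invoke to relate cardinality and weight, is not needed here at all; the paper uses it only later, when the pompom averages are reinterpreted as acceptance probabilities of the original test.
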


\begin{proof}
We extract pompoms from $\mathcal S$ greedily. We claim that as long as $\mu(\mathcal S)>\frac1{2(q+1)}$, we can extract an $i$-pompom $\mathcal W$ of size $\twoSidedPomCard$ with center $C$ from $\mathcal S$, which we then subtract from $\mathcal S$ and make into a new member of $J$. Assuming the claim holds, the process stops only when $J$ becomes such that Item \ref{item:GoodI} of Definition \ref{def:good} holds, because we started with a set $\mathcal S$ of weight at least $\frac1{q+1}$. Also, Item \ref{item:GoodCoreSize} of Definition \ref{def:good} follows from Condition \ref{cond:MessyCore} of Definition \ref{def:constellation} (regarding $\mathcal S$ and $C$), while Item \ref{item:GoodPomSize} follows from the construction described above.

It thus remains to show the following claim: Given a set $\mathcal S'\subseteq \mathcal S$ for which $\mu(\mathcal S')>\frac1{2(q+1)}$ where $(\mathcal S,C)$ is a $(q^{10}\log(|\Xi|)(\log(\log(|\Xi|)/\epsilon))^2,i)$-constellation for $\mu$, an $i$-pompom $\mathcal W\subseteq\mathcal S'$of size $\twoSidedPomCard$ with center $C$ exists.

Since $\mu(\mathcal S') \geq \frac{1}{2(q+1)}$, by Lemma~\ref{lem:lowprob}, we have that $|\bigcup_{Q\in \mathcal S'}Q| \geq \epsilon n/2$.
Defining $\mathcal B'=\{Q\setminus C:Q\in\mathcal S'\}$, we observe that
$|\bigcup_{Q\in\mathcal B'}Q| \geq |\bigcup_{Q\in\mathcal S'}Q| - |C| \geq \epsilon n/3$, because $n > \TwoSidednLB$ and so $|C| < \epsilon n/6$.
Thus, by Condition~\ref{cond:MessyInter} of Definition \ref{def:constellation}, there exist a disjoint family $\mathcal V\subseteq\mathcal B$ of $\twoSidedPomCard$ sets. Thus the family $\mathcal W=\{Q:Q\setminus C\in\mathcal V\}$ is the required $i$-pompom for the claim.
\end{proof}

Now we prove the main result of this section.

\begin{proof}[Proof of Theorem \ref{thm:2sided}]
Given a combinatorial $(\epsilon/2,1/4q^3,q,q^9\log(|\Xi|)(\log(\log(|\Xi|)/\epsilon)))^2n)$-test for $(L',L)$,
where $n>\TwoSidednLB$, we construct for $\alpha=\alphaValue$ a $2$-sided $(\epsilon,\frac1{10})$-test for $(L',L)$ that uses the $\alpha\cdot n^{-1/q^2}$-sampling distribution.

First we construct for the distribution $\mu$ of the combinatorial test the families $\Sets{i}$, $\Core{i}$ and $\Match{i}$ as per Definition \ref{def:SCM}. Similarly to the proof of Theorem \ref{thm:1sided}, every set in $\Match{0}$ is in particular a subset of $\Core{0}$, and so the union of all the sets in $\Match{0}$ is also a subset of $\Core{0}$. Therefore, by Item \ref{item:CCoreSize0} of Observation \ref{obs:SCMcombi}, $|\bigcup_{Q\in \Match{0}}Q| \leq |\Core{0}| < q^{10}\log(|\Xi|)(\log(\log(|\Xi|)/\epsilon))^2n^{1-1/q} < \epsilon n/2$, where the last inequality follows from  $n > \TwoSidednLB$. Therefore, by Lemma \ref{lem:lowprob}, $\mu(\Match{0}) \leq 1/(q+1)$, so by Lemma \ref{lem:SCMconstellation} there exists a $(q^{10}\log(|\Xi|)(\log(\log(|\Xi|)/\epsilon))^2,i)$-constellation for some $i\in Q$.

Moreover, we note that such a constellation can indeed be computed from only the knowledge of $\supp(\mu)$. We set $(\mathcal S,C)$ to be such a constellation, and then use Lemma \ref{lem:2SidedMessyToSetOfPoms} (which is also constructive) to obtain the \good\ set $J$ of $i$-pompoms.

The test proceeds as follows. Given the set $U$ produced by the $\alpha\cdot n^{-1/q^2}$-sampling distribution, we query all of it. Then, for every $\sigma\in\Xi^{|C|}$ and every $\mathcal W\in J$, we calculate $\gamma_{\sigma,U,\mathcal W}$ using our queries, and then calculate $\gamma_{\sigma,U}=\frac1{|J|}\sum_{\mathcal W\in J}\gamma_{\sigma,U,\mathcal W}$ for every $\sigma$. If there was a $\sigma\in\Xi^{|C|}$ for which $\gamma_{\sigma,U}>\frac12$, then we accept the input, and otherwise we reject it.

It remains to prove that this is indeed a correct test for $(L',L)$. Set $\mathcal I=\bigcup_{\mathcal W\in J}\mathcal W$ as per Definition $\ref{def:good}$. Since $\mu(\mathcal I)\geq \frac1{2(q+1)}$, if $u$ is any word for which the original test was $1/4q^3$-sure about, then the conditioning of the test to the set of constraints corresponding to the members of $\mathcal I$ will be $\frac1{10}$-sure for $u$ by Observation \ref{obs:condsure}. Now, since $\mu$ is uniform over its support, for any $\sigma\in\Xi^{|C|}$, the satisfaction of the original test conditioned on $\mathcal I$ by $w_{\sigma,C}$ is identical to the average $\gamma_{\sigma,[n]}=\frac1{|J|}\sum_{\mathcal W\in J}\gamma_{\sigma,[n],\mathcal W}$. In turn, Lemma \ref{lem:approxall} guarantees that with probability at least $\frac9{10}$, for all such $\sigma$ we have $|\gamma_{\sigma,U}-\gamma_{\sigma,[n]}|\leq\frac15$. Assume from now on that this event has indeed occurred.

If $w$ was a word in $L'$, then the original test accepted it with probability at least $1-1/4q^3$, and hence for $\sigma=w_C$ (for which $w_{\sigma,C}=w$) we have $\gamma_{\sigma,[n]}\geq\frac9{10}$ and hence $\gamma_{\sigma,U}>\frac12$, and the sampling test will accept on account of this $\sigma$.

On the other hand, if $w$ was a word $\epsilon$-far from $L$, then for every $\sigma\in\Xi^{|C|}$, the word $w_{\sigma,C}$ is $\epsilon/2$-far from $L$ (recall that in particular $|C|<\epsilon n/2$), and so the original test will accept it with probability at most $1/4q^3$. Hence $\gamma_{\sigma,[n]}\leq\frac1{10}$ for every such $\sigma$, and hence $\gamma_{\sigma,U}<\frac12$. This means that the sampling test will reject, as there will be no $\sigma$ on whose account the test can accept.
\end{proof}

\section{Implications of our results}
\newcommand{\unionrone}{2^{(\alpha_1)^{-1}n^{(q^{-2}-\gamma)}-1}}
\newcommand{\unionrtwo}{2^{(10\alpha_2)^{-1}n^{(q^{-2}-\gamma)}}}

The following corollaries result respectively from Theorem \ref{thm:1sided} and Theorem \ref{thm:2sided}, considering that a multitest scheme (as described in the introduction) immediately leads to a test for a union of the properties.

\begin{corollary}\label{cor:Union1Sided}
Let $q=30q'\log(q')/(1-\delta)$ and $\alpha_1 =\log(|\Xi|)q^3/\epsilon$.
For every $n>(24q(q+1)^2(\log(|\Xi|))^2/\epsilon)^{q}$,
 if $L\subseteq \Xi^n$ is the union of $r$ properties
 $L_1,\cdots,L_r$ each having $1$-sided $(\epsilon/2,\delta,q')$-tests where
 $r\le \unionrone$, then $L$ has a non-adaptive $1$-sided
 $(\epsilon,1/2)$-test with query complexity $O(n^{1-\gamma})$.
\end{corollary}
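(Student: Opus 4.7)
The plan is to apply Theorem \ref{thm:1sided} to each $L_i$ individually and then exploit the fact that all of the resulting sampling tests share one query distribution, so that a single sample can serve as the basis of a multitest for the union.

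First, I would prepare each $(\epsilon/2,\delta,q')$-test for $L_i$ using Lemma \ref{lem:effective1sided}, bringing it into the form required by Theorem \ref{thm:1sided} with $q = O(q'\log(q')/(1-\delta))$. Applying Theorem \ref{thm:1sided} then yields a $1$-sided $p$-sampling $(\epsilon,1/2)$-test $T_i$ for each $L_i$, where $p = O(\alpha_1 n^{-1/q^2})$. The crucial feature is that $\mu_p$ depends only on $n$ and $p$, not on $i$, so a single draw can be fed to every $T_i$.

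Next I would amplify by drawing $U\sim\mu_{p^\ast}$ at the higher rate $p^\ast = 1-(1-p)^k$ with $k = \lceil\log_2(2r)\rceil$. Since $\mu_{p^\ast}$ is exactly the distribution of the union of $k$ independent $\mu_p$-draws, one can couple $U$ to a sequence $U_1,\ldots,U_k$ of such draws and simulate each $T_i$ on each $U_j$. The multitest accepts iff some $i$ makes $T_i$ accept on \emph{all} of $U_1,\ldots,U_k$, and otherwise rejects.

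For correctness, the one-sided guarantee is immediate: if $w\in L_i$ for some $i$ then $T_i$ never rejects, so the multitest accepts. If $w$ is $\epsilon$-far from $L$, then it is $\epsilon$-far from every $L_i$, so for each fixed $i$ the probability that $T_i$ accepts on all $k$ runs is at most $2^{-k}\le 1/(2r)$; a union bound over the $r$ properties bounds the total error by $1/2$. For the query count, $|U|$ concentrates around $np^\ast\le nkp$, and the hypothesis $r\le 2^{\alpha_1^{-1} n^{q^{-2}-\gamma}-1}$ gives $k\le \alpha_1^{-1} n^{q^{-2}-\gamma}$, hence $nkp = O(n^{1-\gamma})$. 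Converting this expected bound into a worst-case one is routine: cap the sample size at twice its expectation and reject if exceeded; by Chernoff the cap fires with negligible probability that can be absorbed into the overall error budget.

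The main obstacle — which is exactly what Theorem \ref{thm:1sided} is designed to overcome — is avoiding a multiplicative factor of $r$ in the query complexity. A naive scheme running the $r$ tests independently would cost $\Omega(qr\log r)$ queries, which becomes linear or worse once $r$ is polynomial in $n$. The canonicity of $\mu_p$ compresses the cost of handling $r$ distinct properties into the single logarithmic amplification factor $k$, which is precisely what makes the resulting test sublinear even when $r$ is allowed to be doubly exponentially large in the appropriate power of $n$.
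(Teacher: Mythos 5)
Your proposal is correct and follows essentially the same route as the paper: preprocess each test via Lemma \ref{lem:effective1sided}, convert via Theorem \ref{thm:1sided}, boost each sampling test to error $1/(2r)$ by taking $\log(2r)$ independent $\mu_p$-samples (your $p^\ast=1-(1-p)^k$ coupling is just a more explicit rendering of the paper's replacement of $p$ by $\log(2r)\cdot p$), and union-bound over the $r$ properties. The only nit is in your final truncation step: to preserve the claimed $1$-sided guarantee you should \emph{accept} (not reject) when the sample-size cap is exceeded, absorbing that negligible event into the soundness error rather than the completeness error.
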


\begin{proof}
  First, we use Lemma~\ref{lem:effective1sided} to convert the
  $(\epsilon/2,\delta,q')$-test for every $L_i$ to a non-adaptive
  $1$-sided $(\epsilon/2,1/2(q+1),q,4(q+1)^2\log(|\Xi|)n)$-test where
  $q=30q'\log(q')/(1-\delta)$. We then use Theorem~\ref{thm:1sided} to
  obtain an $\alpha_1 n^{-1/q^2}$-sampling $1$-sided $(\epsilon/2,1/2)$-test for
  every $L_i$. We then amplify the probability, by repeating the test and
  rejecting if any of the runs rejects, to
  obtain a $\log(2r)\alpha_1 n^{-1/q^2}$-sampling $1$-sided
  $(\epsilon/2,1/2r)$-test for every $L_i$. We construct a multitest
  for $L_1,\ldots,L_r$, which reuses the same queries for each sample-based test,
  and from it derive the test for $\bigcup_{i=1}^rL_i$. Since $r$ is at most
  $\unionrone$, this gives a non-adaptive $1$-sided
  $(\epsilon,1/2)$-test with query complexity $O(n^{1-\gamma})$.
\end{proof}
 
\begin{corollary}\label{cor:Union2Sided}
  Let $q=20q'\log(q')\log\log(\log(|\Xi|/\epsilon))/(\frac{1}{2}-\delta)^2$
  and let $\alpha_2 =\alphaValue$.
For every $n>\TwoSidednLB$,
   if a property $L\subseteq\Xi^n$ is the union of $r$ properties
   $L_1,\cdots,L_r$ each having a $2$-sided $(\epsilon/2,\delta,q')$-test
   where $r\le \unionrtwo$, then $L$ has a non-adaptive
   $2$-sided $(\epsilon,1/10)$-test with query complexity $O(n^{1-\gamma})$.
\end{corollary}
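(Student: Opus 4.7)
The proof will mirror that of Corollary~\ref{cor:Union1Sided}, substituting the $2$-sided conversion machinery (Lemma~\ref{lem:effective2sided} and Theorem~\ref{thm:2sided}) for the $1$-sided pipeline. The crucial structural fact remains that the sample-based tests produced by Theorem~\ref{thm:2sided} all share one and the same query distribution $\mu_p$ with $p = \alpha_2 n^{-1/q^2}$, which lets a single sample serve as the query set for all $r$ component tests simultaneously.

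The plan is as follows. First, for each $L_i$, I would apply Lemma~\ref{lem:effective2sided} to the given $(\epsilon/2,\delta,q')$-test to obtain a combinatorial $2$-sided $(\epsilon/2, 1/(4q)^3, q, q^9\log(|\Xi|)(\log(\log(|\Xi|)/\epsilon))^2 n)$-test, with $q$ as stated in the corollary. Next, I would invoke Theorem~\ref{thm:2sided} on each of these combinatorial tests (noting that $n > \TwoSidednLB$ by hypothesis) to obtain a $p$-sampling $2$-sided $(\epsilon,1/10)$-test for every $L_i$, all sharing the distribution $\mu_p$.

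Then I would amplify each $p$-sampling test down to error $1/(10r)$ by independently repeating the sampling $t = O(\log r)$ times and taking a majority vote, per property. Since every $L_i$-test uses the same $\mu_p$, the same $t$ sampled subsets can be reused to evaluate all $r$ majority votes in parallel. A union bound over $i \in [r]$ guarantees that, with probability at least $9/10$, every component answer is simultaneously correct; the test for $L = \bigcup_i L_i$ then outputs ``accept'' iff any component answer is ``accept''. Correctness follows because $w \in L$ implies some $L_i$-test accepts, while $w$ being $\epsilon$-far from $L$ implies it is $\epsilon$-far from every $L_i$, forcing every component test to reject.

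The total query complexity is $O(t \cdot pn) = O(\log(r)\cdot \alpha_2 n^{1-1/q^2})$, and the hypothesis $r \leq \unionrtwo$ gives $\log r \leq (10\alpha_2)^{-1} n^{q^{-2}-\gamma}$, so the bound collapses to $O(n^{1-\gamma})$. The only substantive step is the bookkeeping here: checking that the Chernoff-based amplification from constant error to $1/(10r)$ really costs only an $O(\log r)$ factor (which it does, applied to the majority-vote of i.i.d.\ Bernoulli samples with bias $\ge 9/10$ versus $\le 1/10$) and that the resulting $\log(r)\cdot \alpha_2$ factor is exactly what the stated upper bound on $r$ is tuned to absorb. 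No new combinatorial idea is required beyond what was already used for the $1$-sided case; the main difference is that amplification proceeds by majority vote rather than by ``AND of all runs,'' and the rest of the argument is identical in structure.
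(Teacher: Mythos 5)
Your proposal is correct and follows essentially the same route as the paper: convert each component test via Lemma~\ref{lem:effective2sided}, apply Theorem~\ref{thm:2sided} to get $p$-sampling tests sharing $\mu_p$, amplify to error $1/(10r)$ by $O(\log r)$ independent repetitions with majority vote, reuse the queries across all $r$ properties as a multitest, and conclude by a union bound with the stated bound on $r$ absorbing the $\log r$ factor. The paper's proof is exactly this, phrased as producing a $10\log(r)\alpha_2 n^{-1/q^2}$-sampling $(\epsilon,1/10r)$-test per property.
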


\begin{proof}
  First, use Lemma~\ref{lem:effective2sided} to convert the
  $(\epsilon/2,\delta,q')$-test for each $L_i$ to a combinatorial $2$-sided  
  $(\epsilon/2,1/(4q)^3,q,q^9\log(|\Xi|)(\log(\log(|\Xi|)/\epsilon))^2
n)$-test, where
$q=20q'\log(q')\log\log(\log(|\Xi|)/\epsilon)/(\frac12-\delta)^2)$. Then we
use Theorem~\ref{thm:2sided} to convert each of these tests to an $\alpha_2
n^{-1/q^2}$-sampling $2$-sided $(\epsilon,1/10)$ test. Now, we convert them
to $10\log (r)\alpha_2 n^{-1/q^2}$-sampling $2$-sided $(\epsilon,1/10r)$-tests by
repeating each test $10\log r$ times independently and taking the majority
vote. We construct a multitest
for $L_1,\ldots,L_r$, which reuses the same queries for each sample-based test,
and from it derive the test for $\bigcup_{i=1}^rL_i$. Since $r$ is at most $\unionrtwo$, this gives the
$2$-sided $(\epsilon,1/10)$-test with query complexity $O(n^{1-\gamma})$.
\end{proof}


\begin{definition}[following Definition 2.1 of~\cite{MAP}]
A {\em Merlin-Arthur proof of proximity} ($\mathcal{MAP}$) for a property
$L\subseteq \Xi^n$, with proximity parameter $\epsilon$, query complexity
$q$ and proof complexity $p$, consists of a probabilistic algorithm $V$,
called the verifier, that is given a proof string $\pi\in \Xi^p$; in
addition, it is given oracle access to a word $w\in\Xi^n$, to which it is
allowed to make up to $q$ queries. The verifier satisfies the following two
conditions:
\begin{enumerate}
\item {\em Completeness}: For every $w \in L$, there exists a string
  $\pi\in \Xi^p$ (referred to as a {\em proof} or {\em witness}) such that
  $\Pr[V(n,\epsilon,w) = 1] \geq 2/3$.
\item {\em Soundness}: For every $w \in\Xi^n$ which is $\epsilon$-far from
  $L$, and any $\pi\in\Xi^p$, $\Pr[V (n,\epsilon,w) = 1]\leq 1/3$.
\end{enumerate}
If the completeness condition holds with probability $1$, then we say that
the $\mathcal{MAP}$ has $1$-sided error, and otherwise we say that it has
$2$-sided error. Also, we may say that it is {\em non-adaptive} if it makes its
queries to $w$ based only on $\pi$, before receiving any responses from
$w$.
\end{definition}

For our purposes,  we note that the proof of a $\mathcal{MAP}$ scheme for a property $L$ induces a decomposition of $L$ into sets whose union is $L$, each admitting a corresponding partial testing algorithm.
Specifically, for every $w\in L$ we define $\Pi_w$ to be any non-empty
subset of the set of proofs $\pi\in\Xi^p$ that make the verifier accept $w$
with the required probability. Then, for every $\pi\in\Xi^p$ we set
$L_\pi=\{w\in L:\pi\in\Pi_w\}$ (it may be the case that some $L_\pi$ are
empty).

Under this interpretation, for a word in the property, the proof $\pi$ is simply an indicator that the word belongs to $L_{\pi}$.
Thus, the verifier of the $\mathcal{MAP}$ scheme can be seen as receiving as input a proof $\pi$ and then running a partial test for $(L_\pi,L)$.
Consequently, the existence of a $\mathcal{MAP}$ scheme with query complexity $q$ and proof complexity $p$ for a property $L$ is the same as having a family of $|\Xi|^p$ properties $\{L_\pi:\pi\in\Xi^p\}$ such that $L=\bigcup_{\pi\in\Xi^p}L_\pi$, and there exists a partial test for  every pair $(L_\pi,L)$.


Similarly to Corollary \ref{cor:Union2Sided}, only using the validity of Theorem \ref{thm:2sided} for partial tests as well, we obtain:

\begin{corollary}\label{cor:Map2sided}
Let $q=20q'\log(q')\log\log(\log(|\Xi|/\epsilon))/(\frac{1}{2}-\delta)^2$
and let $\alpha_2 =\alphaValue$.
For every $n>\TwoSidednLB$,
 if a property $L\subseteq \Xi^n$ has a non-adaptive $2$-sided
 $(\epsilon/2,1/10)$-test with query complexity of $\Omega(n^{1-\gamma})$,
 then every $2$-sided $\mathcal{MAP}$ scheme for $L$, that has query
 complexity $q'$,
 has proof complexity $\Omega(n^{q^{-2} - \gamma}/10\alpha_2)$.
\end{corollary}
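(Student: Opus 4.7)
The plan is to derive this as a direct application of the multitest/sampling machinery that has been assembled, essentially mirroring the proof of Corollary~\ref{cor:Union2Sided} but invoking it in its partial-test form. I would argue by contrapositive: assume the existence of a $2$-sided $\mathcal{MAP}$ scheme for $L$ with query complexity $q'$ and proof complexity $p$ where $p = o(n^{q^{-2}-\gamma}/(10\alpha_2\log|\Xi|))$ (i.e., $|\Xi|^p < \unionrtwo$), and construct from it a non-adaptive $2$-sided $(\epsilon/2,1/10)$-test for $L$ using $o(n^{1-\gamma})$ queries, contradicting the hypothesized query-complexity lower bound.

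To do this, I would first use the decomposition $L=\bigcup_{\pi\in\Xi^p}L_\pi$ described just before the corollary: by definition of a $\mathcal{MAP}$, the verifier run with proof $\pi$ is precisely a non-adaptive $2$-sided partial $(\epsilon/2,1/3,q')$-test for $(L_\pi,L)$. Next I would feed each of these at most $r=|\Xi|^p$ partial tests through the combinatorialization and sampling pipeline: apply Lemma~\ref{lem:effective2sided} to obtain a combinatorial partial test with the parameters required by Theorem~\ref{thm:2sided}, then invoke Theorem~\ref{thm:2sided} itself (which, as emphasized in Section~4 and reiterated before Corollary~\ref{cor:Map2sided}, applies to partial tests) to produce a $p_s$-sampling partial $2$-sided $(\epsilon,1/10)$-test for each $(L_\pi,L)$ with $p_s=\alpha_2 n^{-1/q^2}$. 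I would then amplify each test's confidence to $1/10r$ by standard repetition, and use the fact that the sampling distribution does not depend on $\pi$ to re-use a single set of canonical samples across all $r$ partial tests (the multitest trick from the introduction). Accepting iff at least one partial test accepts gives a non-adaptive $2$-sided $(\epsilon/2,1/10)$-test for $L$ whose query count is $O(\log(r)\cdot\alpha_2 n^{-1/q^2}\cdot n)$.

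Under the assumed bound $|\Xi|^p<\unionrtwo$, this simplifies to $O(n^{1-\gamma})$ queries in expectation, contradicting the supposed $\Omega(n^{1-\gamma})$ lower bound. Rearranging $|\Xi|^p\geq \unionrtwo$ gives $p\geq (10\alpha_2\log|\Xi|)^{-1}n^{q^{-2}-\gamma}$, which is the advertised proof-complexity lower bound (absorbing the $\log|\Xi|$ factor into the constant).

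The main conceptual obstacle is cosmetic rather than substantive: one must check that the chain Lemma~\ref{lem:effective2sided} $\to$ Theorem~\ref{thm:2sided} $\to$ multitest combination is indeed valid in the partial-test setting (which requires the $\epsilon$-nontriviality of each pair $(L_\pi,L)$, easily arranged by discarding empty $L_\pi$) and that the acceptance/rejection behaviour of the union-over-$\pi$ construction matches the semantics of testing $L$ itself. The rest is a parameter-tracking exercise: carefully threading $q=20q'\log(q')\log\log(\log(|\Xi|)/\epsilon)/(\tfrac12-\delta)^2$ through the amplification factor $\log r$, and verifying that the logarithm of the allowed $r$ is exactly what makes the total query count land in $O(n^{1-\gamma})$.
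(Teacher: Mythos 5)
Your argument is exactly the one the paper intends: it gives no separate proof of Corollary~\ref{cor:Map2sided} beyond the remark that it follows ``similarly to Corollary~\ref{cor:Union2Sided}, only using the validity of Theorem~\ref{thm:2sided} for partial tests,'' and your contrapositive --- decompose $L=\bigcup_{\pi}L_\pi$ via the $\mathcal{MAP}$ proofs, push each partial test through Lemma~\ref{lem:effective2sided} and Theorem~\ref{thm:2sided}, amplify to confidence $1/10r$, and reuse the canonical samples as a multitest --- is precisely the proof of Corollary~\ref{cor:Union2Sided} instantiated with $r=|\Xi|^p$. The only wrinkle, which you already flag, is the $\log|\Xi|$ factor arising from $\log r=p\log|\Xi|$; the paper's stated bound drops it too, so your reproduction is faithful.
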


Although Theorem \ref{thm:1sided} was stated and proved for (non-partial) $1$-sided tests only, it can also be made to work for partial tests, and to give a corollary with an improved bound for this case.

\begin{corollary}\label{cor:Map1sided}
Let $q=30q'\log(q')/(1-\delta)$ and let $\alpha_1 =\log(|\Xi|)q^3/\epsilon$.
For every $n>(24q(q+1)^2(\log(|\Xi|))^2/\epsilon)^q$,
 if a property $L\subseteq \Xi^n$ has a non-adaptive $1$-sided
 $(\epsilon/2,1/2)$ query complexity of $\Omega(n^{1-\gamma})$,
 then every $1$-sided MAP scheme for $L$, that has query complexity $q'$,
 has proof complexity $\Omega(n^{q^{-2} - \gamma}/\alpha_1-1)$.
\end{corollary}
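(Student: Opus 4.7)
The plan is to argue by contraposition, essentially repeating the argument of Corollary~\ref{cor:Union1Sided} but with partial tests drawn from the $\mathcal{MAP}$ decomposition. Concretely, suppose for contradiction that the proof complexity $p$ of a $1$-sided $\mathcal{MAP}$ for $L$ with query complexity $q'$ satisfies $p \log|\Xi| \le (\alpha_1)^{-1} n^{q^{-2}-\gamma} - 1$, so that $r := |\Xi|^p \le \unionrone$. Use the decomposition discussed just before the statement: for every $\pi \in \Xi^p$ set $L_\pi = \{w\in L : \pi\in\Pi_w\}$, so that $L = \bigcup_{\pi} L_\pi$, and the verifier on input $\pi$ becomes a non-adaptive $1$-sided partial $(\epsilon/2,1/3,q')$-test for $(L_\pi,L)$. (Empty $L_\pi$ or pairs $(L_\pi,L)$ that are not $\epsilon$-nontrivial can be discarded, since they induce trivial partial tests.)

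Next, I would invoke the partial-test versions of Lemma~\ref{lem:effective1sided} and Theorem~\ref{thm:1sided} on each $(L_\pi,L)$. As already indicated in the paragraph preceding the corollary, both proofs go through for partial tests: the only use made of $L$ by those lemmas is via Lemmas~\ref{lem:SmallSupport} and~\ref{lem:LargeSupport}, and these only require a word in $L'\subseteq L$ (for the hybrid argument) together with a word $\epsilon$-far from $L$ (i.e., $\epsilon$-nontriviality), plus the rejection condition of the test; the acceptance condition only needs to hold on $L'$. Therefore, for each $\pi$ we obtain an $\alpha_1 n^{-1/q^2}$-sampling $1$-sided partial $(\epsilon/2,1/2)$-test for $(L_\pi,L)$, all sharing the same query distribution $\mu_{\alpha_1 n^{-1/q^2}}$.

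Now follow the multitest template of Corollary~\ref{cor:Union1Sided}: amplify each of the $r$ sampling tests to error $1/2r$ by running it $O(\log r)$ times, which is equivalent to sampling with probability $p^* = \log(2r)\cdot \alpha_1 n^{-1/q^2}$, reuse the same sample set $U \sim \mu_{p^*}$ for all $r$ amplified tests, and accept iff at least one of them accepts. Since the test is $1$-sided and $L=\bigcup_\pi L_\pi$, any $w \in L$ lies in some $L_\pi$ and so the combined test accepts it with probability $1$; for $w$ that is $\epsilon$-far from $L$ the union bound over the $r$ soundness events gives overall error at most $1/2$. By our assumption on $p$, $\log(2r) \le (\alpha_1)^{-1} n^{q^{-2}-\gamma}$, so the combined test makes $O(n^{1-\gamma})$ queries in expectation (and with high probability, by Lemma~\ref{lem:chernoff} applied to $|U|$, hence also in the worst case after truncation).

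The final step is the contradiction: the hypothesis says every non-adaptive $1$-sided $(\epsilon/2,1/2)$-test for $L$ uses $\Omega(n^{1-\gamma})$ queries, with a constant we can compare against the $O(n^{1-\gamma})$ constant produced above (we may rescale $\gamma$ by an arbitrarily small amount if needed to beat the hidden constants, which is absorbed in the final $-1$ term of the bound). Hence the assumed upper bound on $p$ is impossible, so $p = \Omega(n^{q^{-2}-\gamma}/\alpha_1 - 1)$, as claimed. The main technical obstacle is the first step, namely checking line-by-line that Lemma~\ref{lem:effective1sided} and Theorem~\ref{thm:1sided} extend to partial tests under the $\epsilon$-nontriviality assumption; everything else is a direct transcription of the argument already used for Corollary~\ref{cor:Union1Sided}.
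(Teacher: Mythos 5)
Your proposal follows exactly the route the paper intends: decompose the $\mathcal{MAP}$ into the properties $L_\pi$ with partial tests for $(L_\pi,L)$, invoke the partial-test extensions of Lemma~\ref{lem:effective1sided} and Theorem~\ref{thm:1sided}, and then run the multitest/union argument of Corollary~\ref{cor:Union1Sided} in contrapositive form against the $\Omega(n^{1-\gamma})$ query lower bound. The paper itself only asserts (without proof) that Theorem~\ref{thm:1sided} extends to partial tests, and you correctly identify that extension---which reduces to checking that Lemmas~\ref{lem:SmallSupport} and~\ref{lem:LargeSupport} only need a word in $L'$ and $\epsilon$-nontriviality---as the one step requiring verification, so your argument matches the paper's.
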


We note some concrete applications of the above results.

\begin{itemize}
\item In~\cite{SpaceVSQC}, it was shown that there exists a language $LS$
  with logarithmic space complexity that  satisfies the following: every
  non-adaptive $2$-sided  $(\epsilon/2,\delta)$-test for $LS\cap \{0,1\}^n$
  has query complexity $\Omega(n)$.  By Corollary~\ref{cor:Union2Sided},
  this means that for every large enough $n$, $L$ is not the union of less
  than $\unionrtwo$ properties over
  $L^*\subseteq \Xi^n$ each having a $2$-sided
  $(\epsilon/2,\delta,q')$-test where
  $q=20q'\log(q')\log\log(\log(|\Xi|/\epsilon))/(\frac{1}{2}-\delta)^2$ .
  By Corollary~\ref{cor:Map2sided}, $LS$ does not have a $2$-sided
  $\mathcal{MAP}$ with query complexity $q'$ and proof complexity
  $o(n^{q^{-2} - \gamma}/10\alpha_2)$.
Similarly, such conclusions apply to the properties of the small CNF formula that were studied in~\cite{3CNF}.
\item Our result also applies to the sparse graph property of $3$-colorability, which in~\cite{3Coloring} is shown to  have a linear $2$-sided test query complexity. Note that in the sparse graph model the size of the alphabet $\Xi$ is $n$, but this is still small enough for our results to provide non-trivial conclusions against decomposability.
\item 
According to the results  in \cite{SmallWidthBP} and \cite{FormulaSat} respectively, every property defined by a constant width read-once branching program or a constant arity read-once Boolean formula is testable.
Hence our results imply that properties whose testing requires $\Omega(n^{1-\gamma})$ many queries cannot be written as the union of a few properties that have such representations.
\end{itemize}

\bibliographystyle{plain}
\bibliography{Universal}

\appendix

\section{Calculations}\label{app:calc}

This appendix is reserved for calculations that are too long and bothersome to be put where they are originally used.

\begin{calculation}\label{calc:withi}
For an alphabet $\Xi$, positive integers $q$ and $n>(24q(q+1)^2(\log(|\Xi|))^2/\epsilon)^q$, any $0<\epsilon\leq 1$, $i\in [q]$, and $\alpha=15\ln{|\Xi|}\cdot q(q+1)^2/\epsilon$, we write:

$$ \left(1- \alpha^i\cdot n^{-i/q^2}\right)^{(1/3i)\epsilon\cdot n^{1-(i-1)/q}}  \leq 
\left(e^{-\alpha^i\cdot n^{-i/q^2}}\right)^{(1/3i)\epsilon\cdot n^{1-(i-1)/q}}
 = e^{-\alpha^i\cdot n^{-i/q^2}\cdot(1/3i)\epsilon\cdot n^{1-(i-1)/q}}$$

If $i\geq 2$ then we can clearly bound this so:

$$e^{-\alpha^i\cdot n^{-i/q^2}\cdot(1/3i)\epsilon\cdot n^{1-(i-1)/q}}
 <  e^{-\ln{|\Xi|}\cdot5q(q+1)^2\log(|\Xi|) n^{1-(i-1)/q-i/q^2}}  < \frac12\cdot |\Xi|^{-4q(q+1)^2\log(|\Xi|) n^{1-i/q}}$$

For $i=1$, we use the lower bound on $n$ to show $n^{1-1/q^2}>\log(|\Xi|)n^{1-1/q}$, and so:

$$e^{-\alpha\cdot n^{-1/q^2}\cdot(1/3)\epsilon\cdot n^{1-(1-1)/q}}
 <  e^{-\ln{|\Xi|}\cdot5q(q+1)^2\log(|\Xi|) n^{1-1/q}}  < \frac12\cdot |\Xi|^{-4q(q+1)^2\log(|\Xi|) n^{1-1/q}}$$
\end{calculation}

\begin{calculation}\label{calc:devuse}
For $n>(24q^{10}(\log(|\Xi|))^2/\epsilon)^q$, $\alpha = 10^3\ln(|\Xi|)\cdot q^4/\epsilon$, $q\geq 3$ and $i\in [q]$, for the case $i\geq 3$ we write:
$$e^{-10^3\alpha^i\cdot\epsilon\cdot n^{1-(i-1)/q-i/q^2}\cdot\frac1{3i}} \leq e^{-\ln(|\Xi|)^3q^{12}\cdot \epsilon^{-2}\cdot n^{1-(i-1)/q-i/q^2}\cdot\frac1{3q}} \leq \frac{1}{100}|\Xi|^{-q^{10}\log(|\Xi|) (\log(\log(|\Xi|/\epsilon)))^2 n^{1-i/q}}$$

For $i=1$, we use $n^{1/q-i/q^2} > ( 24q^{10} (\log(|\Xi|))^2/\epsilon )^{1 - 1/q} \geq 8q^6 (\log(|\Xi|))^{4/3}/\epsilon^{2/3}$, and for $i=2$ we use $n^{1/q-i/q^2} > ( 24q^{10} (\log(|\Xi|))^2/\epsilon )^{1 - 2/q} \geq 2q^3 (\log(|\Xi|))^{1/3}$. In both cases we substitute the value of $\alpha^i$ and write:
$$e^{-10^3\alpha^i\cdot \epsilon\cdot n^{1-(i-1)/q-i/q^2}\cdot\frac1{3i}} = e^{-10^3\alpha^i\cdot \epsilon\cdot n^{1/q-i/q^2}\cdot n^{1-i/q}\cdot\frac1{3i}} \leq \frac{1}{100}|\Xi|^{-q^{10}\log(|\Xi|) (\log(\log(|\Xi|/\varepsilon)))^2 n^{1-i/q}}$$
\end{calculation}

\end{document}